\newcommand{\wt}{{\mathrm{wt}}}
\newcommand{\tr}{{\mathrm{Tr}}}
\newcommand{\gf}{{\mathrm{GF}}}
\newcommand{\C}{{\mathcal{C}}}
\newcommand{\bc}{{\mathbf{c}}}
\newtheorem{theorem}{Theorem}
\newtheorem{lemma}[theorem]{Lemma}
\newtheorem{corollary}[theorem]{Corollary}
\newtheorem{example}{Example}
\newtheorem{remark}{Remark}
\begin{document}

\title{Linear codes with a few weights from inhomogeneous quadratic functions \thanks{The research of C. Tang was supported by China West Normal University under Grant 14E013 and Grant CXTD2014-4. The research of K. Feng was supported by NSFC No. 11471178, 11571007 and the Tsinghua National Lab. for Information Science and Technology.}}

\author{
Chunming Tang\thanks{C. Tang is with School of Mathematics and Information, China West Normal University, Sichuan Nanchong, 637002, China. e-mail: tangchunmingmath@163.com},
Can Xiang\thanks{C. Xiang is with the College of Mathematics and Information Science, Guangzhou University, Guangzhou 510006, China. Email: cxiangcxiang@hotmail.com} and
Keqin Feng\thanks{K. Feng is with the Department of Mathematical Sciences, Tsinghua University, Beijing, 100084, China.
Email: kfeng@math.tsinghua.edu.cn.}
}

\date{\today}
\maketitle

\begin{abstract}
Linear codes with few weights have been an interesting subject of study for many years, as these codes have applications in secrete sharing, authentication codes, association schemes, and strongly regular graphs. In this paper, linear codes with a few weights are constructed from inhomogeneous quadratic functions over the finite field $\gf(p)$, where $p$ is an odd prime. They include some earlier linear codes as
special cases. The weight distributions of these linear codes are also determined.
\end{abstract}

\begin{keywords}
Linear codes, weight distribution, quadratic form, cyclotomic fields, secret sharing schemes.
\end{keywords}

\section{Introduction}\label{sec-intro}

Throughout this paper, let $p$ be an odd prime and let $q=p^m$ for some positive integer $m$.
An $[n,\, k,\,d]$ code $\C$ over $\gf(p)$ is a $k$-dimensional subspace of $\gf(p)^n$ with minimum
(Hamming) distance $d$.  Let $A_i$ denote the number of codewords with Hamming weight $i$ in a code
$\C$ of length $n$. The {\em weight enumerator} of $\C$ is defined by
$
1+A_1z+A_2z^2+ \cdots + A_nz^n.
$
The {\em weight distribution} $(1,A_1,\ldots,A_n)$ is an important research topic in coding theory,
as it contains crucial information as to estimate the error correcting capability and the probability of
error detection and correction with respect to some algorithms.
A code $\C$ is said to be a $t$-weight code  if the number of nonzero
$A_i$ in the sequence $(A_1, A_2, \cdots, A_n)$ is equal to $t$.

Let $\tr$ denote the trace function from $\gf(q)$ onto $\gf(p)$ throughout
this paper. Let $F(x)\in \gf(q)[x]$,
$D=\{x\in \gf(q)^*: \tr(F(x))=0\}=\{d_1, \,d_2, \,\ldots, \,d_n\} \subseteq \gf(q)$ and $n=\#D$.
We define a linear code of
length $n$ over $\gf(p)$ by
\begin{eqnarray}\label{eqn-maincode}
\C_{D}=\{(\tr(xd_1), \tr(xd_2), \ldots, \tr(xd_n)): x \in \gf(q)\},
\end{eqnarray}
and call $D$ the \emph{defining set} of this code $\C_{D}$. By definition, the
dimension of the code $\C_D$ is at most $m$.

This construction is generic in the sense that many classes of known codes
could be produced by properly selecting the defining set $D \subseteq \gf(q)$. If
the defining set $D$ is well chosen, some optimal linear codes with few weights can be obtained. Based on this construction, many linear codes have been constructed since Ding et al. published their paper in 2014 \cite{DingDing1}. We refer interested readers to \cite{Ding15,DingDing2,Ding20152,Mesnager2015,ZLFH2015,WDX2015,TLQZH2015,TQH2015,QTH2015} and the references therein.
Particularly,
Ding et al. \cite{DingDing2} presented the
weight distribution of
$\mathcal{C}_D$ for the case $F(x)=
x^2$ and proposed an open problem
on how to determine the weight distribution of  $\mathcal{C}_{D}$ for general planar functions
$F(x)$. Subsequently, Zhou et al. \cite{ZLFH2015} and Tang et al. \cite{TLQZH2015} solved this open problem and gave the weight distribution of
$\mathcal{C}_D$ from homogeneous quadratic Bent functions and weakly regular Bent functions with some homogeneous conditions, respectively.

In this paper, we consider linear codes
with few weights from inhomogeneous quadratic functions $\tr(F(x))=f(x)-\tr(\alpha x)$ and determine the weight distributions of these linear codes , where
$\alpha \in \gf(q)$, $f(x)$ is a homogeneous quadratic function from $\gf(q)$ onto $\gf(p)$ and defined by

\begin{eqnarray}\label{def-f}
f(x)=\sum_{i=0}^{m-1}\tr(a_i x^{p^i+1})~~~(a_i\in \gf(q)).
\end{eqnarray}
They include some earlier linear codes as special cases \cite{XTF2015,LWL2015}.

The rest of this paper is organized as follows. Section \ref{sec-pr} introduces some basic notations and results of group characters, Gauss
sums, exponential sums and cyclotomic fields which will be needed in subsequent sections. Section \ref{sec-main} constructs linear codes with a few weights from inhomogeneous quadratic functions and settles the
weight distributions of these linear codes. Section \ref{sec-concluding} summarizes this paper.

\section {Preliminaries}\label{sec-pr}

In this section, we state some notations and basic facts on group characters, Gauss
sums, exponential sums and cyclotomic fields. Moreover, we give and prove some results on exponential sums about homogeneous quadratic functions $f(x)$ defined in (\ref{def-f}). These results will be used in the rest of the paper.

\subsection{Some notations fixed throughout this paper}

For convenience, we adopt the following notations unless otherwise stated in this paper.
\begin{itemize}
\item $p^*=(-1)^{(p-1)/2}p$.
  \item $\zeta_p=e^{\frac{2\pi \sqrt{-1}}{p}}$ be the primitive $p$-th
root of unity.

\item $\textup{SQ}$ and $\textup{N\textup{SQ}}$ denote the set of all  squares and nonsquares in $\gf(p)^{*}$, respectively.
\item $\eta$ and $\bar{\eta}$ are the quadratic characters of $\gf(q)^{*}$ and  $\gf(p)^{*}$, repsectively. We extend these quadratic characters
by letting $\eta(0)=0$ and $\bar{\eta}(0)=0$.
\end{itemize}

\subsection{Group characters and Gauss sums}

An {\em additive character} of $\gf(q)$ is a nonzero function $\chi$
from $\gf(q)$ to the set of nonzero complex numbers such that
$\chi(x+y)=\chi(x) \chi(y)$ for any pair $(x, y) \in \gf(q)^2$.
For each $b\in \gf(q)$, the function
\begin{eqnarray}\label{dfn-add}
\chi_b(c)=\zeta_p^{\tr(bc)} \ \ \mbox{ for all }
c\in\gf(q)
\end{eqnarray}
defines an additive character of $\gf(q)$. When $b=0$,
$\chi_0(c)=1 \mbox{ for all } c\in\gf(q),
$
and is called the {\em trivial additive character} of
$\gf(q)$. The character $\chi_1$ in (\ref{dfn-add}) is called the
{\em canonical additive character} of $\gf(q)$.
It is well known that every additive character of $\gf(q)$ can be
written as $\chi_b(x)=\chi_1(bx)$ \cite[Theorem 5.7]{LN}.

The Gauss sum $G(\eta, \chi_1)$ over $\gf(q)$ is defined by
\begin{eqnarray}
G(\eta, \chi_1)=\sum_{c \in \gf(q)^*} \eta(c) \chi_1(c) = \sum_{c \in \gf(q)} \eta(c) \chi_1(c)
\end{eqnarray}
and
the Gauss sum $G(\bar{\eta}, \bar{\chi}_1)$ over $\gf(p)$ is defined by
\begin{eqnarray}
G(\bar{\eta}, \bar{\chi}_1)=\sum_{c \in \gf(p)^*} \bar{\eta}(c) \bar{\chi}_1(c)
= \sum_{c \in \gf(p)} \bar{\eta}(c) \bar{\chi}_1(c),
\end{eqnarray}
where $\bar{\chi}_1$ is the canonical additive characters of $\gf(p)$.

The following three lemmas are proved in \cite[Theorem 5.15 and Theorem 5.33]{LN} and \cite[lemma 7]{DingDing2}, respectively.

\begin{lemma}\label{lem-32A1}
With the symbols and notations above, we have
$$
G(\eta, \chi_1)=(-1)^{m-1} \sqrt{-1}^{(\frac{p-1}{2})^2 m} \sqrt{q}
$$
and
$$
G(\bar{\eta}, \bar{\chi}_1)= \sqrt{-1}^{(\frac{p-1}{2})^2 } \sqrt{p}=\sqrt{p*}.
$$
\end{lemma}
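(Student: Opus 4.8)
The plan is to reduce the statement over $\gf(q)$ to the one over $\gf(p)$ and to handle the latter by the classical theory of quadratic Gauss sums. First I would record the elementary half. Writing $g=G(\bar\eta,\bar\chi_1)=\sum_{c\in\gf(p)^*}\bar\eta(c)\zeta_p^{c}$ and using that $\bar\eta$ is real valued, one checks $\overline{g}=\bar\eta(-1)g$ (substitute $c\mapsto -c$) and, by the orthogonality relation $\sum_{d\in\gf(p)^*}\zeta_p^{d t}=-1$ for $t\in\gf(p)^*$, that $g\overline{g}=p$. Hence $g^{2}=\bar\eta(-1)\,p=(-1)^{(p-1)/2}p=p^{*}$, so $g=\pm\sqrt{p^{*}}$. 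Note that $\sqrt{-1}^{((p-1)/2)^2}\sqrt{p}$ is exactly $\sqrt{p}$ when $p\equiv 1\pmod 4$ and $\sqrt{-1}\,\sqrt{p}=\sqrt{-p}$ when $p\equiv 3\pmod 4$, so the assertion for $\gf(p)$ amounts precisely to showing that the sign is $+$.

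Determining this sign is the only genuinely hard point, and it is Gauss's classical sign theorem for the quadratic Gauss sum $\sum_{n=0}^{p-1}\zeta_p^{n^{2}}$, which equals $g$ since $\sum_{n=0}^{p-1}\zeta_p^{n^{2}}=1+2\sum_{s\in\textup{SQ}}\zeta_p^{s}=g$. I would simply invoke one of the standard proofs: Dirichlet's analytic argument via the Poisson summation formula and the functional equation of the Jacobi theta function, or Schur's algebraic argument evaluating the determinant and the eigenvalue multiplicities of the matrix $(\zeta_p^{jk})_{0\le j,k<p}$. Either route gives $g=\sqrt{p}$ for $p\equiv 1\pmod 4$ and $g=\sqrt{-1}\,\sqrt{p}$ for $p\equiv 3\pmod 4$, that is, $g=\sqrt{-1}^{((p-1)/2)^2}\sqrt{p}=\sqrt{p^{*}}$, as stated.

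Next I would descend to $\gf(q)$ using the Davenport--Hasse lifting relation: for a multiplicative character $\psi$ of $\gf(p)$ with lift $\psi'=\psi\circ N_{\gf(q)/\gf(p)}$ along the norm, one has $G(\psi',\chi_1)=(-1)^{m-1}G(\psi,\bar\chi_1)^{m}$. The quadratic character $\eta$ of $\gf(q)^{*}$ is exactly the lift of $\bar\eta$, since for $x\in\gf(q)^{*}$,
\[
\bar\eta\bigl(N_{\gf(q)/\gf(p)}(x)\bigr)=N_{\gf(q)/\gf(p)}(x)^{(p-1)/2}=x^{\frac{q-1}{p-1}\cdot\frac{p-1}{2}}=x^{(q-1)/2}=\eta(x).
\]
Substituting the value of $g$ from the previous steps then yields
\[
G(\eta,\chi_1)=(-1)^{m-1}\Bigl(\sqrt{-1}^{((p-1)/2)^2}\sqrt{p}\Bigr)^{m}=(-1)^{m-1}\,\sqrt{-1}^{((p-1)/2)^2 m}\,\sqrt{q},
\]
which is the claimed formula. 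Alternatively the $\gf(q)$ identity can be obtained directly from Stickelberger's congruence or by induction on $m$ via a Hasse--Davenport product identity, but the route above is the shortest once the sign over $\gf(p)$ is in hand.

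Thus the main obstacle is entirely concentrated in the sign determination of the prime-field Gauss sum; everything else --- the computation $|g|^{2}=p$, the identification $\eta=\bar\eta\circ N_{\gf(q)/\gf(p)}$, and the application of Davenport--Hasse --- is formal. This is why in the text the lemma is simply cited from \cite{LN}: a self-contained treatment would have to reproduce a full proof of Gauss's sign theorem, which is longer than the rest of the argument combined.
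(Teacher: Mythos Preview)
Your sketch is correct and follows the standard route: the absolute-value computation $|g|^{2}=p$, Gauss's sign theorem over $\gf(p)$, and the Davenport--Hasse lifting relation to pass from $\gf(p)$ to $\gf(q)$ via the identification $\eta=\bar\eta\circ N_{\gf(q)/\gf(p)}$. The paper itself supplies no proof at all: it merely cites Theorems~5.15 and~5.33 of \cite{LN}, where precisely this argument is carried out. You anticipated this in your closing paragraph, and there is nothing to add.
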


\begin{lemma}\label{lem-32A2}
Let $\chi$ be a nontrivial additive character of $\gf(q)$ with $q$ odd, and let
$f(x)=a_2x^2+a_1x+a_0 \in \gf(q)[x]$ with $a_2 \ne 0$. Then
$$
\sum_{c \in \gf(q)} \chi(f(c)) = \chi(a_0-a_1^2(4a_2)^{-1}) \eta(a_2) G(\eta, \chi).
$$
\end{lemma}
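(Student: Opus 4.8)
The plan is to reduce the sum to a standard quadratic Gauss sum by completing the square, and then to evaluate the resulting sum $\sum_{u \in \gf(q)} \chi(a_2 u^2)$ by counting square roots. Since $q$ is odd, $a_2 \neq 0$, and $2 \in \gf(q)^*$, I would first write
\[
f(c) = a_2\bigl(c + a_1(2a_2)^{-1}\bigr)^2 + \bigl(a_0 - a_1^2(4a_2)^{-1}\bigr)
\]
and substitute $u = c + a_1(2a_2)^{-1}$. As $c$ ranges over $\gf(q)$ so does $u$, and since $\chi$ is additive,
\[
\sum_{c \in \gf(q)} \chi(f(c)) = \chi\bigl(a_0 - a_1^2(4a_2)^{-1}\bigr) \sum_{u \in \gf(q)} \chi(a_2 u^2).
\]

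Next I would evaluate $S := \sum_{u \in \gf(q)} \chi(a_2 u^2)$. The key point is that, with the convention $\eta(0)=0$, for every $v \in \gf(q)$ the equation $u^2 = v$ has exactly $1 + \eta(v)$ solutions $u \in \gf(q)$. Grouping the terms of $S$ by the value $v = u^2$ gives
\[
S = \sum_{v \in \gf(q)} \bigl(1 + \eta(v)\bigr) \chi(a_2 v) = \sum_{v \in \gf(q)} \chi(a_2 v) + \sum_{v \in \gf(q)} \eta(v) \chi(a_2 v).
\]
Because $\chi$ is nontrivial and $v \mapsto a_2 v$ is a bijection of $\gf(q)$, the first sum equals $\sum_{w \in \gf(q)} \chi(w) = 0$. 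For the second, the substitution $w = a_2 v$ together with $\eta(a_2^{-1}) = \eta(a_2)$ (both are $\pm 1$) gives $\sum_{v \in \gf(q)} \eta(v)\chi(a_2 v) = \eta(a_2)\sum_{w \in \gf(q)} \eta(w)\chi(w) = \eta(a_2) G(\eta,\chi)$. Hence $S = \eta(a_2) G(\eta,\chi)$, and combining with the previous display yields the claimed identity.

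This argument is essentially routine and I do not expect a genuine obstacle. The only places that need care are the square-root counting identity $\#\{u \in \gf(q): u^2 = v\} = 1 + \eta(v)$, the bookkeeping of the constants $(2a_2)^{-1}$ and $(4a_2)^{-1}$ produced by completing the square, and the elementary fact $\eta(a_2^{-1}) = \eta(a_2)$. The hypothesis that $q$ is odd is exactly what makes completing the square legitimate, and the hypothesis that $\chi$ is nontrivial is what forces $\sum_{w \in \gf(q)} \chi(w) = 0$.
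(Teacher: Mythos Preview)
Your argument is correct. The paper does not actually prove this lemma; it simply cites \cite[Theorem~5.33]{LN}, and your completing-the-square reduction followed by the square-root count $\#\{u:u^2=v\}=1+\eta(v)$ is exactly the standard proof found there.
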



\subsection{Cyclotomic fields}
In this subsection, we state some basic facts on cyclotomic fields. These results will be used in the rest of this paper.

Let $\mathbb{Z}$ be the rational integer ring and $Q$ be the rational field.
Some results on
cyclotomic field $Q(\zeta_p)$ \cite{IR1990} are given in the following lemma.
\begin{lemma}\label{cyclo} We have the following basic facts.
\begin{enumerate}
 \item
The ring of integers
in $K=Q(\zeta_p)$ is $\mathcal{O}_K=
\mathbb{Z}(\zeta_p)$ and $\{
\zeta_p^{~i}: 1\leq i\leq p-1\}$
is an integral basis of $\mathcal{O}_K$.
\item
The field extension $K/Q$
is Galois of degree $p-1$ and the Galois
group $Gal(K/Q)=\{\sigma_a:
a\in (\mathbb{Z}/p\mathbb{Z})^{*}\}$, where
the automorphism $\sigma_a$ of $K$ is defined by
$\sigma_a(\zeta_p)=\zeta_p^a$.
\item
The field $K$ has a unique
quadratic subfield $L=Q(\sqrt{p^*})$. For $1\leq a\leq p-1$,
$\sigma_a(\sqrt{p^*}) =
\bar{\eta}(a)\sqrt{p^*}$. Therefore, the Galois group
$Gal(L/Q)$ is $\{1,\sigma_{\gamma}\}$, where
$\gamma$ is any quadratic nonresidue in
$\gf(p)$.
\end{enumerate}
\end{lemma}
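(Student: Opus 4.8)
The plan is to deduce the three statements from three classical facts: the irreducibility of the $p$-th cyclotomic polynomial, the ramification of $p$ in $K=Q(\zeta_p)$, and the evaluation of the quadratic Gauss sum over $\gf(p)$ recorded in Lemma~\ref{lem-32A1}. For part~(1), I would first recall that the minimal polynomial of $\zeta_p$ over $Q$ is $\Phi_p(x)=(x^p-1)/(x-1)=1+x+\cdots+x^{p-1}$, which is irreducible over $Q$ by Eisenstein's criterion applied at $p$ to $\Phi_p(x+1)$. Hence $[K:Q]=p-1$ and $\{1,\zeta_p,\ldots,\zeta_p^{\,p-2}\}$ is a $Q$-basis of $K$; since $1=-(\zeta_p+\zeta_p^{\,2}+\cdots+\zeta_p^{\,p-1})$, the set $\{\zeta_p^{\,i}:1\le i\le p-1\}$ generates the same $\Z$-module $\Z(\zeta_p)$ and is therefore a $\Z$-basis of it. It then remains to show $\mathcal{O}_K=\Z(\zeta_p)$: from $\Phi_p(1)=\prod_{a=1}^{p-1}(1-\zeta_p^{\,a})=p$ and $\Phi_p'(\zeta_p)(1-\zeta_p)=-p\,\zeta_p^{\,p-1}$ one computes $\mathrm{disc}(\Z(\zeta_p))=\pm p^{\,p-2}$, so $[\mathcal{O}_K:\Z(\zeta_p)]$ is a power of $p$; writing $\lambda=1-\zeta_p$, a standard computation gives $p=\lambda^{\,p-1}\cdot(\text{unit})$ with $N_{K/Q}(\lambda)=p$, so $(\lambda)$ is prime and $(p)=(\lambda)^{\,p-1}$ is totally ramified with residue degree $1$; since $p\nmid p-1$ this ramification is tame, whence $v_p(\mathrm{disc}(\mathcal{O}_K))=p-2=v_p(\mathrm{disc}(\Z(\zeta_p)))$, forcing $p\nmid[\mathcal{O}_K:\Z(\zeta_p)]$ and hence $\mathcal{O}_K=\Z(\zeta_p)$.

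For part~(2): $K$ contains all roots $1,\zeta_p,\ldots,\zeta_p^{\,p-1}$ of the separable polynomial $x^p-1$, so it is the splitting field of $x^p-1$ over $Q$, hence $K/Q$ is Galois, of degree $p-1$ by part~(1). Any $\sigma\in Gal(K/Q)$ carries $\zeta_p$ to another root of $\Phi_p$, so $\sigma(\zeta_p)=\zeta_p^{\,a}$ for a unique $a\in(\Z/p\Z)^{*}$ and $\sigma=\sigma_a$ is determined by $a$; conversely, irreducibility of $\Phi_p$ guarantees each $\sigma_a$ occurs. From $\sigma_a\sigma_b(\zeta_p)=\zeta_p^{\,ab}=\sigma_{ab}(\zeta_p)$, the map $a\mapsto\sigma_a$ is a group isomorphism, so $Gal(K/Q)\cong(\Z/p\Z)^{*}$.

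For part~(3): $(\Z/p\Z)^{*}$ is cyclic of even order $p-1$, hence has a unique subgroup of index $2$, namely $\textup{SQ}$, and by the Galois correspondence $K$ has a unique quadratic subfield $L$, the fixed field of $\{\sigma_a:a\in\textup{SQ}\}$. To identify $L$, set $g=G(\bar{\eta},\bar{\chi}_1)=\sum_{c\in\gf(p)^{*}}\bar{\eta}(c)\,\zeta_p^{\,c}\in\Z(\zeta_p)\subseteq K$; substituting $c\mapsto a^{-1}c$ in the sum and using $\bar{\eta}(a^{-1})=\bar{\eta}(a)$ gives $\sigma_a(g)=\bar{\eta}(a)\,g$ for all $a\in(\Z/p\Z)^{*}$, so the subgroup of $Gal(K/Q)$ fixing $g$ is exactly $\{\sigma_a:a\in\textup{SQ}\}$ and therefore $Q(g)=L$. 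By Lemma~\ref{lem-32A1}, $g^2=p^*$, so $L=Q(\sqrt{p^*})$ with $\sigma_a(\sqrt{p^*})=\bar{\eta}(a)\sqrt{p^*}$; in particular $\sigma_a$ fixes $L$ exactly when $a\in\textup{SQ}$, so $Gal(L/Q)=\{1,\sigma_\gamma|_L\}$ for any nonsquare $\gamma\in\gf(p)$.

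The only genuinely nontrivial step is the identification $\mathcal{O}_K=\Z(\zeta_p)$ in part~(1), which rests on the discriminant computation together with the local analysis of the totally (and tamely) ramified prime $p$; parts~(2) and~(3) are then routine Galois theory combined with the classical Gauss-sum identity of Lemma~\ref{lem-32A1}. All of this is standard and may be found in \cite{IR1990}.
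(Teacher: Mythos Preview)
Your proof is correct and complete. The paper itself does not prove this lemma at all: it merely states the three facts and cites \cite{IR1990}, so there is no argument in the paper to compare against; your sketch supplies considerably more detail than the paper does (and concludes with the same citation), and every step---Eisenstein for $\Phi_p$, the discriminant/tame-ramification argument for $\mathcal{O}_K=\Z(\zeta_p)$, the Galois correspondence, and the identification of the quadratic subfield via the Gauss sum and Lemma~\ref{lem-32A1}---is standard and sound.
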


From Lemma \ref{cyclo}, the conclusion of the following lemma is straightforward and we omit their proofs.
\begin{lemma}\label{lem-A2}With the symbols and notations above, we have the following.

(\uppercase\expandafter{\romannumeral1})
$
\sum_{y \in \gf(p)^*}\sigma_y((p^*)^{-\frac{r}{2}})
=
\left\{ \begin{array}{ll}
0     & \mbox{ if $r$ is odd,} \\
(p^*)^{-\frac{r}{2}}(p-1)     & \mbox{ if $r$ is even.}
\end{array}
\right.
$

(\uppercase\expandafter{\romannumeral2}) For any $z\in\gf(p)^*$, then
\begin{eqnarray*}
\sum_{y \in \gf(p)^*}\sigma_y((p^*)^{-\frac{r}{2}}~\zeta_p^{~z})
&=&\left\{ \begin{array}{ll}
\bar{\eta}(z)(p^*)^{-\frac{r-1}{2}}    & \mbox{ if $r$ is odd,}\\
-(p^*)^{-\frac{r}{2}}     & \mbox{ if $r$ is even.}
\end{array}
\right.
\end{eqnarray*}

\end{lemma}

\subsection{Exponential sums}

In this subsection, we give and prove some results on exponential sums about homogeneous quadratic functions $f(x)$ defined in (\ref{def-f}). Before doing this, we need state some basic facts on linear algebra.

The field $\gf(q)$ is a vector space over $\gf(p)$ with dimension $m$. We fix a basis $v_0, v_1, . . . , v_{m-1}$ of $\gf(q)$
over $\gf(p)$. Then each $x\in \gf(q)$ can be uniquely expressed as
$$
x=x_0v_0+x_1v_1+\cdots+x_{m-1}v_{m-1}~~~(x_i\in\gf(p)).
$$
Thus we have the following $\gf(p)$-linear isomorphism $\gf(q)\xrightarrow{\thicksim}\gf(p)^m$:
$$
x=x_0v_0+x_1v_1+\cdots+x_{m-1}v_{m-1} \mapsto X=(x_0,x_1,\cdots,x_{m-1}).
$$
With this isomorphism, a function $f:\gf(q) \rightarrow \gf(p)$ induces a function $F:\gf(p)^m \rightarrow \gf(p)$ where for
all $X=(x_0,x_1,\cdots,x_{m-1})\in \gf(p)^m$, $F(X) = f (x)$ where $x=x_0v_0+x_1v_1+\cdots+x_{m-1}v_{m-1}$. In this way, the function $f$ defined in (\ref{def-f}) induces a quadratic form
\begin{eqnarray}\label{eqn-H}
F(X) &=& \sum_{i=0}^{m-1}\tr(a_i(\sum_{j=0}^{m-1}x_jv_j)^{p^i+1}) \nonumber \\
&=& \sum_{i=0}^{m-1}\tr(a_i(\sum_{j=0}^{m-1}x_jv_j^{p^i})(\sum_{k=0}^{m-1}x_kv_k))  \nonumber \\
&=&\sum_{j=0}^{m-1}\sum_{k=0}^{m-1}(\sum_{i=0}^{m-1}\tr(a_iv_j^{p^i}v_k))x_jx_k  \nonumber \\
&=& XHX^{T},
\end{eqnarray}
where $X^{T}$ is the transposition of $X$, $H=(h_{j,k})$,
$$
h_{j,k}=\frac{1}{2}\sum_{i=0}^{m-1}(\tr(a_i(v_j^{p^i}v_k+v_jv_k^{p^i}))) ~~\rm{~for~} 0\leq j,k\leq m-1,
$$
and the rank of $H$ is called the rank of the function $f$ defined in (\ref{def-f}). We denote the rank of $f$ by $r_f$. Particularly, $r_f=m$ if and only if $f$ is Bent function.

Since $H$ defined in (\ref{eqn-H}) is a $m \times m$ symmetric matrix over $\gf(p)$ and $r_f = \textmd{rank}~H$, there exists $M\in \textmd{GL}_m(\gf(p))$ such that $H'=MHM^{T}$ is a diagonal matrix and $H'= diag(\lambda_1,\cdots, \lambda_{r_f}, 0, \cdots, 0)$ where $\lambda_i\in \gf(p)^*$($1\leq i\leq r_f$). Let $\Delta=\lambda_1,\cdots,\lambda_{r_f}$. Then the value of $\bar{\eta}(\Delta)$ is an invariant of $H$ under the action of $H\mapsto MHM^{T}$ where $M\in \textmd{GL}_m(\gf(p))$. We call $\bar{\eta}(\Delta)$ the sign of the quadratic function $f$ of (\ref{def-f}) and is defined by $\varepsilon_f$.

It is clear that the value of $r_f$  is closely related to the value of $\#Z_f$, where the set
$$
Z_f~=\{x\in \gf(q):f(x+y)=f(x)+f(y),\forall y\in \gf(q)\}.
$$
It is well known that $\#Z_f=p^{m-r_f}$. Note that from Equation (\ref{def-f}) we have
\begin{eqnarray}\label{eqn-flf}
f(x+y)=f(x)+f(y)+2\tr(L_f(x)y)=f(x)+f(y)+2\tr(xL_f(y)),
\end{eqnarray}
where $L_f$ is a linear polynominal over $\gf(q)$ defined by
$$
L_f(x)=\frac{1}{2}\sum_{i=0}^{m-1}(a_i+a_{m-i}^{p^i})x^{p_i}.
$$

From now on we define $\textup{Im}(L_f)=\{L_f(x):x\in \gf(q)\}$ and $\textup{Ker}(L_f)=\{x\in \gf(q): L_f(x)=0\}.$ If $b\in \textup{Im}(L_f)$, we denote $x_b\in \gf(q)$ with satisfying $L_f(x_b)=-\frac{b}{2}$.

From Equation (\ref{eqn-flf}), we have
$$
\mbox{ker}(L_f)=\{x\in \gf(q): f(x+y)=f(x)+f(y)~~\mbox{for all~~} y\in \gf(q)\}.
$$
Thus $p^{m-r_f}=\# Z_f=\#\textmd{Ker}(L_f)$, that is, $\textmd{rank~} L_f=r_f$. It is obvious that $0\leq r_f\leq m$.

\section{Linear codes from inhomogeneous quadratic functions}\label{sec-main}

We construct linear codes over $\gf(p)$ by using inhomogeneous quadratic functions  and determine their parameters in this section.

In this paper, the defining set $D$ of the code $\C_D$ of (\ref{eqn-maincode})  is given by
\begin{eqnarray}\label{eqn-defsetD}
D=\{x \in \gf(q)^*: f(x)-\tr(\alpha x)=0\},
\end{eqnarray}
where $\alpha \in \gf(q)^*$ and $f$ is defined in (\ref{def-f}). It is clear that the function $f(x)-\tr(\alpha x)$ used in the defining set $D$ is a inhomogeneous quadratic functions.

Before giving and proving the main results of this paper, we firstly prove a few more auxiliary results which will be needed in proving the main results.

\subsection{Some auxiliary results}

To prove our main results in this paper, we need the help of a number of lemmas that are described and proved
in this subsection.

\begin{lemma}\label{lem-esf}

Let the symbols and notations be as above. Let $f$ be a homogeneous quadratic function and $b\in \gf(q)$. Then

(\uppercase\expandafter{\romannumeral1}) $\sum_{x\in \gf(q)}\zeta_p^{f(x)}=\varepsilon_f p^m(p^*)^{-\frac{r_f}{2}}$ and

(\uppercase\expandafter{\romannumeral2}) $\sum_{x\in \gf(q)}\zeta_p^{f(x)-\tr(bx)}
=\left\{ \begin{array}{ll}
0     & \mbox{ if $b \not\in \textmd{Im}(L_f)$} \\
\varepsilon_f p^m(p^*)^{-\frac{r_f}{2}}\zeta_p^{-f(x_b)} & \mbox{ if $b \in \textmd{Im}(L_f)$}
\end{array}
\right.$,
where $x_b$ satisfies $L_f(x_b)=-\frac{b}{2}$.
\end{lemma}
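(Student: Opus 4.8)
The plan is to reduce the exponential sum over $\gf(q)$ to a sum over a quadratic form on $\gf(p)^m$, diagonalize that form, and then factor into independent one-variable Gaussian sums. For part (\romannumeral1), I would use the $\gf(p)$-linear isomorphism $\gf(q)\cong\gf(p)^m$ fixed in the preliminaries, under which $f$ becomes the quadratic form $F(X)=XHX^T$ with $H$ symmetric of rank $r_f$. Choosing $M\in\textmd{GL}_m(\gf(p))$ with $MHM^T=\textmd{diag}(\lambda_1,\dots,\lambda_{r_f},0,\dots,0)$ and substituting $Y=X(M^{-1})^T$, the change of variable is a bijection on $\gf(p)^m$, so $\sum_{x\in\gf(q)}\zeta_p^{f(x)}=\sum_{Y\in\gf(p)^m}\zeta_p^{\sum_{i=1}^{r_f}\lambda_iy_i^2}$. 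The $m-r_f$ free coordinates contribute a factor $p^{m-r_f}$, and the remaining sum factors as $\prod_{i=1}^{r_f}\sum_{y_i\in\gf(p)}\zeta_p^{\lambda_iy_i^2}$. Each factor is a standard quadratic Gauss sum: by Lemma~\ref{lem-32A2} (or directly) $\sum_{y\in\gf(p)}\zeta_p^{\lambda y^2}=\bar\eta(\lambda)G(\bar\eta,\bar\chi_1)=\bar\eta(\lambda)\sqrt{p^*}$. Multiplying these out gives $p^{m-r_f}\,\bar\eta(\lambda_1\cdots\lambda_{r_f})\,(\sqrt{p^*})^{r_f}=p^{m-r_f}\,\varepsilon_f\,(p^*)^{r_f/2}$, and rewriting $p^{m-r_f}(p^*)^{r_f/2}=p^m(p^*)^{-r_f/2}$ yields (\romannumeral1).

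For part (\romannumeral2), the key identity is Equation~(\ref{eqn-flf}): for any $x_b$ with $L_f(x_b)=-\frac b2$ one has $f(x+x_b)=f(x)+f(x_b)+2\tr(xL_f(x_b))=f(x)+f(x_b)-\tr(bx)$, hence $f(x)-\tr(bx)=f(x+x_b)-f(x_b)$. Therefore, when $b\in\textmd{Im}(L_f)$, the substitution $x\mapsto x+x_b$ gives $\sum_{x\in\gf(q)}\zeta_p^{f(x)-\tr(bx)}=\zeta_p^{-f(x_b)}\sum_{x\in\gf(q)}\zeta_p^{f(x)}=\varepsilon_f p^m(p^*)^{-r_f/2}\zeta_p^{-f(x_b)}$ by part (\romannumeral1). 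When $b\notin\textmd{Im}(L_f)$, I would argue the sum vanishes: writing $g(x)=f(x)-\tr(bx)$, consider $|S|^2$ where $S=\sum_x\zeta_p^{g(x)}$, so that $|S|^2=\sum_{x,y}\zeta_p^{g(x+y)-g(x)}=\sum_y\zeta_p^{f(y)-2\tr(bx_{?})}\cdots$; more cleanly, for fixed $y$, $\sum_x\zeta_p^{g(x+y)-g(x)}=\sum_x\zeta_p^{2\tr(xL_f(y))+f(y)-\tr(by)}$, and the inner sum over $x$ is $0$ unless $2L_f(y)=0$, i.e. $y\in\textmd{Ker}(L_f)$. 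Since $b\notin\textmd{Im}(L_f)$ and $\textmd{Im}(L_f)$ is the orthogonal complement of $\textmd{Ker}(L_f)$ under $\tr$, there exists $y_0\in\textmd{Ker}(L_f)$ with $\tr(by_0)\ne0$; summing the surviving terms $q\sum_{y\in\textmd{Ker}(L_f)}\zeta_p^{-\tr(by)}$ over the additive subgroup $\textmd{Ker}(L_f)$ gives $0$ because $y\mapsto\tr(by)$ is a nontrivial character on it. Hence $|S|^2=0$, so $S=0$.

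The main obstacle is the vanishing case $b\notin\textmd{Im}(L_f)$: it requires the fact that $\textmd{Im}(L_f)=\textmd{Ker}(L_f)^{\perp}$ with respect to the trace pairing on $\gf(q)$, which follows from $L_f$ being self-adjoint for this pairing (a consequence of the explicit symmetric form of $h_{j,k}$, equivalently of the second equality in~(\ref{eqn-flf})). I would also need to double-check the sign and power bookkeeping in (\romannumeral1) — in particular that $\bar\eta(\lambda_1\cdots\lambda_{r_f})$ is exactly $\varepsilon_f$ by the definition of the sign given before the lemma, and that the normalization $G(\bar\eta,\bar\chi_1)=\sqrt{p^*}$ from Lemma~\ref{lem-32A1} is being used consistently. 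Everything else is a routine change of variables and the elementary fact that a sum of a nontrivial additive character over a group (or coset) is zero.
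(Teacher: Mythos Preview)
Your proof is correct. Part~(\romannumeral1) and the case $b\in\textmd{Im}(L_f)$ of Part~(\romannumeral2) match the paper exactly (the paper simply cites \cite{FL07} for (\romannumeral1), whereas you spell out the diagonalization, but the argument is the same; the completing-the-square via $x\mapsto x+x_b$ is identical).

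For the vanishing case $b\notin\textmd{Im}(L_f)$ you take a genuinely different route. You compute $|S|^2$, reduce to $q\sum_{y\in\textmd{Ker}(L_f)}\zeta_p^{f(y)-\tr(by)}$, and then invoke the duality $\textmd{Im}(L_f)=\textmd{Ker}(L_f)^{\perp}$ together with the (unstated but true) fact that $f$ vanishes on $\textmd{Ker}(L_f)$ to kill the character sum. The paper instead multiplies $S$ by $\sum_x\zeta_p^{-f(x)}$, substitutes $y\mapsto x+y$, and obtains $\sum_y\zeta_p^{f(y)-\tr(by)}\sum_x\zeta_p^{\tr((2L_f(y)-b)x)}$; since $b\notin\textmd{Im}(L_f)$ forces $2L_f(y)-b\neq0$ for every $y$, the inner sum over $x$ is always zero, and one concludes $S=0$ because the first factor is nonzero by (\romannumeral1). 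The paper's trick is a bit slicker: it avoids both the self-adjointness/duality argument and the need to check $f|_{\textmd{Ker}(L_f)}=0$. Your approach is perfectly valid, but you should make the step $f(y)=0$ for $y\in\textmd{Ker}(L_f)$ explicit (it follows from $f(2y)=4f(y)=2f(y)$ since $p$ is odd), as you silently dropped the $f(y)$ term when passing to $\sum_{y\in\textmd{Ker}(L_f)}\zeta_p^{-\tr(by)}$.
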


\begin{proof}
(\uppercase\expandafter{\romannumeral1}) The desired conclusion (\uppercase\expandafter{\romannumeral1}) of this lemma then follows from \cite[Lemma 1]{FL07}.

(\uppercase\expandafter{\romannumeral2}) If $b \not\in \textmd{Im}(L_f)$, then we have
\begin{eqnarray}\label{eqn-bnotin}
& &\left(\sum_{x\in \gf(q)}\zeta_p^{-f(x)}\right)\left(\sum_{y\in \gf(q)}\zeta_p^{f(y)-\tr(by)} \right) \nonumber \\
& & =\sum_{x\in \gf(q)}\zeta_p^{-f(x)} \sum_{y\in \gf(q)}\zeta_p^{f(x+y)-\tr(b(x+y))} \nonumber \\
& & = \sum_{x,y\in \gf(q)}\zeta_p^{f(x+y)-f(x)-\tr(b(x+y))}  \nonumber \\
& & =\sum_{x,y\in \gf(q)}\zeta_p^{f(y)+2\tr(L_f(y)x)-\tr(b(x+y))}  ~~~~~~~~~~(\mbox{By Equation (\ref{eqn-flf})}) \nonumber \\
& & = \sum_{y\in \gf(q)}\zeta_p^{f(y)-\tr(by)}\sum_{x\in \gf(q)}\zeta_p^{\tr((L_f(2y)-b)x)} \nonumber \\
& & = 0.                      ~~~~~~~~~~~~~~~~~~~~~~~~~~~~~~~~~~~~~~~~~~~~~~~(\mbox{Since $b \not\in \textmd{Im}(L_f)$}) \nonumber
\end{eqnarray}
From the conclusion (\uppercase\expandafter{\romannumeral1}) of this lemma, we have $\sum_{x\in \gf(q)}\zeta_p^{-f(x)}\neq 0$. Therefore, $\sum_{y\in \gf(q)}\zeta_p^{f(y)-\tr(by)}=0$.

If $b \in \textmd{Im}(L_f)$, then there exists $x_b\in \gf(q)$ such that $L_f(x_b)=-\frac{b}{2}$. Thus, we have

\begin{eqnarray}\label{eqn-bin}
 \sum_{x\in \gf(q)}\zeta_p^{f(x)-\tr(bx)}
&=& \sum_{x\in \gf(q)}\zeta_p^{f(x)+2\tr(L_f(x_b)x)}  \nonumber \\
&=& \sum_{x\in \gf(q)}\zeta_p^{f(x)+f(x_b)+2\tr(L_f(x_b)x)-f(x_b)}  \nonumber \\
&=& \zeta_p^{-f(x_b)}\sum_{x\in \gf(q)}\zeta_p^{f(x+x_b)} ~~~~~~~~~~(\mbox{By Equation (\ref{eqn-flf})}) \nonumber \\
&=& \zeta_p^{-f(x_b)}\sum_{x\in \gf(q)}\zeta_p^{f(x)}        \nonumber \\
&=& \varepsilon_f p^m(p^*)^{-\frac{r_f}{2}}\zeta_p^{-f(x_b)}.  ~~~~~~~~~~~(\mbox{By the conclusion (\uppercase\expandafter{\romannumeral1}) of this lemma}) \nonumber
\end{eqnarray}

Summarizing all the conclusions above, this completes the proof of this lemma.
\end{proof}

\begin{lemma}\label{lem-szw}

Let $a,b,c\in \gf(p)$ and
$$
S=\sum_{z,w\in \gf(p)}\zeta_p^{az^2+2bzw+cw^2}.
$$
Then we have the following.

(\uppercase\expandafter{\romannumeral1}) If $ac-b^2\neq 0$, then $S=\bar{\eta}(ac-b^2)p^2(p^*)^{-1}$.

(\uppercase\expandafter{\romannumeral2}) If $ac-b^2= 0$ and $a\neq0$, then $S=\bar{\eta}(a)p\sqrt{p^*}$.
\end{lemma}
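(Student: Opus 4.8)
The plan is to reduce the double sum $S$ to a product of two one–variable quadratic Gaussian sums over $\gf(p)$, and then invoke Lemma~\ref{lem-32A2} together with the evaluation $G(\bar\eta,\bar\chi_1)=\sqrt{p^*}$ from Lemma~\ref{lem-32A1}. The splitting device depends on which case we are in.

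For case (\uppercase\expandafter{\romannumeral1}), where $ac-b^2\neq 0$, I first note that $a$ and $c$ cannot both be zero (otherwise $ac-b^2=-b^2$ would force $b=0$, contradicting $ac-b^2\neq 0$). Assuming without loss of generality $a\neq 0$ (the case $a=0,\ c\neq 0$ is symmetric), complete the square in $z$: write $az^2+2bzw+cw^2 = a\bigl(z+a^{-1}bw\bigr)^2 + \bigl(c-a^{-1}b^2\bigr)w^2$. Substituting $z\mapsto z-a^{-1}bw$ (a bijection of $\gf(p)$ for each fixed $w$) decouples the variables, so
\[
S=\Bigl(\sum_{z\in\gf(p)}\zeta_p^{az^2}\Bigr)\Bigl(\sum_{w\in\gf(p)}\zeta_p^{(c-a^{-1}b^2)w^2}\Bigr).
\]
Since $c-a^{-1}b^2 = a^{-1}(ac-b^2)\neq 0$, both inner sums are nontrivial quadratic Gauss sums; by Lemma~\ref{lem-32A2} (with $a_1=a_0=0$) they equal $\bar\eta(a)G(\bar\eta,\bar\chi_1)$ and $\bar\eta(a^{-1}(ac-b^2))G(\bar\eta,\bar\chi_1)$ respectively. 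Multiplying, using $\bar\eta(a)\bar\eta(a^{-1})=1$ and $G(\bar\eta,\bar\chi_1)^2 = p^* $ (Lemma~\ref{lem-32A1}), gives $S = \bar\eta(ac-b^2)\,p^*= \bar\eta(ac-b^2)\,p^2(p^*)^{-1}$, since $p^* = p^2(p^*)^{-1}$. This last identity is just $(p^*)^2=p^2$, which is immediate.

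For case (\uppercase\expandafter{\romannumeral2}), where $ac=b^2$ and $a\neq 0$, the same completion of the square in $z$ applies, but now $c-a^{-1}b^2=0$, so after the substitution the $w$-variable disappears from the exponent entirely:
\[
S=\Bigl(\sum_{z\in\gf(p)}\zeta_p^{az^2}\Bigr)\Bigl(\sum_{w\in\gf(p)}1\Bigr) = \bar\eta(a)G(\bar\eta,\bar\chi_1)\cdot p = \bar\eta(a)\,p\sqrt{p^*},
\]
using Lemma~\ref{lem-32A2} for the $z$-sum and Lemma~\ref{lem-32A1} for $G(\bar\eta,\bar\chi_1)=\sqrt{p^*}$.

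I do not anticipate a genuine obstacle here: the argument is a routine diagonalization of a binary quadratic form over $\gf(p)$ followed by bookkeeping with Gauss sums. The only point requiring a little care is the reduction to the case $a\neq 0$ in part (\uppercase\expandafter{\romannumeral1}) — one must observe that the nondegeneracy hypothesis $ac-b^2\neq 0$ rules out $a=c=0$, and handle the symmetric subcase $a=0$ by completing the square in $w$ instead (or by the substitution $z\leftrightarrow w$). Everything else is forced.
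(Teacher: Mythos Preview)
Your proof is correct and follows essentially the same route as the paper: diagonalize the binary quadratic form by completing the square, then evaluate the resulting one-variable Gauss sums via Lemmas~\ref{lem-32A1} and~\ref{lem-32A2}. The only difference is cosmetic: for part~(\uppercase\expandafter{\romannumeral1}) the paper simply cites \cite[Lemma~1]{FL07} rather than writing out the diagonalization, and for part~(\uppercase\expandafter{\romannumeral2}) the paper factors the exponent as $\tfrac{1}{a}(az+bw)^2$ whereas you write $a(z+a^{-1}bw)^2$ --- the same computation.
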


\begin{proof}
(\uppercase\expandafter{\romannumeral1}) The desired conclusion (\uppercase\expandafter{\romannumeral1}) of this lemma then follows from \cite[Lemma 1]{FL07}.

(\uppercase\expandafter{\romannumeral2}) If $ac-b^2= 0$ and $a\neq0$, then
\begin{eqnarray*}
S
&=& \sum_{z,w\in \gf(p)}\zeta_p^{\frac{1}{a}(az+bw)^2}  \\
&=& \sum_{w\in \gf(p)}\sum_{z\in \gf(p)}\zeta_p^{\frac{1}{a}(az+bw)^2}  \\
&=& \sum_{w\in \gf(p)}\sum_{z\in \gf(p)}\zeta_p^{z^2}  \\
&=& \bar{\eta}(a)p\sqrt{p^*},
\end{eqnarray*}
where the last identity follows from Lemmas \ref{lem-32A1} and \ref{lem-32A2}.

This completes the proof of this lemma.
\end{proof}

\begin{lemma}\label{lem-gt}
Let $g$ be a homogeneous quadratic function from $\gf(q)$ onto $\gf(p)$ with the rank $r_g$ and the sign $\varepsilon_g$. For any $t\in \gf(p)^*$, let
$$
N(g=t)=\#\{x\in \gf(q):g(x)=t \}.
$$
Then
\begin{eqnarray*}
N(g=t)=
\left\{ \begin{array}{ll}
p^{m-1}-\varepsilon_g p^{m-1}(p^*)^{-\frac{r_g}{2}}                 & \mbox{ if $r_g$ is even},\\
p^{m-1}+\varepsilon_g \bar{\eta}(-t)p^{m-1}(p^*)^{-\frac{r_f-1}{2}} & \mbox{ if $r_g$ is odd}.
\end{array}
\right.
\end{eqnarray*}
\end{lemma}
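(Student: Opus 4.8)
The plan is to count $N(g=t)$ using the orthogonality of additive characters of $\gf(p)$, which reduces the problem to evaluating the Gauss-type sums $\sum_{x\in\gf(q)}\zeta_p^{yg(x)}$ for $y\in\gf(p)^*$; each of these is handled by Lemma~\ref{lem-esf}(I) applied to the scaled quadratic function $yg$, once its rank and sign are identified. Concretely, I would first write
\[
N(g=t)=\frac{1}{p}\sum_{x\in\gf(q)}\sum_{y\in\gf(p)}\zeta_p^{y(g(x)-t)}
=p^{m-1}+\frac{1}{p}\sum_{y\in\gf(p)^*}\zeta_p^{-yt}\sum_{x\in\gf(q)}\zeta_p^{yg(x)},
\]
the $y=0$ term contributing $p^{m-1}$.

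The crux is the observation that, for $y\in\gf(p)^*$, the function $x\mapsto yg(x)$ is again a homogeneous quadratic function whose associated symmetric matrix is $yH$; this has the same rank $r_g$ as $H$, and if $MHM^{T}=\mathrm{diag}(\lambda_1,\dots,\lambda_{r_g},0,\dots,0)$ then $M(yH)M^{T}=\mathrm{diag}(y\lambda_1,\dots,y\lambda_{r_g},0,\dots,0)$, so the sign of $yg$ is $\bar{\eta}(y^{r_g}\lambda_1\cdots\lambda_{r_g})=\bar{\eta}(y)^{r_g}\varepsilon_g$. Hence Lemma~\ref{lem-esf}(I) gives $\sum_{x\in\gf(q)}\zeta_p^{yg(x)}=\bar{\eta}(y)^{r_g}\varepsilon_g\,p^m(p^*)^{-r_g/2}$, and substituting back yields
\[
N(g=t)=p^{m-1}+\varepsilon_g\,p^{m-1}(p^*)^{-\frac{r_g}{2}}\sum_{y\in\gf(p)^*}\bar{\eta}(y)^{r_g}\zeta_p^{-yt}.
\]

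It then remains to evaluate the character sum $\sum_{y\in\gf(p)^*}\bar{\eta}(y)^{r_g}\zeta_p^{-yt}$ according to the parity of $r_g$. If $r_g$ is even, $\bar{\eta}(y)^{r_g}=1$ and the sum equals $\sum_{y\in\gf(p)^*}\zeta_p^{-yt}=-1$ because $t\neq0$, which gives the first branch of the formula. If $r_g$ is odd, $\bar{\eta}(y)^{r_g}=\bar{\eta}(y)$, and the substitution $z=-ty$ together with $\bar{\eta}(-t^{-1})=\bar{\eta}(-t)$ turns the sum into $\bar{\eta}(-t)\sum_{z\in\gf(p)^*}\bar{\eta}(z)\zeta_p^{z}=\bar{\eta}(-t)\,G(\bar{\eta},\bar{\chi}_1)=\bar{\eta}(-t)\sqrt{p^*}$ by Lemma~\ref{lem-32A1}; using $(p^*)^{-r_g/2}\sqrt{p^*}=(p^*)^{-(r_g-1)/2}$ produces the second branch (the exponent ``$r_f$'' in the displayed statement should read $r_g$). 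The only genuinely nonroutine step is the sign identity $\varepsilon_{yg}=\bar{\eta}(y)^{r_g}\varepsilon_g$; the rest is bookkeeping with the Gauss-sum value already supplied by Lemma~\ref{lem-32A1} and with Lemma~\ref{lem-esf}(I).
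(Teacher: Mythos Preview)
Your proof is correct and follows essentially the same approach as the paper: both start from the character-orthogonality expression $N(g=t)=p^{m-1}+p^{-1}\sum_{y\in\gf(p)^*}\zeta_p^{-yt}\sum_{x}\zeta_p^{yg(x)}$ and invoke Lemma~\ref{lem-esf}(I). The only cosmetic difference is that the paper packages the $y$-dependence via the Galois automorphisms $\sigma_y$ and then applies Lemma~\ref{lem-A2}, whereas you compute directly that $\varepsilon_{yg}=\bar{\eta}(y)^{r_g}\varepsilon_g$ and evaluate the resulting Gauss sum by hand; these are the same computation, since $\sigma_y\bigl((p^*)^{-r_g/2}\zeta_p^{-t}\bigr)=\bar{\eta}(y)^{r_g}(p^*)^{-r_g/2}\zeta_p^{-yt}$.
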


\begin{proof}
By definition, we have
\begin{eqnarray*}
N(g=t)
&=& p^{-1} \sum_{x\in \gf(q)}\sum_{y\in \gf(p)}\zeta_p^{y(g(x)-t)}   \\
&=& p^{-1}\left(\sum_{x\in \gf(q)}\zeta_p^{0}+\sum_{y\in \gf(p)^*}\sigma_y\ (\sum_{x\in \gf(q)}\zeta_p^{g(x)-t})\right)\\
&=& p^{m-1}+p^{-1}\sum_{y\in \gf(p)^*}\sigma_y\ (\zeta_p^{-t} \varepsilon_g p^{m}(p^*)^{-\frac{r_g}{2}})~~~~~~~~~~~~~~~~(\mbox{By Lemma \ref{lem-esf}})    \\
&=& p^{m-1}+\varepsilon_g p^{m-1}\sum_{y\in \gf(p)^*}\sigma_y\ (\zeta_p^{-t}(p^*)^{-\frac{r_g}{2}}).
\end{eqnarray*}
The desired conclusion then follows from the result (\uppercase\expandafter{\romannumeral2}) of Lamma \ref{lem-A2}.
\end{proof}

\begin{lemma}\label{lem-fato}
Let $f$ be a homogeneous quadratic function with the rank $r_f$ and the sign $\varepsilon_f$, $\alpha \in Im(L_f)$ and $x_\alpha \in \gf(q)$ with satisfying $L_f(x_\alpha)=-\frac{\alpha}{2}$. Let $f(x_\alpha)=0$ and
$$
A=\#\{x\in \gf(q): f(x)=a ~and~ \tr(\alpha x)=0\}
$$
for any $a\in \gf(p)^*$. Then
$$
A=p^{m-2}+\varepsilon_f \bar{\eta}(-a)p^{m-1}(p^*)^{-\frac{r_f-1}{2}}.
$$
\end{lemma}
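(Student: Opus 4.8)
The plan is to detect the two defining conditions with additive characters of $\gf(p)$ and then reduce every inner sum to Lemma~\ref{lem-esf}. Using $\frac1p\sum_{y\in\gf(p)}\zeta_p^{y(f(x)-a)}$ as the indicator of $f(x)=a$ and $\frac1p\sum_{z\in\gf(p)}\zeta_p^{z\tr(\alpha x)}$ as the indicator of $\tr(\alpha x)=0$, one obtains
$$
A=\frac1{p^2}\sum_{y,z\in\gf(p)}\zeta_p^{-ya}\sum_{x\in\gf(q)}\zeta_p^{yf(x)+\tr(z\alpha x)}.
$$
The summand with $y=0$ equals $\frac1{p^2}\sum_{z\in\gf(p)}\sum_{x\in\gf(q)}\zeta_p^{\tr(z\alpha x)}=\frac q{p^2}=p^{m-2}$, since $\alpha\neq0$ makes every inner sum with $z\neq0$ vanish. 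Everything then reduces to evaluating the inner exponential sums for $y\in\gf(p)^*$.

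Fix $y\in\gf(p)^*$ and set $g_y:=yf$. This is again a homogeneous quadratic function of the form (\ref{def-f}); inspecting the diagonalization of its Gram matrix (replacing $a_i$ by $ya_i$ scales the diagonal by $y$) shows $r_{g_y}=r_f$ and $\varepsilon_{g_y}=\bar\eta(y)^{r_f}\varepsilon_f$, while $L_{g_y}=yL_f$ (here we use $y\in\gf(p)$), so $\mathrm{Im}(L_{g_y})=\mathrm{Im}(L_f)$ because $\mathrm{Im}(L_f)$ is an $\gf(p)$-subspace. Since $\alpha\in\mathrm{Im}(L_f)$, also $-z\alpha\in\mathrm{Im}(L_{g_y})$ for every $z\in\gf(p)$, so Lemma~\ref{lem-esf}(II) applies to $\sum_{x\in\gf(q)}\zeta_p^{g_y(x)-\tr((-z\alpha)x)}$ and evaluates it as $\varepsilon_{g_y}p^m(p^*)^{-r_f/2}\zeta_p^{-g_y(x_b)}$, where $L_{g_y}(x_b)=z\alpha/2$. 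The key point is that one may take $x_b=-\tfrac zy x_\alpha$: then $L_{g_y}(-\tfrac zy x_\alpha)=-z\,L_f(x_\alpha)=z\alpha/2$ as required, and the exponent collapses, $g_y(x_b)=yf(-\tfrac zy x_\alpha)=\tfrac{z^2}{y}f(x_\alpha)=0$, precisely because of the hypothesis $f(x_\alpha)=0$. Hence the inner sum equals $\bar\eta(y)^{r_f}\varepsilon_f\,p^m(p^*)^{-r_f/2}$ for every $z$.

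Summing this over $z\in\gf(p)$ (a factor $p$) and then over $y\in\gf(p)^*$ yields
$$
A=p^{m-2}+p^{m-1}\varepsilon_f(p^*)^{-r_f/2}\sum_{y\in\gf(p)^*}\bar\eta(y)^{r_f}\zeta_p^{-ya}.
$$
Using $\sigma_y\big((p^*)^{-r_f/2}\big)=\bar\eta(y)^{r_f}(p^*)^{-r_f/2}$ and $\sigma_y(\zeta_p^{-a})=\zeta_p^{-ya}$ (Lemma~\ref{cyclo}), the remaining sum equals $(p^*)^{r_f/2}\sum_{y\in\gf(p)^*}\sigma_y\big((p^*)^{-r_f/2}\zeta_p^{-a}\big)$, which Lemma~\ref{lem-A2}(II), applied with $z=-a\in\gf(p)^*$ and $r=r_f$, evaluates to $\bar\eta(-a)(p^*)^{1/2}$ (the stated closed form forces $r_f$ to be odd). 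Substituting back gives $A=p^{m-2}+\varepsilon_f\bar\eta(-a)\,p^{m-1}(p^*)^{-(r_f-1)/2}$, as claimed.

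I expect the only delicate step to be the collapse of the phase $g_y(x_b)=\tfrac{z^2}{y}f(x_\alpha)$: it is exactly the normalization $f(x_\alpha)=0$ that makes the inner sum independent of $z$, so that the $z$-summation is trivial and the remaining $y$-sum is a single Gauss-type sum already packaged in Lemma~\ref{lem-A2}. Were $f(x_\alpha)\neq0$, one would instead be left with $\zeta_p^{-z^2 f(x_\alpha)/y}$ inside the $z$-sum and would need an extra quadratic Gauss-sum evaluation of the type in Lemma~\ref{lem-szw}; that case is handled by the companion lemmas.
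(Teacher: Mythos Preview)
Your proof is correct and follows essentially the same approach as the paper: both set up the double character sum, isolate the $y=0$ contribution as $p^{m-2}$, reduce the inner $x$-sum via Lemma~\ref{lem-esf}, use $f(x_\alpha)=0$ to kill the $z$-dependence, and finish with Lemma~\ref{lem-A2}(II). The only cosmetic difference is that the paper pulls out the $y$-dependence via the Galois action $\sigma_y$ \emph{before} applying Lemma~\ref{lem-esf} (so Lemma~\ref{lem-esf} is invoked once for $f$ itself), whereas you keep $y$ in, compute the invariants of $g_y=yf$ explicitly, and convert to $\sigma_y$ only at the end; your parenthetical that the stated closed form forces $r_f$ odd is a valid observation that the paper leaves implicit.
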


\begin{proof}
By definition, we have
\begin{eqnarray*}
A &=& p^{-2} \sum_{x\in \gf(q)}(\sum_{y\in \gf(p)}\zeta_p^{y(f(x)-a)})(\sum_{z\in \gf(p)}\zeta_p^{z\tr(\alpha x)}) \\
&=& p^{-2} \sum_{x\in \gf(q)}(\sum_{z\in \gf(p)}\zeta_p^{z\tr(\alpha x)})+p^{-2} \sum_{y\in \gf(p)^*}(\sum_{z\in \gf(p)}\sum_{x\in \gf(q)}\zeta_p^{y(f(x)-a)+z\tr(\alpha x)})\\
&=& p^{m-2} + p^{-2} \sum_{y\in \gf(p)^*}\sigma_y(\sum_{z\in \gf(p)}\zeta_p^{-a}\sum_{x\in \gf(q)}\zeta_p^{f(x)+z\tr(\alpha x)})\\
&=& p^{m-2} + p^{-2} \sum_{y\in \gf(p)^*}\sigma_y(\zeta_p^{-a}\sum_{z\in \gf(p)} \zeta_p^{-f(x_\alpha)z^2}\varepsilon_fp^m(p^*)^{-\frac{r_f}{2}}) ~~~~~~~~~~~~~~~(\mbox{By Lemma \ref{lem-esf}})\\
&=& p^{m-2} + p^{-2} \sum_{y\in \gf(p)^*}\sigma_y(\zeta_p^{-a}\varepsilon_fp^{m+1}(p^*)^{-\frac{r_f}{2}}) ~~~~~~~~~~~~~~~~~~~~~~~~~~~~~~~(\mbox{Since $f(x_\alpha)=0$})\\
&=& p^{m-2} + p^{-2} \bar{\eta}(-a)\varepsilon_fp^{m+1}(p^*)^{-\frac{r_f-1}{2}}) \\
&=& p^{m-2}+\varepsilon_f \bar{\eta}(-a)p^{m-1}(p^*)^{-\frac{r_f-1}{2}}.
\end{eqnarray*}

This completes the proof.
\end{proof}

\begin{lemma}\label{lem-Nf1}

Let the symbols and notations be as above. Let $f$ be a homogeneous quadratic function, $\alpha \in \gf(q)$ and
$$
N_f(\alpha)=\#\{x\in \gf(q):f(x)-\tr(\alpha x)=0\}.
$$
Then we have the following.

(\uppercase\expandafter{\romannumeral1}) If $\alpha \not\in \textmd{Im}(L_f)$, then $N_f(\alpha)=p^{m-1}$.

(\uppercase\expandafter{\romannumeral2}) If $\alpha \in \textmd{Im}(L_f)$, then
$$
N_f(\alpha)=
\left\{ \begin{array}{ll}
p^{m-1}+\varepsilon_f (p-1)p^{m-1}(p^*)^{-\frac{r_f}{2}}    & \mbox{ if $r_f$ is even and $f(x_\alpha)=0$,} \\
p^{m-1}-\varepsilon_f p^{m-1}(p^*)^{-\frac{r_f}{2}}    & \mbox{ if $r_f$ is even and $f(x_\alpha)\neq 0$,} \\
p^{m-1}                                                & \mbox{ if $r_f$ is odd and $f(x_\alpha)=0$,} \\
p^{m-1}+1+\varepsilon_f\bar{\eta}(-f(x_\alpha)) p^{m-1}(p^*)^{-\frac{r_f-1}{2}}    & \mbox{ if $r_f$ is odd and $f(x_\alpha)\neq 0$,}
\end{array}
\right.
$$
where $x_\alpha$ satisfies $L_f(x_\alpha)=-\frac{\alpha}{2}$, $r_f$ is the rank of $f$ and $\varepsilon_f$ is the sign of $f$.
\end{lemma}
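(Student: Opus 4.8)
The plan is to count $N_f(\alpha)$ by the usual orthogonality trick and then reduce the resulting exponential sums to values already computed in Lemma~\ref{lem-esf} and Lemma~\ref{lem-A2}. First I would write
\[
N_f(\alpha)=\frac1p\sum_{x\in\gf(q)}\sum_{y\in\gf(p)}\zeta_p^{y(f(x)-\tr(\alpha x))}
=p^{m-1}+\frac1p\sum_{y\in\gf(p)^*}\sum_{x\in\gf(q)}\zeta_p^{y(f(x)-\tr(\alpha x))},
\]
splitting off the $y=0$ term as $p^m/p=p^{m-1}$. For a fixed $y\in\gf(p)^*$ the inner sum equals $\sigma_y\bigl(\sum_{x\in\gf(q)}\zeta_p^{f(x)-\tr(\alpha x)}\bigr)$, so the whole computation rests on the value of $\sum_{x\in\gf(q)}\zeta_p^{f(x)-\tr(\alpha x)}$ supplied by Lemma~\ref{lem-esf}(II). (Equivalently, $y(f(x)-\tr(\alpha x))=(yf)(x)-\tr((y\alpha)x)$ is again inhomogeneous quadratic; since $L_{yf}=yL_f$ one has $\textup{Im}(L_{yf})=\textup{Im}(L_f)$, the rank is still $r_f$, and $x_\alpha$ may be used in the role of $x_{y\alpha}$.)

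For part (I), if $\alpha\notin\textup{Im}(L_f)$ then Lemma~\ref{lem-esf}(II) makes every inner sum vanish, hence $N_f(\alpha)=p^{m-1}$.

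For part (II), assume $\alpha\in\textup{Im}(L_f)$ and fix $x_\alpha$ with $L_f(x_\alpha)=-\alpha/2$. By Lemma~\ref{lem-esf}(II), $\sum_{x\in\gf(q)}\zeta_p^{f(x)-\tr(\alpha x)}=\varepsilon_f p^m(p^*)^{-r_f/2}\zeta_p^{-f(x_\alpha)}$, and since $\varepsilon_f p^m\in\Z$ is fixed by $\sigma_y$,
\[
N_f(\alpha)=p^{m-1}+\varepsilon_f p^{m-1}\sum_{y\in\gf(p)^*}\sigma_y\bigl((p^*)^{-r_f/2}\,\zeta_p^{-f(x_\alpha)}\bigr).
\]
Then I would split into the four cases. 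If $f(x_\alpha)=0$ the root of unity is $1$, and Lemma~\ref{lem-A2}(I) evaluates the sum to $0$ for $r_f$ odd and to $(p-1)(p^*)^{-r_f/2}$ for $r_f$ even. If $f(x_\alpha)\neq0$, Lemma~\ref{lem-A2}(II) with $z=-f(x_\alpha)\in\gf(p)^*$ gives $\bar{\eta}(-f(x_\alpha))(p^*)^{-(r_f-1)/2}$ for $r_f$ odd and $-(p^*)^{-r_f/2}$ for $r_f$ even. Substituting the four values into the last display yields the four stated formulas.

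The only point needing care is the Galois bookkeeping. One should confirm that $\sigma_y((p^*)^{-r_f/2})=\bar{\eta}(y)^{r_f}(p^*)^{-r_f/2}$ (equivalently, in the alternative route, that $\varepsilon_{yf}=\bar{\eta}(y)^{r_f}\varepsilon_f$, which is precisely what Lemma~\ref{lem-A2} absorbs), and that $f(x_\alpha)$ does not depend on the chosen preimage of $-\alpha/2$ under $L_f$ — this holds because $f$ is additive on $\textup{Ker}(L_f)=Z_f$ and an additive quadratic map over $\gf(p)$ with $p$ odd must vanish there. Both checks are routine, after which the case analysis is purely mechanical. I expect this bookkeeping, rather than any individual sum, to be the main place where sign errors could creep in.
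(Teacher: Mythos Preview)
Your proposal is correct and follows essentially the same route as the paper: write $N_f(\alpha)=p^{m-1}+p^{-1}\sum_{y\in\gf(p)^*}\sigma_y\bigl(\sum_{x}\zeta_p^{f(x)-\tr(\alpha x)}\bigr)$, evaluate the inner sum via Lemma~\ref{lem-esf}(II), and then finish with Lemma~\ref{lem-A2}. Your extra remarks on the Galois bookkeeping and the well-definedness of $f(x_\alpha)$ are sound checks that the paper leaves implicit.
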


\begin{proof}
By definition, we have
\begin{eqnarray*}
N_f(\alpha)
&=& p^{-1} \sum_{x\in \gf(q)}\sum_{y\in \gf(p)}\zeta_p^{y(f(x)-\tr(\alpha x))}  \\
&=& p^{m-1}+p^{-1}\sum_{y\in \gf(p)^*}\sigma_y\left( \sum_{x\in \gf(q)}\zeta_p^{f(x)-\tr(\alpha x)} \right).
\end{eqnarray*}
The desired conclusions then follow from Lemma \ref{lem-A2} and the result (\uppercase\expandafter{\romannumeral2}) of Lemma \ref{lem-esf}.
\end{proof}

\begin{lemma}\label{lem-ftr} Let the symbols and notations be as above. Let $f$ be a homogeneous quadratic function with the rank $r_f$ and the sign $\varepsilon_f$, $\beta \in \gf(q)^*$ and
\begin{eqnarray*}
S_1 &=&\sum_{z\in \gf(p)}\sum_{x\in \gf(q)}\zeta_p^{-z\tr(\beta x)}, \\
S_2 &=&\sum_{z\in \gf(p)}\sum_{x\in \gf(q)}\zeta_p^{f(x)-z\tr(\beta x)}, \\
S_3 &=& \sum_{y\in \gf(p)^*}\sigma_y\left(\sum_{z\in \gf(p)}\sum_{x\in \gf(q)}\zeta_p^{f(x)-z\tr(\beta x)}\right).
\end{eqnarray*}
Then we have the following:

(\uppercase\expandafter{\romannumeral1}) $S_1=q$,

(\uppercase\expandafter{\romannumeral2})
$S_2
=\left\{ \begin{array}{ll}
\varepsilon_f p^{m+1}(p^*)^{-\frac{r_f}{2}}   & \mbox{ if $\beta \in \textmd{Im}(L_f)$ and $f(x_\beta)=0$} \\
\varepsilon_f \bar{\eta}(-f(x_\beta))p^{m}(p^*)^{-\frac{r_f-1}{2}}   & \mbox{ if $\beta \in \textmd{Im}(L_f)$ and $f(x_\beta)\neq0$} \\
\varepsilon_f p^m(p^*)^{-\frac{r_f}{2}}                              & \mbox{ if $\beta \not \in \textmd{Im}(L_f)$}
\end{array}
\right.$,

(\uppercase\expandafter{\romannumeral3}) if $r_f$ is even, then
$$
S_3
=\left\{ \begin{array}{ll}
\varepsilon_f (p-1)p^{m+1}(p^*)^{-\frac{r_f}{2}}   & \mbox{ if $\beta \in \textmd{Im}(L_f)$ and $f(x_\beta)=0$}, \\
0                                                  & \mbox{ if $\beta \in \textmd{Im}(L_f)$ and $f(x_\beta)\neq0$}, \\
\varepsilon_f (p-1)p^m(p^*)^{-\frac{r_f}{2}}                              & \mbox{ if $\beta \not \in \textmd{Im}(L_f)$,}
\end{array}
\right.
$$
if  $r_f$ is odd, then
$$
S_3
=\left\{ \begin{array}{ll}
0                                                  & \mbox{ if $\beta \in \textmd{Im}(L_f)$ and $f(x_\beta)=0$, or $\beta \not \in \textmd{Im}(L_f)$ }, \\
\varepsilon_f \bar{\eta}(-f(x_\beta))(p-1)p^{m}(p^*)^{-\frac{r_f-1}{2}}   & \mbox{ if $\beta \in \textmd{Im}(L_f)$ and $f(x_\beta)\neq0$}.
\end{array}
\right.
$$
where $x_\beta \in \gf(q)$ satisfies $L_f(x_\beta)=-\frac{\beta}{2}$ when $\beta \in \textmd{Im}(L_f)$.
\end{lemma}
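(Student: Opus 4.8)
\textbf{Proof proposal for Lemma \ref{lem-ftr}.}

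The plan is to evaluate each of $S_1, S_2, S_3$ by the standard device of collapsing the inner sum over $z \in \gf(p)$ first, and then, for $S_3$, applying a Galois-average argument via Lemma \ref{lem-A2}. For $S_1$, I would simply note that $\sum_{z \in \gf(p)} \zeta_p^{-z\tr(\beta x)}$ equals $p$ when $\tr(\beta x) = 0$ and $0$ otherwise, so $S_1 = p \cdot \#\{x \in \gf(q) : \tr(\beta x) = 0\} = p \cdot p^{m-1} = q$, using that $\beta \neq 0$ makes $x \mapsto \tr(\beta x)$ a nonzero linear functional.

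For $S_2$, the key observation is that the term $z = 0$ contributes $\sum_{x \in \gf(q)} \zeta_p^{f(x)} = \varepsilon_f p^m (p^*)^{-r_f/2}$ by Lemma \ref{lem-esf}(I), while for each fixed $z \in \gf(p)^*$ the inner sum is $\sum_{x \in \gf(q)} \zeta_p^{f(x) - \tr(z\beta x)}$, which Lemma \ref{lem-esf}(II) evaluates: it is $0$ if $z\beta \notin \image(L_f)$ and $\varepsilon_f p^m (p^*)^{-r_f/2} \zeta_p^{-f(x_{z\beta})}$ if $z\beta \in \image(L_f)$. Since $\image(L_f)$ is an $\gf(p)$-subspace, $z\beta \in \image(L_f)$ for some (equivalently all) $z \in \gf(p)^*$ iff $\beta \in \image(L_f)$. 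In that case I would take $x_{z\beta} = z x_\beta$ (valid since $L_f$ is $\gf(p)$-linear), so $f(x_{z\beta}) = f(z x_\beta) = z^2 f(x_\beta)$, and the sum over $z \in \gf(p)^*$ becomes $\varepsilon_f p^m (p^*)^{-r_f/2} \sum_{z \in \gf(p)^*} \zeta_p^{-z^2 f(x_\beta)}$. When $f(x_\beta) = 0$ this sum is $p - 1$, giving (together with the $z=0$ term) $\varepsilon_f p^m (p^*)^{-r_f/2}(1 + (p-1)) = \varepsilon_f p^{m+1}(p^*)^{-r_f/2}$. When $f(x_\beta) \neq 0$, I would evaluate $\sum_{z \in \gf(p)^*} \zeta_p^{-z^2 f(x_\beta)} = \sum_{z \in \gf(p)} \zeta_p^{-z^2 f(x_\beta)} - 1 = \bar{\eta}(-f(x_\beta))\sqrt{p^*} - 1$ via Lemmas \ref{lem-32A1} and \ref{lem-32A2}; adding the $z=0$ term $\varepsilon_f p^m (p^*)^{-r_f/2}$ cancels the $-1$, leaving $\varepsilon_f p^m (p^*)^{-r_f/2} \bar{\eta}(-f(x_\beta))\sqrt{p^*} = \varepsilon_f \bar{\eta}(-f(x_\beta)) p^m (p^*)^{-(r_f-1)/2}$. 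Finally, if $\beta \notin \image(L_f)$, only the $z=0$ term survives, giving $\varepsilon_f p^m (p^*)^{-r_f/2}$.

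For $S_3$, I would substitute the closed form of $S_2$ obtained above, but with $f(x_\beta)$ replaced by $f(x_\beta)$ inside the argument of $\sigma_y$ — the point is that $S_3 = \sum_{y \in \gf(p)^*} \sigma_y(S_2)$, and before averaging I should keep $S_2$ in the ``pre-simplification'' shape so that each $\sigma_y$ acts on a genuine power of $(p^*)^{-1/2}$ or on $(p^*)^{-r_f/2}\zeta_p^{z}$-type terms. Concretely, in the case $\beta \in \image(L_f)$, $f(x_\beta) = 0$ I get $\sigma_y$ of $\varepsilon_f p^{m+1}(p^*)^{-r_f/2}$, and $\sum_{y \in \gf(p)^*}\sigma_y((p^*)^{-r_f/2})$ is $(p-1)(p^*)^{-r_f/2}$ if $r_f$ even and $0$ if $r_f$ odd by Lemma \ref{lem-A2}(I); this yields the first/fourth-listed values. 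In the case $\beta \in \image(L_f)$, $f(x_\beta) \neq 0$, the cleanest route is to go back one step: $S_2 = \varepsilon_f p^m (p^*)^{-r_f/2}\bigl(1 + \sum_{z \in \gf(p)^*}\zeta_p^{-z^2 f(x_\beta)}\bigr) = \varepsilon_f p^m(p^*)^{-r_f/2}\sum_{z\in\gf(p)}\zeta_p^{-z^2 f(x_\beta)}$, and then $\sigma_y$ sends $(p^*)^{-r_f/2}\zeta_p^{-z^2 f(x_\beta)}$ to $(p^*)^{-r_f/2}\zeta_p^{-y z^2 f(x_\beta)}$; summing over $z \in \gf(p)$ and then over $y \in \gf(p)^*$ and applying Lemma \ref{lem-A2}(I)--(II) produces $0$ when $r_f$ is even and $\varepsilon_f\bar{\eta}(-f(x_\beta))(p-1)p^m(p^*)^{-(r_f-1)/2}$ when $r_f$ is odd. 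The case $\beta \notin \image(L_f)$ is handled just like the first case with $p^m$ in place of $p^{m+1}$.

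The main obstacle — really a matter of care rather than depth — is the bookkeeping in $S_3$: one must resist simplifying $S_2$ with Lemma \ref{lem-32A1}/\ref{lem-32A2} \emph{before} applying $\sigma_y$, because $\sigma_y(\sqrt{p^*}) = \bar{\eta}(y)\sqrt{p^*}$ introduces extra quadratic-character factors that are exactly what Lemma \ref{lem-A2} is designed to track; applying the Galois automorphisms to the raw exponential-sum expression and only then invoking Lemma \ref{lem-A2} keeps the parity-of-$r_f$ dichotomy transparent. The identification $x_{z\beta} = z x_\beta$ (hence $f(x_{z\beta}) = z^2 f(x_\beta)$) also deserves an explicit line, since $x_\beta$ is only defined up to $\ker(L_f)$ and one should note $f$ is constant on $x_\beta + \ker(L_f) = x_\beta + Z_f$ so $f(x_\beta)$ is well defined.
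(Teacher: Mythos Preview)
Your argument is correct and follows essentially the paper's approach: evaluate $S_2$ by applying Lemma~\ref{lem-esf}(II) to each $z$-term (with the key observation $x_{z\beta}=zx_\beta$, hence $f(x_{z\beta})=z^2f(x_\beta)$), then obtain $S_3$ by applying Lemma~\ref{lem-A2} to the resulting expression. One minor slip: the sentence ``$\sigma_y$ sends $(p^*)^{-r_f/2}\zeta_p^{-z^2 f(x_\beta)}$ to $(p^*)^{-r_f/2}\zeta_p^{-y z^2 f(x_\beta)}$'' drops the factor $\bar\eta(y)^{r_f}$ coming from the action on $(p^*)^{-r_f/2}$ (a point you yourself warn about in the closing paragraph); but since you then correctly invoke Lemma~\ref{lem-A2}, which already encodes the right Galois action, this does not affect the conclusion. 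Also, your ``go back one step'' maneuver in the $f(x_\beta)\neq 0$ case is unnecessary: the paper simply applies Lemma~\ref{lem-A2}(I) with $r=r_f-1$ directly to the simplified closed form $S_2=\varepsilon_f\,\bar\eta(-f(x_\beta))\,p^m(p^*)^{-(r_f-1)/2}$, which is a rational multiple of $(p^*)^{-(r_f-1)/2}$.
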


\begin{proof}(\uppercase\expandafter{\romannumeral1}) Note that
$$
\sum_{z\in \gf(p)^*}\sum_{x\in \gf(q)}\zeta_p^{\tr(-z\beta x)}=0,
$$
as $\beta \in \gf(q)^*$.
Therefore, we have
\begin{eqnarray*}
S_1
&=&\sum_{x\in \gf(q)}\zeta_p^{0}+\sum_{z\in \gf(p)^*}\sum_{x\in \gf(q)}\zeta_p^{-z\tr(\beta x)}\\
&=&q.
\end{eqnarray*}

(\uppercase\expandafter{\romannumeral2}) By definitions and the result (\uppercase\expandafter{\romannumeral2}) of Lemma \ref{lem-esf}, we have
$$
S_2
=\left\{ \begin{array}{ll}
\sum_{z\in \gf(p)} \varepsilon_f  p^{m}(p^*)^{-\frac{r_f}{2}} \zeta_p^{-f(x_\beta)z^2}    & \mbox{ if $\beta \in \textmd{Im}(L_f)$}, \\
\sum_{x\in \gf(q)}\zeta_p^{f(x)}   & \mbox{ if $\beta \not \in \textmd{Im}(L_f)$}.
\end{array}
\right.
$$
The desired conclusion (\uppercase\expandafter{\romannumeral2}) of this lemma then follows from Lammas \ref{lem-32A1} and \ref{lem-32A2} and the result (\uppercase\expandafter{\romannumeral1}) of Lemma \ref{lem-esf}.

(\uppercase\expandafter{\romannumeral3}) The desired conclusion then follows directly from Lamma \ref{lem-A2} and the result (\uppercase\expandafter{\romannumeral2}) of this lemma.

This completes the proof.
\end{proof}

\begin{lemma}\label{lem-Nf2}

Let the symbols and notations be as above. Let $f$ be a homogeneous quadratic function with the rank $r_f$ and the sign $\varepsilon_f$, $\beta \in \gf(q)^*$ and
$$
N_{f,\beta}=\#\{x\in \gf(q):f(x)=0 ~and~ \tr(\beta x)=0\}.
$$
Then, for the case $r_f$ being even, we have
$$
N_{f,\beta}
=\left\{ \begin{array}{ll}
p^{m-2}+\varepsilon_f (p-1)p^{m-1}(p^*)^{-\frac{r_f}{2}}   & \mbox{ if $\beta \in \textmd{Im}(L_f)$ and $f(x_\beta)=0$}, \\
p^{m-2}                                                 & \mbox{ if $\beta \in \textmd{Im}(L_f)$ and $f(x_\beta)\neq0$}, \\
p^{m-2}+\varepsilon_f (p-1)p^{m-2}(p^*)^{-\frac{r_f}{2}}                              & \mbox{ if $\beta \not \in \textmd{Im}(L_f)$,}
\end{array}
\right.
$$
and for the case $r_f$ being odd, we have
$$
N_{f,\beta}
=\left\{ \begin{array}{ll}
p^{m-2}                  & \mbox{ if $\beta \in \textmd{Im}(L_f)$ and $f(x_\beta)=0$, or $\beta \not \in \textmd{Im}(L_f)$ }, \\
p^{m-2}+\varepsilon_f \bar{\eta}(-f(x_\beta))(p-1)p^{m-2}(p^*)^{-\frac{r_f-1}{2}}   & \mbox{ if $\beta \in \textmd{Im}(L_f)$ and $f(x_\beta)\neq0$},
\end{array}
\right.
$$
where $x_\beta \in \gf(q)$ satisfies $L_f(x_\beta)=-\frac{\beta}{2}$ when $\beta \in \textmd{Im}(L_f)$.
\end{lemma}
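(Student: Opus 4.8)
The plan is to express $N_{f,\beta}$ as a two-fold character sum, peel off the trivial contribution, and recognize the remaining sum as the quantity $S_3$ already evaluated in Lemma \ref{lem-ftr}; after that only bookkeeping remains.

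First I would use the orthogonality relations for the two conditions $f(x)=0$ and $\tr(\beta x)=0$ to write
\[
N_{f,\beta}=\frac{1}{p^2}\sum_{x\in\gf(q)}\Big(\sum_{y\in\gf(p)}\zeta_p^{yf(x)}\Big)\Big(\sum_{z\in\gf(p)}\zeta_p^{z\tr(\beta x)}\Big),
\]
and split the sum over $y$ into the term $y=0$ and the terms $y\in\gf(p)^*$. For $y=0$ the remaining expression is $\frac{1}{p^2}\sum_{z\in\gf(p)}\sum_{x\in\gf(q)}\zeta_p^{z\tr(\beta x)}$; since $\beta\neq0$, the inner sum over $x$ is $0$ for every $z\neq0$ and equals $q$ for $z=0$, so this part contributes exactly $q/p^2=p^{m-2}$.

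For the part with $y\in\gf(p)^*$, the key observation is that applying the Galois automorphism $\sigma_y$ to $\sum_{z\in\gf(p)}\zeta_p^{f(x)-z\tr(\beta x)}$ yields $\sum_{z\in\gf(p)}\zeta_p^{yf(x)-yz\tr(\beta x)}$, and a change of variable in $z$ (legitimate because $y\neq0$) converts this into $\sum_{z\in\gf(p)}\zeta_p^{yf(x)+z\tr(\beta x)}$. Summing over $x\in\gf(q)$ and then over $y\in\gf(p)^*$ shows that the $y\in\gf(p)^*$ contribution is precisely $\frac{1}{p^2}S_3$, with $S_3$ as in part (\uppercase\expandafter{\romannumeral3}) of Lemma \ref{lem-ftr}. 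Hence $N_{f,\beta}=p^{m-2}+\frac{1}{p^2}S_3$, and all the genuine character-sum work is inherited from Lemmas \ref{lem-esf} and \ref{lem-ftr}.

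It then remains to substitute the case distinctions from Lemma \ref{lem-ftr} — the parity of $r_f$, and whether $\beta\in\textmd{Im}(L_f)$ with $f(x_\beta)=0$ or $f(x_\beta)\neq0$, or $\beta\notin\textmd{Im}(L_f)$ — and simplify, e.g. $\frac{1}{p^2}\cdot\varepsilon_f(p-1)p^{m+1}(p^*)^{-r_f/2}=\varepsilon_f(p-1)p^{m-1}(p^*)^{-r_f/2}$ and $\frac{1}{p^2}\cdot\varepsilon_f\bar{\eta}(-f(x_\beta))(p-1)p^{m}(p^*)^{-(r_f-1)/2}=\varepsilon_f\bar{\eta}(-f(x_\beta))(p-1)p^{m-2}(p^*)^{-(r_f-1)/2}$, which reproduce the claimed values line by line. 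There is no real obstacle here: the only point that requires care is the change of variables identifying the $y\neq0$ contribution with $S_3$; everything afterward is mechanical substitution.
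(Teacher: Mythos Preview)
Your proposal is correct and follows essentially the same approach as the paper: write $N_{f,\beta}$ via orthogonality, split off the $y=0$ term (which the paper identifies as $p^{-2}S_1=p^{m-2}$ using part (\uppercase\expandafter{\romannumeral1}) of Lemma~\ref{lem-ftr}, while you compute it directly), and recognize the $y\in\gf(p)^*$ contribution as $p^{-2}S_3$ from part (\uppercase\expandafter{\romannumeral3}) of Lemma~\ref{lem-ftr}. Your explicit justification of the Galois--automorphism step and the change of variable in $z$ is a bit more detailed than the paper's, but the argument is the same.
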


\begin{proof}By definition, we have
\begin{eqnarray*}
N_{f,\beta}
&=& p^{-2} \sum_{x\in \gf(q)}(\sum_{y\in \gf(p)}\zeta_p^{yf(x)})(\sum_{z\in \gf(p)}\zeta_p^{-z\tr(\beta x)})   \\
&=& p^{-2}\left(\sum_{z\in \gf(p)}\sum_{x\in \gf(q)}\zeta_p^{-z\tr(\beta x)}+\sum_{y\in \gf(p)^*}\sigma_y\ (\sum_{z\in \gf(p)}\sum_{x\in \gf(q)}\zeta_p^{f(x)-z\tr(\beta x)})\right).
\end{eqnarray*}
The desired conclusion then follows from Lamma \ref{lem-ftr}.
\end{proof}

\begin{lemma}\label{lem-D}
Let $f$ be a homogeneous quadratic function with the rank $r_f$ and the sign $\varepsilon_f$, $\alpha \in \gf(q) \backslash Im(L_f)$ and $\beta \in \gf(q)^*$. Then we have the following.

\begin{itemize}
  \item There exists $z_0\in \gf(p)^*$ such that $\alpha-z_0\beta \in Im(L_f)$ if and only if $\beta \in \bigcup_{z\in \gf(p)^*}(z\alpha+\textmd{Im}(L_f))$.
  \item Let $z'\in \gf(p)^*$ and $\beta \in z'\alpha+Im(L_f)$. Then
  $
  \{z\in \gf(p)^*:\alpha - z \beta \in Im(L_f)\}=\{\frac{1}{z'}\}.
  $
\end{itemize}
\end{lemma}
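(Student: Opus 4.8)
The plan is to work directly with the coset structure of $\mathrm{Im}(L_f)$ inside $\gf(q)$, viewed as an $\gf(p)$-subspace, and to translate the two statements into elementary linear-algebra assertions about cosets. For the first bullet, I would argue by a chain of equivalences. Suppose first that there exists $z_0\in\gf(p)^*$ with $\alpha-z_0\beta\in\mathrm{Im}(L_f)$. Then $z_0\beta\in\alpha+\mathrm{Im}(L_f)$, hence $\beta\in z_0^{-1}\alpha+z_0^{-1}\mathrm{Im}(L_f)=z_0^{-1}\alpha+\mathrm{Im}(L_f)$ (using that $\mathrm{Im}(L_f)$ is an $\gf(p)$-subspace, so it is closed under scaling by the nonzero element $z_0^{-1}$), and since $z_0^{-1}\in\gf(p)^*$ this places $\beta$ in $\bigcup_{z\in\gf(p)^*}(z\alpha+\mathrm{Im}(L_f))$. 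Conversely, if $\beta\in z\alpha+\mathrm{Im}(L_f)$ for some $z\in\gf(p)^*$, then $\beta-z\alpha\in\mathrm{Im}(L_f)$, so $\alpha-z^{-1}\beta=-z^{-1}(\beta-z\alpha)\in\mathrm{Im}(L_f)$, and $z_0:=z^{-1}\in\gf(p)^*$ works. This settles the ``if and only if''.

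For the second bullet, fix $z'\in\gf(p)^*$ and $\beta\in z'\alpha+\mathrm{Im}(L_f)$, i.e. $\beta-z'\alpha\in\mathrm{Im}(L_f)$. I would first check that $z=1/z'$ lies in the claimed set: $\alpha-\tfrac{1}{z'}\beta=-\tfrac{1}{z'}(\beta-z'\alpha)\in\mathrm{Im}(L_f)$, as above. For the reverse inclusion, suppose $z\in\gf(p)^*$ also satisfies $\alpha-z\beta\in\mathrm{Im}(L_f)$. Then subtracting, $(\alpha-z\beta)-(-\tfrac{1}{z'}\beta)\cdot z' \cdot(\text{something})$ — more cleanly, from $\alpha-z\beta\in\mathrm{Im}(L_f)$ and $\beta-z'\alpha\in\mathrm{Im}(L_f)$ I get $z(\beta-z'\alpha)+(\alpha-z\beta)=\alpha(1-zz')\in\mathrm{Im}(L_f)$. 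But $\alpha\notin\mathrm{Im}(L_f)$ by hypothesis, and $\mathrm{Im}(L_f)$ is a subspace, so $(1-zz')\alpha\in\mathrm{Im}(L_f)$ forces the scalar $1-zz'=0$ in $\gf(p)$, i.e. $z=1/z'$. Hence the set is exactly $\{1/z'\}$.

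The only genuinely substantive point — and the place where care is needed rather than difficulty — is the repeated use of the fact that $\mathrm{Im}(L_f)=\{L_f(x):x\in\gf(q)\}$ is an $\gf(p)$-linear subspace of $\gf(q)$, which is immediate since $L_f$ is a linearized ($\gf(p)$-linear) polynomial; this is what licenses both closure under $\gf(p)^*$-scaling and the cancellation argument $(1-zz')\alpha\in\mathrm{Im}(L_f)\Rightarrow 1-zz'=0$. I expect no real obstacle: the whole statement is a coset bookkeeping lemma whose role is to organize the case analysis in the main theorem, and the proof is a few lines of manipulation once one observes that everything takes place in a quotient $\gf(p)$-vector space $\gf(q)/\mathrm{Im}(L_f)$, in which $\bar\alpha\neq 0$ and the equation $\bar\alpha=z\bar\beta$ has at most one solution $z$ whenever $\bar\beta\neq 0$ (and $\bar\beta$ is itself forced to be a nonzero $\gf(p)$-multiple of $\bar\alpha$).
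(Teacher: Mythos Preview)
Your proof is correct and is precisely the elementary coset argument one expects; the paper itself simply writes ``The desired conclusion is straightforward'' and gives no details, so your write-up is in fact more complete than the original while following the same (only natural) approach.
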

\begin{proof}
The desired conclusion is straightforward.
\end{proof}

\begin{lemma}\label{lem-s4}
Let $f$ be a homogeneous quadratic function with the rank $r_f$ and the sign $\varepsilon_f$, $\alpha \in \gf(q)$, $\beta \in \gf(q)^*$ and
$$
S_4=\sum_{z\in \gf(p)}\sum_{x\in \gf(q)}\zeta_p^{f(x)-\tr((\alpha-\beta z) x)}.
$$
Then we have the following.

(\uppercase\expandafter{\romannumeral1}) If $\alpha \in Im(L_f)$, then
$$
S_4=
\left\{ \begin{array}{ll}
\varepsilon_f p^{m+1}(p^*)^{-\frac{r_f}{2}} \zeta_p^{-f(x_\alpha)}   & \mbox{ if $\beta \in \textmd{Im}(L_f)$, $f(x_\beta)=0$ and $\tr(\alpha x_\beta)=0$}, \\
0                                                & \mbox{ if $\beta \in \textmd{Im}(L_f)$, $f(x_\beta)=0$ and $\tr(\alpha x_\beta)\neq 0$}, \\
\varepsilon_f \bar{\eta}(-f(x_\beta))p^{m}(p^*)^{-\frac{r_f-1}{2}} \zeta_p^{-f(x_\alpha)+\frac{(\tr(\alpha x_\beta))^2}{4f(x_\beta)}}   & \mbox{ if $\beta \in \textmd{Im}(L_f)$ and $f(x_\beta) \neq 0$}, \\
\varepsilon_f p^{m}(p^*)^{-\frac{r_f}{2}} \zeta_p^{-f(x_\alpha)}           & \mbox{ if $\beta \not \in \textmd{Im}(L_f)$,}
\end{array}
\right.
$$
where $x_\alpha \in \gf(q)$ satisfies $L_f(x_\alpha)=-\frac{\alpha}{2}$ and $x_\beta \in \gf(q)$ satisfies $L_f(x_\beta)=-\frac{\beta}{2}$.

(\uppercase\expandafter{\romannumeral2}) If $\alpha \not \in Im(L_f)$, then
$$
S_4=
\left\{ \begin{array}{ll}
\varepsilon_f p^{m}(p^*)^{-\frac{r_f}{2}} \zeta_p^{-f(x')}   & \mbox{ if $\beta \in \bigcup_{z\in \gf(p)^*}(z\alpha+\textmd{Im}(L_f))$}, \\
0                                                & \mbox{ otherwise },
\end{array}
\right.
$$
where $f(x')=-\frac{\alpha -\beta z_0}{2}$ with $\beta \in \frac{1}{z_0}\alpha +Im(L_f)$ and $z_0\in \gf(p)^*$.
\end{lemma}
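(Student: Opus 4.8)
The plan is to peel off the outer sum over $z$ and apply Lemma~\ref{lem-esf} to the inner sum over $x$. Write $T(z)=\sum_{x\in\gf(q)}\zeta_p^{f(x)-\tr((\alpha-\beta z)x)}$, so that $S_4=\sum_{z\in\gf(p)}T(z)$. By part~(\uppercase\expandafter{\romannumeral2}) of Lemma~\ref{lem-esf}, $T(z)=0$ unless $\alpha-\beta z\in\textmd{Im}(L_f)$, in which case $T(z)=\varepsilon_f p^m(p^*)^{-r_f/2}\zeta_p^{-f(x_{\alpha-\beta z})}$ for any $x_{\alpha-\beta z}$ with $L_f(x_{\alpha-\beta z})=-(\alpha-\beta z)/2$. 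Thus the proof reduces to two things: (a) identifying for which $z\in\gf(p)$ one has $\alpha-\beta z\in\textmd{Im}(L_f)$, and (b) computing the exponents $f(x_{\alpha-\beta z})$ at those $z$.

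For part~(\uppercase\expandafter{\romannumeral1}) assume $\alpha\in\textmd{Im}(L_f)$ and fix $x_\alpha$ with $L_f(x_\alpha)=-\alpha/2$. If $\beta\in\textmd{Im}(L_f)$, pick $x_\beta$ with $L_f(x_\beta)=-\beta/2$; since $L_f$ is $\gf(p)$-linear, $\alpha-\beta z\in\textmd{Im}(L_f)$ for every $z$ and we may take $x_{\alpha-\beta z}=x_\alpha-zx_\beta$. Expanding with~(\ref{eqn-flf}) and using $f(cx)=c^2f(x)$ together with $L_f(x_\alpha)=-\alpha/2$ gives
\[
f(x_\alpha-zx_\beta)=f(x_\alpha)+z^2f(x_\beta)+z\,\tr(\alpha x_\beta),
\]
so $S_4=\varepsilon_f p^m(p^*)^{-r_f/2}\zeta_p^{-f(x_\alpha)}\sum_{z\in\gf(p)}\zeta_p^{-z^2f(x_\beta)-z\,\tr(\alpha x_\beta)}$. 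The inner $z$-sum equals $p$ if $f(x_\beta)=\tr(\alpha x_\beta)=0$, equals $0$ if $f(x_\beta)=0\neq\tr(\alpha x_\beta)$, and if $f(x_\beta)\neq 0$ is a quadratic Gauss sum over $\gf(p)$ which Lemmas~\ref{lem-32A1} and~\ref{lem-32A2} (completion of the square) evaluate to $\bar\eta(-f(x_\beta))\sqrt{p^*}\,\zeta_p^{(\tr(\alpha x_\beta))^2/(4f(x_\beta))}$; inserting these and using $(p^*)^{-r_f/2}\sqrt{p^*}=(p^*)^{-(r_f-1)/2}$ produces the three stated formulas. If instead $\beta\notin\textmd{Im}(L_f)$, then because $\textmd{Im}(L_f)$ is an $\gf(p)$-subspace containing $\alpha$, the condition $\alpha-\beta z\in\textmd{Im}(L_f)$ forces $z=0$; only $T(0)$ survives, giving $S_4=\varepsilon_f p^m(p^*)^{-r_f/2}\zeta_p^{-f(x_\alpha)}$.

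For part~(\uppercase\expandafter{\romannumeral2}) assume $\alpha\notin\textmd{Im}(L_f)$. Then $z=0$ contributes nothing, and Lemma~\ref{lem-D} governs the remaining indices: if $\beta\notin\bigcup_{z\in\gf(p)^*}(z\alpha+\textmd{Im}(L_f))$ there is no $z\in\gf(p)^*$ with $\alpha-\beta z\in\textmd{Im}(L_f)$, hence $S_4=0$; otherwise $\beta$ lies in exactly one of those cosets (they are pairwise disjoint since $\alpha\notin\textmd{Im}(L_f)$), say $\beta\in\frac1{z_0}\alpha+\textmd{Im}(L_f)$, and then by the second part of Lemma~\ref{lem-D} the unique surviving index is $z=z_0$, so $S_4=T(z_0)=\varepsilon_f p^m(p^*)^{-r_f/2}\zeta_p^{-f(x')}$ with $L_f(x')=-(\alpha-\beta z_0)/2$. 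This matches the claimed formula. The only genuine computation is the quadratic Gauss sum in part~(\uppercase\expandafter{\romannumeral1}); the step demanding care is the completion of the square in Lemma~\ref{lem-32A2}, namely checking that with $a_2=-f(x_\beta)$, $a_1=-\tr(\alpha x_\beta)$, $a_0=0$ the constant $a_0-a_1^2(4a_2)^{-1}$ simplifies to $(\tr(\alpha x_\beta))^2/(4f(x_\beta))$, and that the quadratic-character factor and the power of $p^*$ assemble to exactly $\varepsilon_f\bar\eta(-f(x_\beta))p^m(p^*)^{-(r_f-1)/2}$. Everything else is elementary linear algebra over $\gf(p)$ combined with Lemma~\ref{lem-esf}.
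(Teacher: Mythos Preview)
Your proof is correct and follows essentially the same route as the paper's own argument: apply Lemma~\ref{lem-esf}(\uppercase\expandafter{\romannumeral2}) termwise to reduce $S_4$ to a sum over those $z$ with $\alpha-\beta z\in\textmd{Im}(L_f)$, take $x_{\alpha-\beta z}=x_\alpha-zx_\beta$ and expand $f(x_\alpha-zx_\beta)$ via~(\ref{eqn-flf}), then evaluate the resulting quadratic exponential sum over $\gf(p)$ by completing the square (Lemmas~\ref{lem-32A1} and~\ref{lem-32A2}); for part~(\uppercase\expandafter{\romannumeral2}) you likewise invoke Lemma~\ref{lem-D} exactly as the paper does. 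The only cosmetic difference is that you organize the case analysis slightly more explicitly and spell out the completion-of-the-square constants.
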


\begin{proof}
(\uppercase\expandafter{\romannumeral1}) It is obvious that there exists $x_\alpha \in \gf(q)$ such that $L_f(x_\alpha)=-\frac{\alpha}{2}$ when $\alpha \in Im(L_f)$. Let us distinguish the following two cases when $\alpha \in Im(L_f)$.
\begin{itemize}
  \item Case $\beta \in Im(L_f)$.

  It is obvious that there exists $x_\beta \in \gf(q)$ such that $L_f(x_\beta)=-\frac{\beta}{2}$. Thus, $L_f(x_\alpha -zx_\beta)=-\frac{\alpha-z\beta}{2}$. From Lemma \ref{lem-esf}, we have
  \begin{eqnarray}\label{eqn-s41}
  S_4
  &=& \sum_{z\in \gf(p)}\varepsilon_f p^{m}(p^*)^{-\frac{r_f}{2}} \zeta_p^{-f(x_\alpha-zx_\beta)}   \nonumber \\
  &=& \varepsilon_f p^{m}(p^*)^{-\frac{r_f}{2}} \sum_{z\in \gf(p)} \zeta_p^{-f(x_\alpha)-f(x_\beta)z^2+2\tr(L_f(x_\alpha)x_\beta)z}  \nonumber \\
  &=& \varepsilon_f p^{m}(p^*)^{-\frac{r_f}{2}} \sum_{z\in \gf(p)} \zeta_p^{-f(x_\beta)z^2-\tr(\alpha x_\beta)z-f(x_\alpha)}    \nonumber \\
  &=&
  \left\{ \begin{array}{ll}
\varepsilon_f p^{m+1}(p^*)^{-\frac{r_f}{2}} \zeta_p^{-f(x_\alpha)}   & \mbox{ if $f(x_\beta)=0$ and $\tr(\alpha x_\beta)=0$}, \\
0                                                & \mbox{ if $f(x_\beta)=0$ and $\tr(\alpha x_\beta)\neq 0$}, \\
\varepsilon_f \bar{\eta}(-f(x_\beta))p^{m}(p^*)^{-\frac{r_f-1}{2}} \zeta_p^{-f(x_\alpha)+\frac{(\tr(\alpha x_\beta))^2}{4f(x_\beta)}}   & \mbox{ if $f(x_\beta) \neq 0$},
\end{array}
\right.
\end{eqnarray}
  where the last identity follows by using Lemmas \ref{lem-32A1} and \ref{lem-32A2}.
  \item Case $\beta \not \in Im(L_f)$.

  It is clear that $\alpha-\beta z \not \in Im(L_f)$ for any $z\in \gf(p)^*$. Therefore, from Lemma \ref{lem-esf} we have
  \begin{eqnarray}\label{eqn-s42}
  S_4
  &=& \sum_{x\in \gf(q)}\zeta_p^{f(x)-\tr(\alpha x))}      \nonumber \\
  &=& \varepsilon_f p^{m}(p^*)^{-\frac{r_f}{2}} \zeta_p^{-f(x_\alpha)}.
  \end{eqnarray}
\end{itemize}
Combining (\ref{eqn-s41}) and (\ref{eqn-s42}), the result (\uppercase\expandafter{\romannumeral1}) of this lemma follows.

(\uppercase\expandafter{\romannumeral2}) The proof is similar to case (\uppercase\expandafter{\romannumeral1}). The desired conclusion then follows from Lammas \ref{lem-esf} and \ref{lem-D}.
\end{proof}

\begin{lemma}\label{lem-trs4}
Let the symbols and notations be as Lemma \ref{lem-s4}, and let
$$
S_5=\sum_{y\in \gf(p)^*}\sigma_y\left(\sum_{z\in \gf(p)}\sum_{x\in \gf(q)}\zeta_p^{f(x)-\tr((\alpha-\beta z) x)}\right).
$$
Then we have the following.

(\uppercase\expandafter{\romannumeral1}) When $\alpha \in Im(L_f)$, we have the following four cases.
\begin{itemize}
  \item If $r_f$ is even and $f(x_\alpha)=0$, then

 \begin{eqnarray*}
  S_5=
\left\{ \begin{array}{ll}
\varepsilon_f (p-1)p^{m+1}(p^*)^{-\frac{r_f}{2}}  & \mbox{ if $f(x_\beta)=0$ and $\tr(\alpha x_\beta)=0$}, \\
0                                                 & \mbox{ if $f(x_\beta)=0$ and $\tr(\alpha x_\beta)\neq 0$} \\
                                                  & \mbox{ or $f(x_\beta)\neq 0$ and $\tr(\alpha x_\beta)= 0$}, \\
\varepsilon_f \bar{\eta}(-1)p^{m}(p^*)^{-\frac{r_f-2}{2}} & \mbox{ if $f(x_\beta) \neq 0$ and $\tr(\alpha x_\beta)\neq 0$},\\
\varepsilon_f (p-1)p^{m}(p^*)^{-\frac{r_f}{2}} & \mbox{ if $\beta \not \in Im(L_f)$}.
\end{array}
\right.
\end{eqnarray*}

  \item If $r_f$ is even and $f(x_\alpha) \neq 0$, then
  \begin{eqnarray*}
S_5=
\left\{ \begin{array}{ll}
-\varepsilon_f p^{m+1}(p^*)^{-\frac{r_f}{2}}  & \mbox{ if $f(x_\beta)=0$ and $\tr(\alpha x_\beta)=0$}, \\
0                                                 & \mbox{ if $f(x_\beta)=0$ and $\tr(\alpha x_\beta)\neq 0$} \\
                                                  & \mbox{ or $f(x_\beta)\neq 0$ and $E= 0$}, \\
\varepsilon_f \bar{\eta}(-f(x_\beta)E)p^{m}(p^*)^{-\frac{r_f-2}{2}} & \mbox{ if $f(x_\beta) \neq 0$ and $E \neq 0$},\\
-\varepsilon_f p^{m}(p^*)^{-\frac{r_f}{2}} & \mbox{ if $\beta \not \in Im(L_f)$},
\end{array}
\right.
\end{eqnarray*}
where $E=-f(x_\alpha)+\frac{(\tr(\alpha x_\beta))^2}{4f(x_\beta)}$.
  \item If $r_f$ is odd and $f(x_\alpha)=0$, then
 \begin{eqnarray*}
S_5=
\left\{ \begin{array}{ll}
0                                                 & \mbox{ if $f(x_\beta)=0$ or $\beta \not \in Im(L_f)$}, \\
\varepsilon_f \bar{\eta}(-f(x_\beta))(p-1)p^{m}(p^*)^{-\frac{r_f-1}{2}} & \mbox{ if $f(x_\beta) \neq 0$ and $\tr(\alpha x_\beta)= 0$},\\
-\varepsilon_f \bar{\eta}(-f(x_\beta))p^{m}(p^*)^{-\frac{r_f-1}{2}} & \mbox{ if $f(x_\beta) \neq 0$ and $\tr(\alpha x_\beta)\neq 0$}.
\end{array}
\right.
\end{eqnarray*}
  \item If $r_f$ is odd and $f(x_\alpha) \neq 0$, then
\begin{eqnarray*}
S_5=
\left\{ \begin{array}{ll}
\varepsilon_f \bar{\eta}(-f(x_\alpha))p^{m+1}(p^*)^{-\frac{r_f-1}{2}}  & \mbox{ if $f(x_\beta)=\tr(\alpha x_\beta)=0$}, \\
0                                                                     & \mbox{ if $f(x_\beta)=0$ and $\tr(\alpha x_\beta)\neq 0$} \\
\varepsilon_f \bar{\eta}(-f(x_\alpha))(p-1)p^{m}(p^*)^{-\frac{r_f-1}{2}}   & \mbox{ if $f(x_\beta)\neq 0$ and $E= 0$}, \\
-\varepsilon_f \bar{\eta}(-f(x_\beta))p^{m}(p^*)^{-\frac{r_f-1}{2}}        & \mbox{ if $f(x_\beta) \neq 0$ and $E \neq 0$},\\
\varepsilon_f \bar{\eta}(-f(x_\alpha))p^{m}(p^*)^{-\frac{r_f-1}{2}}  & \mbox{ if $\beta \not \in Im(L_f)$}.
\end{array}
\right.
\end{eqnarray*}
where $E=-f(x_\alpha)+\frac{(\tr(\alpha x_\beta))^2}{4f(x_\beta)}$.
\end{itemize}

(\uppercase\expandafter{\romannumeral2})  When $\alpha \not \in Im(L_f)$, we have the following two cases.
\begin{itemize}
  \item If $r_f$ is even, then
  $$
S_5=
\left\{ \begin{array}{ll}
-\varepsilon_f p^{m}(p^*)^{-\frac{r_f}{2}}    & \mbox{ if $\beta \in \bigcup_{z\in \gf(p)^*}(z\alpha+\textmd{Im}(L_f))$ and $f(x') \neq 0$}, \\
(p-1)\varepsilon_f p^{m}(p^*)^{-\frac{r_f}{2}}    & \mbox{ if $\beta \in \bigcup_{z\in \gf(p)^*}(z\alpha+\textmd{Im}(L_f))$ and $f(x') = 0$}, \\
0                                                & \mbox{ otherwise },
\end{array}
\right.
$$
where $f(x')=-\frac{\alpha -\beta z_0}{2}$ with $z_0\in \gf(p)^*$ and $\beta \in \frac{1}{z_0}\alpha +Im(L_f)$.
\item If $r_f$ is odd, then
  $$
S_5=
\left\{ \begin{array}{ll}
\varepsilon_f \bar{\eta}(-f(x'))p^{m}(p^*)^{-\frac{r_f-1}{2}} & \mbox{ if $\beta \in \bigcup_{z\in \gf(p)^*}(z\alpha+\textmd{Im}(L_f))$ and $f(x') \neq 0$}, \\
0                                                & \mbox{ otherwise },
\end{array}
\right.
$$
where $f(x')=-\frac{\alpha -\beta z_0}{2}$ with $\beta \in \frac{1}{z_0}\alpha +Im(L_f)$ and $z_0\in \gf(p)^*$.
\end{itemize}
\end{lemma}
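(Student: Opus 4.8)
The plan is to observe that by the very definitions in Lemma~\ref{lem-s4} and in the present statement we have the identity
$$
S_5=\sum_{y\in\gf(p)^*}\sigma_y(S_4),
$$
so that $S_5$ is obtained by feeding the explicit table of values of $S_4$ from Lemma~\ref{lem-s4} through the Galois trace $\sum_{y}\sigma_y(\cdot)$. Every entry of that table has the shape $c\,\varepsilon_f\,p^{e}\,(p^*)^{-s}\,\zeta_p^{t}$ where $c$ is a rational constant (one of $1$, $p-1$, or a value $\bar\eta(\cdot)\in\{\pm1\}$), $e$ is an integer, $s\in\{r_f/2,\ (r_f-1)/2\}$, and $t\in\gf(p)$ is one of $0$, $-f(x_\alpha)$, $E=-f(x_\alpha)+\frac{(\tr(\alpha x_\beta))^2}{4f(x_\beta)}$, or $-f(x')$. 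Since $\sigma_y$ fixes the rational factor $c\,p^{e}$ and fixes $\varepsilon_f=\pm1$, applying $\sigma_y$ and summing over $y\in\gf(p)^*$ reduces in each case to evaluating $\sum_{y\in\gf(p)^*}\sigma_y\bigl((p^*)^{-s}\zeta_p^{t}\bigr)$, which is exactly Lemma~\ref{lem-A2}: part~(I) when $t=0$ and part~(II) when $t\neq0$. The proof is therefore a case-by-case substitution, run through the four branches of Lemma~\ref{lem-s4}(I) and the two branches of Lemma~\ref{lem-s4}(II).

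Concretely, I would first take $\alpha\in\mathrm{Im}(L_f)$ and treat the branch $\beta\in\mathrm{Im}(L_f)$, $f(x_\beta)=0$, $\tr(\alpha x_\beta)=0$, where $S_4=\varepsilon_f p^{m+1}(p^*)^{-r_f/2}\zeta_p^{-f(x_\alpha)}$: split on whether $f(x_\alpha)=0$, and apply Lemma~\ref{lem-A2}(I) (if $f(x_\alpha)=0$) or (II) (if $f(x_\alpha)\neq0$) with exponent $r_f$; the parity of $r_f$ then decides which of the listed values of $S_5$ survives. The branch $\beta\notin\mathrm{Im}(L_f)$ is identical with $p^{m}$ replacing $p^{m+1}$. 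The branch $\beta\in\mathrm{Im}(L_f)$, $f(x_\beta)\neq0$ is the substantive one: here $S_4=\varepsilon_f\bar\eta(-f(x_\beta))p^{m}(p^*)^{-(r_f-1)/2}\zeta_p^{E}$, so Lemma~\ref{lem-A2} is applied with ``$r$'' equal to $r_f-1$, i.e. of parity opposite to $r_f$, and one splits on $E=0$ versus $E\neq0$. When $E\neq0$ part~(II) produces the extra factor $\bar\eta(E)$, which is absorbed using $\bar\eta(-f(x_\beta))\bar\eta(E)=\bar\eta(-f(x_\beta)E)$; when $E=0$ one uses that $E=0$ forces $f(x_\alpha)f(x_\beta)$ to be a nonzero square (when $f(x_\alpha)\neq0$), hence $\bar\eta(-f(x_\alpha))=\bar\eta(-f(x_\beta))$, which is why the $E=0$ sub-cases can be written with $\bar\eta(-f(x_\alpha))$; and when $f(x_\alpha)=0$ the condition $E=0$ is simply $\tr(\alpha x_\beta)=0$, matching the wording of the statement. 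The case $\alpha\notin\mathrm{Im}(L_f)$ is handled the same way from Lemma~\ref{lem-s4}(II), with $f(x')$ in the role of $f(x_\alpha)$ and with Lemma~\ref{lem-D} describing exactly when $\beta\in\bigcup_{z\in\gf(p)^*}(z\alpha+\mathrm{Im}(L_f))$.

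I do not expect a genuine mathematical obstacle here: no new exponential-sum evaluation is required, and each of the roughly dozen entries of $S_5$ is a one-line substitution into Lemma~\ref{lem-A2} once the table of $S_4$-values is in hand. The only thing requiring care is the bookkeeping, namely (a) keeping the parity of $r_f$ matched to the right power of $p^*$ --- $(p^*)^{-r_f/2}$ is rational exactly when $r_f$ is even while $(p^*)^{-(r_f-1)/2}$ is rational exactly when $r_f$ is odd, and Lemma~\ref{lem-A2}(I) returns $0$ precisely when its exponent is odd, which is what makes several sub-cases collapse to $0$ --- and (b) splitting the defining conditions of the sub-cases ($f(x_\beta)=0$, $\tr(\alpha x_\beta)=0$, $E=0$, and membership of $\beta$ in the relevant coset union) exhaustively and consistently with the conditions already present in Lemma~\ref{lem-s4}. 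Organizing the write-up as one display per branch, in the same order as the statement, keeps this under control.
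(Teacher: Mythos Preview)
Your proposal is correct and follows exactly the paper's own approach: the paper's proof is the single sentence ``The desired conclusions then follow from Lemmas~\ref{lem-s4} and~\ref{lem-A2},'' and your write-up simply unpacks that sentence, noting $S_5=\sum_{y\in\gf(p)^*}\sigma_y(S_4)$ and feeding each branch of Lemma~\ref{lem-s4} through Lemma~\ref{lem-A2} with the appropriate parity of the exponent of $p^*$. Your additional observations (e.g.\ that $E=0$ with $f(x_\alpha)\neq0$ forces $\bar\eta(-f(x_\beta))=\bar\eta(-f(x_\alpha))$, and that for $f(x_\alpha)=0$ the product $\bar\eta(-f(x_\beta))\bar\eta(E)$ collapses to $\bar\eta(-1)$) are exactly the small simplifications needed to match the stated form of $S_5$, and they are correct.
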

\begin{proof}
The desired conclusions then follow from Lammas \ref{lem-s4} and \ref{lem-A2}.
\end{proof}

\begin{lemma}\label{lem-Nf3}
Let $f$ be a homogeneous quadratic function with the rank $r_f$ and the sign $\varepsilon_f$, $\alpha \in \gf(q)$, $\beta \in \gf(q)^*$ and
$$
N_{f,\beta}(\alpha)=\{x\in \gf(q):f(x)-\tr(\alpha x)=0 ~and~ \tr(\beta x)=0\}.
$$
Then we have the following.

(\uppercase\expandafter{\romannumeral1}) When $\alpha \in Im(L_f)$, we have the following four cases.
\begin{itemize}
  \item If $r_f$ is even and $f(x_\alpha)=0$, then

 \begin{eqnarray*}
N_{f,\beta}(\alpha)=
\left\{ \begin{array}{ll}
p^{m-2}+\varepsilon_f (p-1)p^{m-1}(p^*)^{-\frac{r_f}{2}}  & \mbox{ if $f(x_\beta)=0$ and $\tr(\alpha x_\beta)=0$}, \\
p^{m-2}                                                 & \mbox{ if $f(x_\beta)=0$ and $\tr(\alpha x_\beta)\neq 0$} \\
                                                  & \mbox{ or $f(x_\beta)\neq 0$ and $\tr(\alpha x_\beta)= 0$}, \\
p^{m-2}+\varepsilon_f \bar{\eta}(-1)p^{m-2}(p^*)^{-\frac{r_f-2}{2}} & \mbox{ if $f(x_\beta) \neq 0$ and $\tr(\alpha x_\beta)\neq 0$},\\
p^{m-2}+\varepsilon_f (p-1)p^{m-2}(p^*)^{-\frac{r_f}{2}} & \mbox{ if $\beta \not \in Im(L_f)$}.
\end{array}
\right.
\end{eqnarray*}

  \item If $r_f$ is even and $f(x_\alpha) \neq 0$, then
  \begin{eqnarray*}
N_{f,\beta}(\alpha)=
\left\{ \begin{array}{ll}
p^{m-2}-\varepsilon_f p^{m-1}(p^*)^{-\frac{r_f}{2}}  & \mbox{ if $f(x_\beta)=0$ and $\tr(\alpha x_\beta)=0$}, \\
p^{m-2}                                                 & \mbox{ if $f(x_\beta)=0$ and $\tr(\alpha x_\beta)\neq 0$} \\
                                                  & \mbox{ or $f(x_\beta)\neq 0$ and $E= 0$}, \\
p^{m-2}+\varepsilon_f \bar{\eta}(-f(x_\beta)E)p^{m-2}(p^*)^{-\frac{r_f-2}{2}} & \mbox{ if $f(x_\beta) \neq 0$ and $E \neq 0$},\\
p^{m-2}-\varepsilon_f p^{m-2}(p^*)^{-\frac{r_f}{2}} & \mbox{ if $\beta \not \in Im(L_f)$},
\end{array}
\right.
\end{eqnarray*}
where $E=-f(x_\alpha)+\frac{(\tr(\alpha x_\beta))^2}{4f(x_\beta)}$.
  \item If $r_f$ is odd and $f(x_\alpha)=0$, then
\begin{eqnarray*}
N_{f,\beta}(\alpha)=
\left\{ \begin{array}{ll}
p^{m-2}                                                 & \mbox{ if $f(x_\beta)=0$ or $\beta \not \in Im(L_f)$}, \\
p^{m-2}+\varepsilon_f \bar{\eta}(-f(x_\beta))(p-1)p^{m-2}(p^*)^{-\frac{r_f-1}{2}} & \mbox{ if $f(x_\beta) \neq 0$ and $\tr(\alpha x_\beta)= 0$},\\
p^{m-2}-\varepsilon_f \bar{\eta}(-f(x_\beta))p^{m-2}(p^*)^{-\frac{r_f-1}{2}} & \mbox{ if $f(x_\beta) \neq 0$ and $\tr(\alpha x_\beta)\neq 0$}.
\end{array}
\right.
\end{eqnarray*}
  \item If $r_f$ is odd and $f(x_\alpha) \neq 0$, then
\begin{eqnarray*}
N_{f,\beta}(\alpha)=
\left\{ \begin{array}{ll}
p^{m-2}+\varepsilon_f \bar{\eta}(-f(x_\alpha))p^{m-1}(p^*)^{-\frac{r_f-1}{2}}  & \mbox{ if $f(x_\beta)=\tr(\alpha x_\beta)=0$}, \\
p^{m-2}                                                                     & \mbox{ if $f(x_\beta)=0$ and $\tr(\alpha x_\beta)\neq 0$} \\
p^{m-2}+\varepsilon_f \bar{\eta}(-f(x_\alpha))(p-1)p^{m-2}(p^*)^{-\frac{r_f-1}{2}}   & \mbox{ if $f(x_\beta)\neq 0$ and $E= 0$}, \\
p^{m-2}-\varepsilon_f \bar{\eta}(-f(x_\beta))p^{m-2}(p^*)^{-\frac{r_f-1}{2}}        & \mbox{ if $f(x_\beta) \neq 0$ and $E \neq 0$},\\
p^{m-2}+\varepsilon_f \bar{\eta}(-f(x_\alpha))p^{m-2}(p^*)^{-\frac{r_f-1}{2}}  & \mbox{ if $\beta \not \in Im(L_f)$}.
\end{array}
\right.
\end{eqnarray*}
where $E=-f(x_\alpha)+\frac{(\tr(\alpha x_\beta))^2}{4f(x_\beta)}$.
\end{itemize}

(\uppercase\expandafter{\romannumeral2})  When $\alpha \not \in Im(L_f)$, we have the following two cases.
\begin{itemize}
  \item If $r_f$ is even, then
  $$
N_{f,\beta}(\alpha)=
\left\{ \begin{array}{ll}
p^{m-2}-\varepsilon_f p^{m-2}(p^*)^{-\frac{r_f}{2}}    & \mbox{ if $\beta \in \bigcup_{z\in \gf(p)^*}(z\alpha+\textmd{Im}(L_f))$ and $f(x') \neq 0$}, \\
p^{m-2}+(p-1)\varepsilon_f p^{m-2}(p^*)^{-\frac{r_f}{2}}    & \mbox{ if $\beta \in \bigcup_{z\in \gf(p)^*}(z\alpha+\textmd{Im}(L_f))$ and $f(x') = 0$}, \\
p^{m-2}                                                & \mbox{ otherwise },
\end{array}
\right.
$$
where $f(x')=-\frac{\alpha -\beta z_0}{2}$ with $\beta \in \frac{1}{z_0}\alpha +Im(L_f)$ and $z_0\in \gf(p)^*$.
\item If $r_f$ is odd, then
  $$
N_{f,\beta}(\alpha)=
\left\{ \begin{array}{ll}
p^{m-2}+\varepsilon_f \bar{\eta}(-f(x'))p^{m-2}(p^*)^{-\frac{r_f-1}{2}} & \mbox{ if $\beta \in \bigcup_{z\in \gf(p)^*}(z\alpha+\textmd{Im}(L_f))$ and $f(x') \neq 0$}, \\
p^{m-2}                                               & \mbox{ otherwise },
\end{array}
\right.
$$
where $f(x')=-\frac{\alpha -\beta z_0}{2}$ with $\beta \in \frac{1}{z_0}\alpha +Im(L_f)$ and $z_0\in \gf(p)^*$.
\end{itemize}
\end{lemma}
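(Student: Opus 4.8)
The plan is to mirror the proof of Lemma \ref{lem-Nf2}, reducing $N_{f,\beta}(\alpha)$ to the sum $S_5$ already evaluated in Lemma \ref{lem-trs4}. First I would use additive-character orthogonality over $\gf(p)$ twice: since $\frac1p\sum_{y\in\gf(p)}\zeta_p^{yu}$ detects $u=0$ for $u\in\gf(p)$,
$$
N_{f,\beta}(\alpha)=p^{-2}\sum_{x\in\gf(q)}\Bigl(\sum_{y\in\gf(p)}\zeta_p^{y(f(x)-\tr(\alpha x))}\Bigr)\Bigl(\sum_{z\in\gf(p)}\zeta_p^{-z\tr(\beta x)}\Bigr).
$$
Expanding the product and separating the term $y=0$ from the terms with $y\in\gf(p)^*$ splits this into two pieces.

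For the $y=0$ piece the first factor is $1$, and what remains is $p^{-2}\sum_{z\in\gf(p)}\sum_{x\in\gf(q)}\zeta_p^{-z\tr(\beta x)}=p^{-2}S_1=p^{m-2}$ by part~(\uppercase\expandafter{\romannumeral1}) of Lemma \ref{lem-ftr}, where we use $\beta\in\gf(q)^*$. For the $y\in\gf(p)^*$ piece, in the inner double sum over $x$ and $z$ I would make the substitution $z\mapsto -yz$, which is a bijection of $\gf(p)$ because $y\neq0$; the exponent then becomes $y\bigl(f(x)-\tr((\alpha-\beta z)x)\bigr)$, and since $\sigma_y$ is the ring automorphism of $Q(\zeta_p)$ with $\sigma_y(\zeta_p)=\zeta_p^{y}$ this piece equals
$$
p^{-2}\sum_{y\in\gf(p)^*}\sigma_y\Bigl(\sum_{z\in\gf(p)}\sum_{x\in\gf(q)}\zeta_p^{f(x)-\tr((\alpha-\beta z)x)}\Bigr)=p^{-2}S_5.
$$
Hence $N_{f,\beta}(\alpha)=p^{m-2}+p^{-2}S_5$.

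It then remains to substitute the value of $S_5$ from Lemma \ref{lem-trs4} into this identity, case by case, and simplify using $p^{-2}p^{m+1}=p^{m-1}$, $p^{-2}p^{m}=p^{m-2}$, and the corresponding reductions of the terms involving $(p^*)^{-\frac{r_f}{2}}$, $(p^*)^{-\frac{r_f-1}{2}}$ and $(p^*)^{-\frac{r_f-2}{2}}$. Every case distinction in the statement---on the parity of $r_f$, on whether $\alpha\in\textmd{Im}(L_f)$, on the values of $f(x_\alpha)$, $f(x_\beta)$, $\tr(\alpha x_\beta)$ and $E$, and on whether $\beta\in\bigcup_{z\in\gf(p)^*}(z\alpha+\textmd{Im}(L_f))$---is exactly the case distinction of Lemma \ref{lem-trs4}, so no new branching arises.

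The substantive step is the reduction $N_{f,\beta}(\alpha)=p^{m-2}+p^{-2}S_5$, and the one point inside it that needs care is the change of variable $z\mapsto -yz$: one must verify that, after this substitution, pulling the common scalar $y$ out of the entire exponent as the Galois action $\sigma_y$ genuinely turns the $y$-summand into $\sigma_y$ applied to the summand defining $S_5$ (in particular that the sign in $\alpha-\beta z$ comes out correctly). Everything after that is arithmetic bookkeeping, and I expect no obstacle beyond keeping the subcases aligned with those of Lemma \ref{lem-trs4}.
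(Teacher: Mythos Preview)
Your proposal is correct and follows essentially the same approach as the paper's own proof: the paper writes $N_{f,\beta}(\alpha)=p^{-2}\bigl(\sum_{z}\sum_{x}\zeta_p^{z\tr(\beta x)}+\sum_{y\in\gf(p)^*}\sigma_y(\sum_{z}\sum_{x}\zeta_p^{f(x)-\tr((\alpha-\beta z)x)})\bigr)$ and then invokes Lemma~\ref{lem-ftr}(\uppercase\expandafter{\romannumeral1}) for the first piece and Lemma~\ref{lem-trs4} for the second. Your only cosmetic difference is the choice of sign $\zeta_p^{-z\tr(\beta x)}$ and the corresponding substitution $z\mapsto -yz$, which is harmless since the sum is over all of $\gf(p)$.
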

\begin{proof}By definition, we have
\begin{eqnarray*}
N_{f,\beta}(\alpha)
&=& p^{-2} \sum_{x\in \gf(q)}(\sum_{y\in \gf(p)}\zeta_p^{y(f(x)-\tr(\alpha x))})(\sum_{z\in \gf(p)}\zeta_p^{z\tr(\beta x)})   \\
&=& p^{-2}\left(\sum_{z\in \gf(p)}\sum_{x\in \gf(q)}\zeta_p^{z\tr(\beta x)}+\sum_{y\in \gf(p)^*}\sigma_y\ (\sum_{z\in \gf(p)}\sum_{x\in \gf(q)}\zeta_p^{f(x)-\tr((\alpha-\beta z)x)})\right).
\end{eqnarray*}
The desired conclusion then follows from Lamma \ref{lem-trs4} and the result (\uppercase\expandafter{\romannumeral1}) of Lemma \ref{lem-ftr}.

This completes the proof.
\end{proof}

\begin{lemma}\label{lem-NE}
Let $f$ be a homogeneous quadratic function with the rank $r_f$ and the sign $\varepsilon_f$, $\alpha \in Im(L_f)$ and $x_\alpha \in \gf(q)$ with satisfying $L_f(x_\alpha)=-\frac{\alpha}{2}$.  Suppose that $f(x_\alpha)\neq 0$, we define
$$
S_6=\sum_{z\in \gf(p)}\sum_{w\in \gf(p)}\sum_{x\in \gf(q)}\zeta_p^{f(x)-\frac{1}{4f(x_\alpha)}z^2+w(z-\tr(\alpha x))}
$$
and
$$
N_{E}=\# \{x\in \gf(q):f(x)-\frac{1}{4f(x_\alpha)}(\tr(\alpha x))^2=0\}.
$$
Then we have the following:

(\uppercase\expandafter{\romannumeral1}) $S_6=\varepsilon_f \bar{\eta}(-f(x_\alpha))p^{m+1}(p^*)^{-\frac{r_f-1}{2}}$,

(\uppercase\expandafter{\romannumeral2})
$\sum_{y\in \gf(p)^*}\sigma_y(S_6)
=\left\{ \begin{array}{ll}
0                                              & \mbox{ if $r_f$ is even,} \\
\varepsilon_f \bar{\eta}(-f(x_\alpha))(p-1)p^{m+1}(p^*)^{-\frac{r_f-1}{2}}   & \mbox{ if $r_f$ is odd,}
\end{array}
\right.$

(\uppercase\expandafter{\romannumeral3})
$N_{E}
=\left\{ \begin{array}{ll}
p^{m-1}                                              & \mbox{ if $r_f$ is even,} \\
p^{m-1} + \varepsilon_f \bar{\eta}(-f(x_\alpha))(p-1)p^{m-1}(p^*)^{-\frac{r_f-1}{2}}   & \mbox{ if $r_f$ is odd.}
\end{array}
\right.$
\end{lemma}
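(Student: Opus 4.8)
The plan is to establish part (\uppercase\expandafter{\romannumeral1}) first and then deduce (\uppercase\expandafter{\romannumeral2}) and (\uppercase\expandafter{\romannumeral3}) as short formal consequences. Write $c=\frac{1}{4f(x_\alpha)}$, a well-defined nonzero element of $\gf(p)$ thanks to the hypothesis $f(x_\alpha)\neq 0$. Rearranging the triple sum so that $z$ is innermost,
$$
S_6=\sum_{w\in\gf(p)}\sum_{x\in\gf(q)}\zeta_p^{f(x)-w\tr(\alpha x)}\left(\sum_{z\in\gf(p)}\zeta_p^{-cz^2+wz}\right),
$$
I would evaluate the inner quadratic Gauss sum by Lemmas \ref{lem-32A1} and \ref{lem-32A2}, obtaining $\sum_{z\in\gf(p)}\zeta_p^{-cz^2+wz}=\zeta_p^{w^2 f(x_\alpha)}\,\bar\eta(-c)\sqrt{p^*}$, where the character value simplifies via $\bar\eta(-c)=\bar\eta(-1)\bar\eta(4f(x_\alpha))=\bar\eta(-f(x_\alpha))$ (using $\bar\eta(4)=1$ and $\bar\eta(u^{-1})=\bar\eta(u)$). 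Since $\alpha\in\mathrm{Im}(L_f)$ one has $w\alpha\in\mathrm{Im}(L_f)$ with $x_{w\alpha}=wx_\alpha$ and $f(wx_\alpha)=w^2f(x_\alpha)$, so Lemma \ref{lem-esf}(\uppercase\expandafter{\romannumeral2}) gives $\sum_{x\in\gf(q)}\zeta_p^{f(x)-\tr((w\alpha)x)}=\varepsilon_f p^m(p^*)^{-\frac{r_f}{2}}\zeta_p^{-w^2f(x_\alpha)}$ for every $w\in\gf(p)$ (the value at $w=0$ being Lemma \ref{lem-esf}(\uppercase\expandafter{\romannumeral1})). The two exponentials $\zeta_p^{\pm w^2f(x_\alpha)}$ cancel, the sum over $w\in\gf(p)$ contributes a factor $p$, and after merging $\sqrt{p^*}\,(p^*)^{-\frac{r_f}{2}}=(p^*)^{-\frac{r_f-1}{2}}$ one reads off $S_6=\varepsilon_f\bar\eta(-f(x_\alpha))p^{m+1}(p^*)^{-\frac{r_f-1}{2}}$.

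For part (\uppercase\expandafter{\romannumeral2}) I would use that the closed form just obtained is a rational multiple of $(p^*)^{-\frac{r_f-1}{2}}$. Applying $\sigma_y$ termwise and using that $\varepsilon_f$, $\bar\eta(-f(x_\alpha))$ and $p^{m+1}$ are fixed by every $\sigma_y$, one gets $\sum_{y\in\gf(p)^*}\sigma_y(S_6)=\varepsilon_f\bar\eta(-f(x_\alpha))p^{m+1}\sum_{y\in\gf(p)^*}\sigma_y\big((p^*)^{-\frac{r_f-1}{2}}\big)$; Lemma \ref{lem-A2}(\uppercase\expandafter{\romannumeral1}) with $r=r_f-1$ then gives $0$ when $r_f$ is even and $(p-1)(p^*)^{-\frac{r_f-1}{2}}$ when $r_f$ is odd, which is precisely the asserted dichotomy.

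For part (\uppercase\expandafter{\romannumeral3}) I would first observe that summing $S_6$ over $w$ before $z$ forces $z=\tr(\alpha x)$, so that $S_6=p\sum_{x\in\gf(q)}\zeta_p^{g(x)}$ with $g(x)=f(x)-\frac{1}{4f(x_\alpha)}(\tr(\alpha x))^2$; hence $\sum_{x\in\gf(q)}\zeta_p^{g(x)}=S_6/p$ is already known from part (\uppercase\expandafter{\romannumeral1}). Writing the point count as a character sum,
$$
N_E=\frac1p\sum_{x\in\gf(q)}\sum_{y\in\gf(p)}\zeta_p^{yg(x)}=p^{m-1}+\frac1p\sum_{y\in\gf(p)^*}\sigma_y\!\Big(\sum_{x\in\gf(q)}\zeta_p^{g(x)}\Big)=p^{m-1}+\frac1{p^2}\sum_{y\in\gf(p)^*}\sigma_y(S_6),
$$
and substituting the value of $\sum_{y\in\gf(p)^*}\sigma_y(S_6)$ from part (\uppercase\expandafter{\romannumeral2}) gives $N_E=p^{m-1}$ when $r_f$ is even and $N_E=p^{m-1}+\varepsilon_f\bar\eta(-f(x_\alpha))(p-1)p^{m-1}(p^*)^{-\frac{r_f-1}{2}}$ when $r_f$ is odd.

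The hard part will be part (\uppercase\expandafter{\romannumeral1}): one must check that the inner $z$-sum is a genuine (nondegenerate) quadratic Gauss sum, which is exactly where the hypothesis $f(x_\alpha)\neq0$ is used; track the quadratic-character simplification $\bar\eta\!\big(-\tfrac1{4f(x_\alpha)}\big)=\bar\eta(-f(x_\alpha))$; verify the identification $x_{w\alpha}=wx_\alpha$ so that Lemma \ref{lem-esf}(\uppercase\expandafter{\romannumeral2}) applies uniformly in $w$ (the degenerate case $w=0$ included); and correctly combine the factor $\sqrt{p^*}$ with $(p^*)^{-\frac{r_f}{2}}$. Once (\uppercase\expandafter{\romannumeral1}) is in place, (\uppercase\expandafter{\romannumeral2}) and (\uppercase\expandafter{\romannumeral3}) are routine applications of Lemma \ref{lem-A2} and the standard expansion of a solution count as an exponential sum.
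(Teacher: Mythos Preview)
Your proposal is correct and follows essentially the same path as the paper. The only cosmetic differences are: in part (\uppercase\expandafter{\romannumeral1}) the paper sums over $x$ first and then handles the remaining $(z,w)$-sum in one stroke via Lemma~\ref{lem-szw}, whereas you do the $z$-sum first as a one-variable Gauss sum and then the $x$- and $w$-sums; in part (\uppercase\expandafter{\romannumeral3}) the paper writes out the full indicator expansion with both auxiliary variables $z,w$ before invoking part (\uppercase\expandafter{\romannumeral2}), while your observation $S_6=p\sum_{x\in\gf(q)}\zeta_p^{g(x)}$ (by summing over $w$ first) makes the link to $N_E$ one line shorter. Both routes use the same lemmas and the same ideas.
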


\begin{proof} (\uppercase\expandafter{\romannumeral1}) By definition, we have
\begin{eqnarray*}
S_6
&=& \sum_{z\in \gf(p)}\sum_{w\in \gf(p)}\zeta_p^{-\frac{1}{4f(x_\alpha)}z^2+wz}
\sum_{x\in \gf(q)}\zeta_p^{f(x)-\tr(w \alpha x)}\\
&=& \varepsilon_f p^{m}(p^*)^{-\frac{r_f}{2}}\sum_{z\in \gf(p)}\sum_{w\in \gf(p)}\zeta_p^{-\frac{1}{4f(x_\alpha)}z^2+wz-f(x_\alpha)w^2}
~~~~~~(\mbox{By the result (\uppercase\expandafter{\romannumeral2}) of Lemma \ref{lem-esf}})\\
&=& \varepsilon_f \bar{\eta}(-f(x_\alpha))p^{m+1}(p^*)^{-\frac{r_f-1}{2}}.~~~~~~~~~~~~~~~~~~~~~~~~~~~~~(\mbox{By Lemma \ref{lem-szw}})
\end{eqnarray*}

(\uppercase\expandafter{\romannumeral2}) The desired conclusion then follows from Lemma \ref{lem-A2} and the result (\uppercase\expandafter{\romannumeral1}) of this Lemma.

(\uppercase\expandafter{\romannumeral3}) For any $x\in \gf(q)$, we have
\begin{eqnarray*}
& & p^{-2}\sum_{z\in \gf(p)}\left(\sum_{w\in \gf(p)}\zeta_p^{w(z-\tr(\alpha x))}\right)\left(\sum_{y\in \gf(p)}\zeta_p^{y(f(x)-\frac{1}{4f(x_\alpha)}z^2)}\right)\\
& & = \left\{ \begin{array}{ll}
1                                              & ~~~~~~~ ~~~~~~~ ~~~~~~~\mbox{ if $f(x)-\frac{1}{4f(x_\alpha)}(\tr(\alpha x))^2=0$,} \\
0                                              & ~~~~~~~ ~~~~~~~ ~~~~~~~\mbox{ otherwise.}
\end{array}
\right.
\end{eqnarray*}
Therefore,
\begin{eqnarray*}
 N_{E} &=& p^{-2}\sum_{x\in \gf(q)}\sum_{z\in \gf(p)}\left(\sum_{w\in \gf(p)}\zeta_p^{w(z-\tr(\alpha x))}\right)\left(\sum_{y\in \gf(p)}\zeta_p^{y(f(x)-\frac{1}{4f(x_\alpha)}z^2)}\right) \\
&=& p^{-2}\sum_{y\in \gf(p)}\sum_{z\in \gf(p)}\sum_{w\in \gf(p)}\sum_{x\in \gf(q)} \zeta_p^{y(f(x)-\frac{1}{4f(x_\alpha)}z^2)+w(z-\tr(\alpha x))} \\
&=& p^{-2}\sum_{z\in \gf(p)}\sum_{w\in \gf(p)}\sum_{x\in \gf(q)} \zeta_p^{w(z-\tr(\alpha x))} \\
& & + p^{-2}\sum_{y\in \gf(p)^*}\sigma_y\left(\sum_{z\in \gf(p)}\sum_{w\in \gf(p)}\sum_{x\in \gf(q)} \zeta_p^{f(x)-\frac{1}{4f(x_\alpha)}z^2+w(z-\tr(\alpha x))}\right).
\end{eqnarray*}
Note that
$$
\sum_{z\in \gf(p)}\sum_{w\in \gf(p)}\sum_{x\in \gf(q)} \zeta_p^{w(z-\tr(\alpha x))}=p^{m+1}.
$$
The desired conclusion then follows from the result (\uppercase\expandafter{\romannumeral2}) of this lemma.

This completes the proof.
\end{proof}

\begin{lemma}\label{lem-esq}
Let $f$ be a homogeneous quadratic function with the rank $r_f$ and the sign $\varepsilon_f$, $\alpha \in Im(L_f)$ and $x_\alpha \in \gf(q)$ with satisfying $L_f(x_\alpha)=-\frac{\alpha}{2}$. Let $f(x_\alpha) \neq 0$,
$$
g(x)=f(x)-\frac{(\tr(\alpha x))^2}{4f(x_\alpha)}
$$
and
$
N(g=t)=\#\{x\in \gf(q):g(x)=t \}
$
for any $t\in \gf(p)$. Then we have the following results.

(\uppercase\expandafter{\romannumeral1}) $\sum_{x\in \gf(q)}\zeta_p^{g(x)}=\varepsilon_f \bar{\eta}(-f(x_\alpha))p^m(p^*)^{-\frac{r_f-1}{2}}$.

(\uppercase\expandafter{\romannumeral2})
$
N(g=t)=
\left\{ \begin{array}{ll}
p^{m-1}                                                             & \mbox{ if $r_f$ is even and $t=0$},\\
p^{m-1}+\varepsilon_f \bar{\eta}(-t) \bar{\eta}(-f(x_\alpha)) p^{m-1}(p^*)^{-\frac{r_f-2}{2}}     & \mbox{ if $r_f$ is even and $t\neq 0$},\\
p^{m-1}+\varepsilon_f \bar{\eta}(-f(x_\alpha))(p-1) p^{m-1}(p^*)^{-\frac{r_f-1}{2}}      & \mbox{ if $r_f$ is odd and $t=0$}\\
p^{m-1}-\varepsilon_f \bar{\eta}(-f(x_\alpha))p^{m-1}(p^*)^{-\frac{r_f-1}{2}}      & \mbox{ if $r_f$ is odd and $t\neq 0$}.
\end{array}
\right.
$
\end{lemma}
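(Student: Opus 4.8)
The plan is to derive both parts directly from Lemma \ref{lem-NE} together with Lemma \ref{lem-A2}, the point being that $g(x)=f(x)-\frac{(\tr(\alpha x))^2}{4f(x_\alpha)}$ is exactly the exponent that appears inside $S_6$ once the innermost summation is performed.

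\textbf{Part (I).} In the definition of $S_6$ the variable $w$ runs over all of $\gf(p)$, so $\sum_{w\in\gf(p)}\zeta_p^{w(z-\tr(\alpha x))}$ equals $p$ when $z=\tr(\alpha x)$ and $0$ otherwise. Carrying out the $w$-sum and then the $z$-sum therefore collapses $S_6$ to $p\sum_{x\in\gf(q)}\zeta_p^{f(x)-\frac{1}{4f(x_\alpha)}(\tr(\alpha x))^2}=p\sum_{x\in\gf(q)}\zeta_p^{g(x)}$. Hence $\sum_{x\in\gf(q)}\zeta_p^{g(x)}=\frac{1}{p}S_6=\varepsilon_f\bar\eta(-f(x_\alpha))p^m(p^*)^{-\frac{r_f-1}{2}}$ by the result (\uppercase\expandafter{\romannumeral1}) of Lemma \ref{lem-NE}, which is precisely (\uppercase\expandafter{\romannumeral1}).

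\textbf{Part (II).} Next I would run the standard orthogonality count:
$$
N(g=t)=p^{-1}\sum_{x\in\gf(q)}\sum_{y\in\gf(p)}\zeta_p^{y(g(x)-t)}=p^{m-1}+p^{-1}\sum_{y\in\gf(p)^*}\sum_{x\in\gf(q)}\zeta_p^{y(g(x)-t)}.
$$
Since $g(x)\in\gf(p)$, for each $y\in\gf(p)^*$ one has $\sum_{x\in\gf(q)}\zeta_p^{y(g(x)-t)}=\sigma_y\!\left(\zeta_p^{-t}\sum_{x\in\gf(q)}\zeta_p^{g(x)}\right)$, and by Part (\uppercase\expandafter{\romannumeral1}) this equals $\varepsilon_f\bar\eta(-f(x_\alpha))p^m\,\sigma_y\!\left((p^*)^{-\frac{r_f-1}{2}}\zeta_p^{-t}\right)$, using that the signs $\varepsilon_f,\bar\eta(-f(x_\alpha))\in\{\pm1\}$ and $p^m$ are fixed by $\sigma_y$. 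Thus
$$
N(g=t)=p^{m-1}+\varepsilon_f\bar\eta(-f(x_\alpha))p^{m-1}\sum_{y\in\gf(p)^*}\sigma_y\!\left((p^*)^{-\frac{r_f-1}{2}}\zeta_p^{-t}\right).
$$
Now I would apply Lemma \ref{lem-A2} with $r=r_f-1$: its part (\uppercase\expandafter{\romannumeral1}) handles the case $t=0$ and its part (\uppercase\expandafter{\romannumeral2}) handles $t\in\gf(p)^*$ with $z=-t$. Splitting into the four cases according to the parity of $r_f$ and whether $t=0$, and simplifying the resulting power of $p^*$, yields the four displayed formulas.

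\textbf{Expected difficulty.} I do not foresee a genuine obstacle; the computation is short once Lemma \ref{lem-NE} is in hand. The only things to watch are the parity bookkeeping ("$r$ odd" in Lemma \ref{lem-A2} corresponds to $r_f$ even, and conversely) and the shift from $(p^*)^{-\frac{r_f-1}{2}}$ to $(p^*)^{-\frac{r_f-2}{2}}$ produced by the quadratic-character case of that lemma. It is also worth remarking that $g$ is not itself a homogeneous quadratic function of the shape (\ref{def-f}), so Lemma \ref{lem-gt} cannot be invoked for it verbatim; passing through the explicit exponential sum of Part (\uppercase\expandafter{\romannumeral1}) is what makes the argument go through.
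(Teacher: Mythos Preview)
Your argument is correct. Part (\uppercase\expandafter{\romannumeral1}) is exactly the paper's proof: collapse the $w$-sum in $S_6$ to the indicator of $z=\tr(\alpha x)$, obtain $S_6=p\sum_{x}\zeta_p^{g(x)}$, and invoke Lemma~\ref{lem-NE}(\uppercase\expandafter{\romannumeral1}).

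For Part (\uppercase\expandafter{\romannumeral2}) you take a slightly different route. The paper observes that the identity in Part (\uppercase\expandafter{\romannumeral1}), when matched against Lemma~\ref{lem-esf}(\uppercase\expandafter{\romannumeral1}), forces $r_g=r_f-1$ and $\varepsilon_g=\varepsilon_f\bar\eta(-f(x_\alpha))$, and then simply applies Lemma~\ref{lem-gt} for $t\neq 0$ and Lemma~\ref{lem-NE}(\uppercase\expandafter{\romannumeral3}) for $t=0$. Your approach instead inlines the orthogonality computation and finishes with Lemma~\ref{lem-A2}; this is essentially reproving Lemma~\ref{lem-gt} in situ. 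Both are short, and yours has the mild advantage of not needing to identify $r_g$ and $\varepsilon_g$ separately.

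One correction to your closing remark: $g$ \emph{is} of the shape~(\ref{def-f}). Since $\tr(\alpha x)\in\gf(p)$ one has
\[
(\tr(\alpha x))^2=\tr\!\bigl(\alpha x\cdot\tr(\alpha x)\bigr)=\sum_{j=0}^{m-1}\tr\!\bigl(\alpha^{p^j+1}x^{p^j+1}\bigr),
\]
so subtracting $\frac{1}{4f(x_\alpha)}(\tr(\alpha x))^2$ from $f$ keeps the function in the class~(\ref{def-f}). Hence Lemma~\ref{lem-gt} does apply verbatim, which is why the paper can invoke it. This does not affect the validity of your direct argument.
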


\begin{proof}
(\uppercase\expandafter{\romannumeral1}) By definition, we have
\begin{eqnarray*}
\sum_{x\in \gf(q)}\zeta_p^{g(x)}
&=& \sum_{x\in \gf(q)}\zeta_p^{f(x)-\frac{(\tr(\alpha x))^2}{4f(x_\alpha)}}\\
&=& \sum_{z\in \gf(p)}\left(\sum_{x\in \gf(q),\tr(x)=z}\zeta_p^{f(x)-\frac{z^2}{4f(x_\alpha)}}\right) \\
&=& \sum_{z\in \gf(p)}\left(\sum_{x\in \gf(q),\tr(x)=z}\zeta_p^{f(x)-\frac{z^2}{4f(x_\alpha)}}(p^{-1}\sum_{w\in \gf(p)}\zeta_p^{w(z-\tr(\alpha x))})\right) \\
&=& p^{-1}\sum_{z\in \gf(p)}\sum_{w\in \gf(p)}\sum_{x\in \gf(q)}\zeta_p^{f(x)-\frac{z^2}{4f(x_\alpha)}+w(z-\tr(\alpha x))} \\
&=& \varepsilon_f \bar{\eta}(-f(x_\alpha))p^m(p^*)^{-\frac{r_f-1}{2}},
\end{eqnarray*}
where the last identity follows from the result (\uppercase\expandafter{\romannumeral1}) of Lemma \ref{lem-NE}.

(\uppercase\expandafter{\romannumeral2}) By the result (\uppercase\expandafter{\romannumeral1}) of this lemma, it is clear that the rank of $g(x)$ is $r_g=r_f-1$ and the sign of $g(x)$ is $\varepsilon_g=\varepsilon_f \bar{\eta}(-f(x_\alpha))$. Thus the desired conclusion (\uppercase\expandafter{\romannumeral2}) then follows from Lemmas \ref{lem-gt} and \ref{lem-NE}.

This completes the proof.
\end{proof}

\begin{lemma}\label{lem-ea}
Let $f$ be a homogeneous quadratic function with the rank $r_f$ and the sign $\varepsilon_f$, $\alpha \in Im(L_f)$ and $x_\alpha \in \gf(q)$ with satisfying $L_f(x_\alpha)=-\frac{\alpha}{2}$. Let $f(x_\alpha) \neq 0$,
$$
g(x)=f(x)-\frac{(\tr(\alpha x))^2}{4f(x_\alpha)}
$$
and
$$
E=-f(x_\alpha)-\frac{(\tr(\alpha x))^2}{4f(x)}.
$$
When $r_f$ is even, we define
\begin{eqnarray*}
& & I_1=\#\{x\in \gf(q):f(x)=\tr(\alpha x)=0 \}\\
& & I_2=\#\left\{\{x\in \gf(q):f(x)=0 ~and~ \tr(\alpha x)\neq 0 \} \bigcup \{x\in \gf(q):f(x)\neq 0 ~and~ E=0 \}\right\} \\
& &I_3=\#\{x\in \gf(q):f(x)\neq 0, E\neq 0 ~and~ f(x)\cdot E \in \rm{NSQ} \}\\
& &I_4=\#\{x\in \gf(q):f(x)\neq 0, E\neq 0 ~and~ f(x)\cdot E \in \rm{SQ} \}.
\end{eqnarray*}
When $r_f$ is odd, we define
\begin{eqnarray*}
& & J_1=\#\{x\in \gf(q):f(x)\neq 0,\bar{\eta}(f(x))=\bar{\eta}(f(x_\alpha)) ~and~ E=0 \}\\
& & J_2=\#\{x\in \gf(q):f(x)\neq 0~and~\bar{\eta}(f(x))=\bar{\eta}(f(x_\alpha))\}\\
& & J_3=\#\{x\in \gf(q):f(x)=\tr(\alpha x)=0 \}\\
& & J_4=\#\{x\in \gf(q):f(x)=0 ~and~\tr(\alpha x)\neq0 \}\\
& & J_5=\#\{x\in \gf(q):f(x)\neq 0 ~and~ E=0 \}\\
& & J_6=\#\{x\in \gf(q):f(x)\neq 0, E\neq0 ~and~\bar{\eta}(f(x))=-\bar{\eta}(f(x_\alpha))\}.
\end{eqnarray*}
Then we have the following results.

(\uppercase\expandafter{\romannumeral1}) If $r_f$ is even, then
\begin{eqnarray}
& & I_1=p^{m-2},\label{eqn-a1}\\
& & I_2=(p-1)p^{m-2}(2+\varepsilon_f \cdot p(p^*)^{-\frac{r_f}{2}}), \label{eqn-a2} \\
& & I_3=
\frac{p-1}{2}p^{m-1}(1-\varepsilon_f \cdot p(p^*)^{-\frac{r_f}{2}})           \label{eqn-a3} \\
& & I_4=
\frac{(p-1)(p-2)}{2}p^{m-2}(1+\varepsilon_f \cdot p(p^*)^{-\frac{r_f}{2}}). \label{eqn-a4}
\end{eqnarray}

(\uppercase\expandafter{\romannumeral2})
If $r_f$ is odd, then
\begin{eqnarray}
& & J_1=(p-1)p^{m-2}(1+\varepsilon_f \bar{\eta}(-f(x_\alpha))(p-1)(p^*)^{-\frac{r_f-1}{2}}), \label{eqn-aa1} \\
& & J_2=\frac{(p-1)(p-2)}{2}p^{m-2}(1-\varepsilon_f \bar{\eta}(-f(x_\alpha))(p^*)^{-\frac{r_f-1}{2}}), \label{eqn-aa2} \\
& & J_3= p^{m-2}+\varepsilon_f \bar{\eta}(-f(x_\alpha))(p-1) p^{m-2}(p^*)^{-\frac{r_f-1}{2}}, \label{eqn-aa3} \\
& & J_4=(p-1)p^{m-2}(1-\varepsilon_f \bar{\eta}(-f(x_\alpha))(p^*)^{-\frac{r_f-1}{2}}), \label{eqn-aa4} \\
& & J_5= (p-1)p^{m-2}(1+\varepsilon_f \bar{\eta}(-f(x_\alpha))(p-1)(p^*)^{-\frac{r_f-1}{2}}),  \label{eqn-aa5} \\
& & J_6=\frac{p-1}{2}p^{m-1}(1-\varepsilon_f \bar{\eta}(-f(x_\alpha))(p^*)^{-\frac{r_f-1}{2}}).\label{eqn-aa6}
\end{eqnarray}
\end{lemma}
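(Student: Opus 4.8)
The plan is to express each of the ten counts in terms of data that is already at hand: the value distributions of the two homogeneous quadratic functions $f$ and $g(x)=f(x)-\frac{(\tr(\alpha x))^2}{4f(x_\alpha)}$, together with the single joint count $N_{f,\alpha}:=\#\{x\in\gf(q):f(x)=\tr(\alpha x)=0\}$.

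First I would assemble the preparatory facts. Applying Equation (\ref{eqn-flf}) with $x=y=x_\alpha$ gives $\tr(L_f(x_\alpha)x_\alpha)=f(x_\alpha)$, hence $\tr(\alpha x_\alpha)=-2f(x_\alpha)\neq 0$; in particular $\alpha\in\gf(q)^*$, so Lemma \ref{lem-Nf2} applies with $\beta=\alpha$ and produces $N_{f,\alpha}$. By Lemma \ref{lem-esq}, $g$ is a homogeneous quadratic function of rank $r_f-1$ and sign $\varepsilon_f\bar\eta(-f(x_\alpha))$, so all of its values $N(g=t)$ are known (and $N(g=0)$ is also given by Lemma \ref{lem-NE}(III)); likewise $N(f=0)$ follows from Lemma \ref{lem-gt} and $q=\sum_tN(f=t)$. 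Finally I would record the elementary identity $f(x_\alpha)\,g(x)=-f(x)\,E$ (clear the denominator $4f(x)f(x_\alpha)$ and expand), which on $\{f(x)\neq0\}$ yields $E=0\Leftrightarrow g(x)=0$ and $\bar\eta(f(x)E)=\bar\eta(-f(x_\alpha))\,\bar\eta(g(x))$, together with the complementary observation that on $\{f(x)=0\}$ one has $g(x)=-\frac{(\tr(\alpha x))^2}{4f(x_\alpha)}$, so there $g(x)=0\Leftrightarrow\tr(\alpha x)=0$ while otherwise $\bar\eta(g(x))=\bar\eta(-f(x_\alpha))$.

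With these in place each target value becomes a short reckoning. For $r_f$ even: $I_1=N_{f,\alpha}$; the two sets defining $I_2$ are disjoint (according as $f(x)=0$ or not), so, rewriting $\{f\neq0,E=0\}=\{f\neq0,g=0\}$ by the identity, $I_2=\big(N(f=0)-N_{f,\alpha}\big)+\big(N(g=0)-N_{f,\alpha}\big)$; since the sign $-\bar\eta(-f(x_\alpha))$ is never attained by $\bar\eta(g(x))$ on $\{f=0\}$, we get $I_3=\#\{x:\bar\eta(g(x))=-\bar\eta(-f(x_\alpha))\}$, whereas $I_4=\#\{x:\bar\eta(g(x))=\bar\eta(-f(x_\alpha))\}-\big(N(f=0)-N_{f,\alpha}\big)$. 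Each quantity $\#\{x:\bar\eta(g(x))=\delta\}=\sum_{\bar\eta(t)=\delta}N(g=t)$ is evaluated with $\sum_{\bar\eta(t)=\delta}\bar\eta(-t)=\frac{p-1}{2}\bar\eta(-1)\delta$ and $p^*=\bar\eta(-1)p$ (so that $(p^*)^{-\frac{r_f-2}{2}}=\bar\eta(-1)\,p\,(p^*)^{-\frac{r_f}{2}}$), which collapses everything to the stated closed forms. For $r_f$ odd: $J_3=N_{f,\alpha}$ and $J_4=N(f=0)-N_{f,\alpha}$; since $f(x)\neq0$ and $E=0$ (i.e.\ $g(x)=0$) force $\bar\eta(f(x))=\bar\eta(f(x_\alpha))$ and $\tr(\alpha x)\neq0$, we have $J_1=J_5=N(g=0)-N_{f,\alpha}$; since $E\neq0$ is automatic once $\bar\eta(f(x))=-\bar\eta(f(x_\alpha))$, we have $J_6=\#\{x:\bar\eta(f(x))=-\bar\eta(f(x_\alpha))\}$, and $J_2=\#\{x:\bar\eta(f(x))=\bar\eta(f(x_\alpha))\}-J_1$, the two character counts being read off from $\sum_{\bar\eta(t)=\delta}N(f=t)$ exactly as before.

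All the conceptual content sits in the second paragraph; the rest is arithmetic. The step most prone to error — and hence the one I would carry out most carefully — is the bookkeeping of the quadratic-character sums, in particular the repeated substitutions $\bar\eta(-1)^2=1$ and $p^*=\bar\eta(-1)p$ needed to reconcile the various powers of $p^*$ (such as $(p^*)^{-\frac{r_f-2}{2}}$ versus $(p^*)^{-\frac{r_f}{2}}$), combined with keeping the degenerate locus $\{f(x)=0\}$ (on which $g$ and the character conditions behave differently) correctly accounted for in $I_2$, $I_3$, $I_4$, $J_1$, $J_4$, $J_5$. No individual step is deep, but each of the ten formulas must be checked down to the last sign.
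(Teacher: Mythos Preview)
Your proposal is correct and follows essentially the same route as the paper: both arguments hinge on the identity $-\frac{f(x)E}{f(x_\alpha)}=g(x)$ (your $f(x_\alpha)\,g(x)=-f(x)E$) to translate the conditions on $E$ into conditions on $g$, and then read everything off from $N_{f,\alpha}$ together with the value distributions of $f$ and $g$ supplied by Lemmas \ref{lem-gt}, \ref{lem-Nf2}, \ref{lem-NE}, \ref{lem-esq}. Your treatment of $I_3$ and $I_4$ via a single quadratic-character computation is slightly tidier than the paper's case split on $\bar\eta(-f(x_\alpha))$, but the substance is identical.
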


\begin{proof}
(\uppercase\expandafter{\romannumeral1}) If $r_f$ is even, then we have the following.
\begin{itemize}
  \item It is clear that Equation (\ref{eqn-a1}) follows directly from Lemma \ref{lem-Nf2}.
  \item By definition, we have
\begin{eqnarray*}
I_2 &=& \# \{x\in \gf(q):f(x)\neq 0 ~and~ g(x)=0 \} + \# \{x\in \gf(q):f(x)=0 ~and~ \tr(\alpha x)\neq 0 \} \\
&=&  \# \{x\in \gf(q):g(x)=0 \}-\# \{x\in \gf(q):f(x)=0 ~and~ \tr(\alpha x)=0 \}\\
& & + \# \{x\in \gf(q):f(x)=0 ~and~ \tr(\alpha x)\neq 0 \} \\
&=&  \# \{x\in \gf(q):g(x)=0 \}+\# \{x\in \gf(q):f(x)=0 \}-2\# \{x\in \gf(q):f(x)=\tr(\alpha x)=0 \}.
\end{eqnarray*}
Then Equation (\ref{eqn-a2}) follows from Lemmas \ref{lem-esq}, \ref{lem-Nf1} and \ref{lem-Nf2}.
  \item In Equations (\ref{eqn-a3}) and (\ref{eqn-a4}), we only give the proof for the case $-f(x_\alpha)\in \rm{SQ}$ and omit the proof for the case $-f(x_\alpha)\in \rm{NSQ}$ whose proof is similar.
  Suppose that $-f(x_\alpha)\in \rm{SQ}$, by definition and
  $$
  -\frac{f(x)E}{f(x_\alpha)}=f(x)-\frac{(\tr(\alpha x))^2}{4f(x_\alpha)}=g(x)
  $$
  we get
  \begin{eqnarray*}
  I_3 &=& \#\{x\in \gf(q):f(x)\neq 0~and~ g(x)\in \rm{NSQ} \} \\
  &=& \#\{x\in \gf(q):g(x)\in \rm{NSQ} \}-\#\{x\in \gf(q):f(x)=0~and~ g(x)\in \rm{NSQ} \}  \\
  &=& \#\{x\in \gf(q):g(x)\in \rm{NSQ} \}-\#\{x\in \gf(q):f(x)=0~and~ (\tr(\alpha x))^2 \in \rm{NSQ} \} \\
  &=& \#\{x\in \gf(q):g(x)\in \rm{NSQ} \} \\
  &=& \frac{p-1}{2}p^{m-1}(1-\varepsilon_f \cdot p(p^*)^{-\frac{r_f}{2}}),
  \end{eqnarray*}
where the last equation follows from Lemma \ref{lem-esq}.
This means that the first equation of (\ref{eqn-a3}) follows. Similarly, when $-f(x_\alpha)\in \rm{SQ}$, Equation (\ref{eqn-a4}) follows from (\ref{eqn-a1}) and (\ref{eqn-a2}).
\end{itemize}

(\uppercase\expandafter{\romannumeral2}) If $r_f$ is odd, then we give the proofs of the desired conclusions as follows.
\begin{itemize}
  \item Since
  $$
  -\frac{E}{4f(x_\alpha)}=\frac{g(x)}{4f(x)},
  $$
  we have
  \begin{eqnarray*}
  J_1
  &=& \#\{x\in \gf(q):f(x)\neq 0,\bar{\eta}(f(x))=\bar{\eta}(f(x_\alpha)) ~and~ E=0 \} \\
  &=& \#\{x\in \gf(q):g(x)=0 ~and~f(x)\neq 0 \} \\
  &=& \#\{x\in \gf(q):g(x)=0 \}- \#\{x\in \gf(q):g(x)=f(x)=0 \}  \\
  &=&(p-1)p^{m-2}(1+\varepsilon_f \bar{\eta}(-f(x_\alpha))(p-1)(p^*)^{-\frac{r_f-1}{2}}),
  \end{eqnarray*}
  where the last equation follows from Lemmas \ref{lem-esq} and \ref{lem-Nf2}. This means that Equation (\ref{eqn-aa1}) follows.
  \item By definition, we have
  \begin{eqnarray*}
  J_2
  &=& \#\{x\in \gf(q):f(x)\neq 0 ~and~\bar{\eta}(f(x))=\bar{\eta}(f(x_\alpha)) \} \\
  & & -\#\{x\in \gf(q):f(x)\neq 0,\bar{\eta}(f(x))=\bar{\eta}(f(x_\alpha)) ~and~ E=0 \} \\
  &=& \#\{x\in \gf(q):f(x)\neq 0 ~and~\bar{\eta}(f(x))=\bar{\eta}(f(x_\alpha)) \}-\#\{x\in \gf(q):E=0 ~and~f(x)\neq 0 \}.
  \end{eqnarray*}
  Then Equation (\ref{eqn-aa2}) follows from Lemma \ref{lem-gt} and (\ref{eqn-aa1}).
  \item Equation (\ref{eqn-aa3}) follows directly from Lemma \ref{lem-Nf2}.
  \item By definition, we have
  $$
  J_4=\#\{x\in \gf(q):f(x)=0 \}- \#\{x\in \gf(q):f(x)=\tr(\alpha x)=0 \}.
  $$
  The desired conclusion in (\ref{eqn-aa4}) then follows from Lemma \ref{lem-Nf1} and Equation (\ref{eqn-aa3}).
  \item Note that
  $$
  -\frac{E}{4f(x_\alpha)}=\frac{g(x)}{4f(x)}.
  $$
  Therefore, we have
  \begin{eqnarray*}
  J_5 &=& \#\{x\in \gf(q):f(x)\neq 0 ~and~ g(x)=0 \}\\
  &=& \#\{x\in \gf(q):g(x)=0 \}-\#\{x\in \gf(q):f(x)=\tr(\alpha x)=0 \}.
  \end{eqnarray*}
  The desired conclusion in (\ref{eqn-aa5}) then follows from Lemma \ref{lem-esq} and Equation (\ref{eqn-aa3}).
  \item The desired conclusion in (\ref{eqn-aa6}) then follows directly from (\ref{eqn-aa2}), (\ref{eqn-aa3}), (\ref{eqn-aa4}) and (\ref{eqn-aa5}).
\end{itemize}
This completes the proof of this lemma.
\end{proof}

\begin{lemma}\label{lem-fsqto}
Let $f$ be a homogeneous quadratic function with the rank $r_f$ and the sign $\varepsilon_f$, $\alpha \in Im(L_f)$ and $x_\alpha \in \gf(q)$ with satisfying $L_f(x_\alpha)=-\frac{\alpha}{2}$ and $f(x_\alpha)=0$. Then
\begin{itemize}
  \item $\#\{x\in \gf(q):f(x)\neq 0, \tr(\alpha x)=0 ~and~ -f(x)\in \rm{SQ} \}=\frac{p-1}{2}p^{m-2}(1+\varepsilon_f \cdot p(p^*)^{-\frac{r_f-1}{2}})$,
  \item $\#\{x\in \gf(q):f(x)\neq 0, \tr(\alpha x)=0 ~and~ -f(x)\in \rm{NSQ} \}=\frac{p-1}{2}p^{m-2}(1-\varepsilon_f \cdot p(p^*)^{-\frac{r_f-1}{2}})$,
  \item $\#\{x\in \gf(q):f(x)\tr(\alpha x)\neq 0 ~and~ -f(x)\in \rm{SQ} \}=\frac{(p-1)^2}{2}p^{m-2}$,
  \item $\#\{x\in \gf(q):f(x)\tr(\alpha x)\neq 0 ~and~ -f(x)\in \rm{NSQ} \}=\frac{(p-1)^2}{2}p^{m-2}$.
\end{itemize}
\end{lemma}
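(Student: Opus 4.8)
The plan is to compute all four counts by slicing the relevant set of $x\in\gf(q)$ according to the value $a=f(x)\in\gf(p)^*$ and then invoking Lemmas~\ref{lem-fato} and~\ref{lem-gt} on each fibre. For $a\in\gf(p)^*$ set $A(a)=\#\{x\in\gf(q):f(x)=a,\ \tr(\alpha x)=0\}$ and $N(a)=\#\{x\in\gf(q):f(x)=a\}$. The standing hypotheses here ($\alpha\in\mathrm{Im}(L_f)$, $L_f(x_\alpha)=-\tfrac{\alpha}{2}$, $f(x_\alpha)=0$) are precisely those of Lemma~\ref{lem-fato}, which gives $A(a)=p^{m-2}+\varepsilon_f\bar{\eta}(-a)p^{m-1}(p^*)^{-\frac{r_f-1}{2}}$; and Lemma~\ref{lem-gt} applied to $g=f$ (of rank $r_f$ and sign $\varepsilon_f$, in the odd-rank form that matches the stated formulas) gives $N(a)=p^{m-1}+\varepsilon_f\bar{\eta}(-a)p^{m-1}(p^*)^{-\frac{r_f-1}{2}}$. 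Finally note that each of the sets $\{a\in\gf(p)^*:-a\in\mathrm{SQ}\}$ and $\{a\in\gf(p)^*:-a\in\mathrm{NSQ}\}$ has $\tfrac{p-1}{2}$ elements, and that $\bar{\eta}(-a)$ is identically $1$ on the first and identically $-1$ on the second.

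For the first two identities I would write the set in question as a disjoint union over the allowed values of $a$: for instance $\{x:f(x)\neq0,\ \tr(\alpha x)=0,\ -f(x)\in\mathrm{SQ}\}=\bigsqcup_{a:\,-a\in\mathrm{SQ}}\{x:f(x)=a,\ \tr(\alpha x)=0\}$, so its size is $\sum_{a:\,-a\in\mathrm{SQ}}A(a)$. On this index set $\bar{\eta}(-a)=1$, so each summand equals $p^{m-2}+\varepsilon_f p^{m-1}(p^*)^{-\frac{r_f-1}{2}}$, and summing $\tfrac{p-1}{2}$ copies yields $\tfrac{p-1}{2}p^{m-2}\bigl(1+\varepsilon_f\,p\,(p^*)^{-\frac{r_f-1}{2}}\bigr)$. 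The $\mathrm{NSQ}$ count is obtained the same way with $\bar{\eta}(-a)=-1$.

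For the last two identities the fibre to be counted is $\{x:f(x)=a,\ \tr(\alpha x)\neq0\}$, whose cardinality is $N(a)-A(a)$. The crucial point is that the Gauss-sum contributions to $N(a)$ and to $A(a)$ are literally the same term $\varepsilon_f\bar{\eta}(-a)p^{m-1}(p^*)^{-\frac{r_f-1}{2}}$, so they cancel and $N(a)-A(a)=p^{m-1}-p^{m-2}=(p-1)p^{m-2}$, independently of $a$. Hence $\#\{x:f(x)\tr(\alpha x)\neq0,\ -f(x)\in\mathrm{SQ}\}=\sum_{a:\,-a\in\mathrm{SQ}}\bigl(N(a)-A(a)\bigr)=\tfrac{p-1}{2}\cdot(p-1)p^{m-2}=\tfrac{(p-1)^2}{2}p^{m-2}$, and the $\mathrm{NSQ}$ count comes out identically.

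I do not expect a real obstacle: the argument is just a bookkeeping layer over Lemmas~\ref{lem-fato} and~\ref{lem-gt}. The two places that deserve a moment's care are (i) the exact cancellation in $N(a)-A(a)$, which is what forces the last two counts to be free of $\varepsilon_f$ and hence equal, and (ii) verifying that $\bar{\eta}(-a)$ really is constant on each half of $\gf(p)^*$, so that the fibrewise sums collapse to $\tfrac{p-1}{2}$ identical terms and give the closed forms claimed.
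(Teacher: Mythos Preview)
Your proposal is correct and follows essentially the same route as the paper, which simply says the conclusions follow from Lemma~\ref{lem-fato}; you make this explicit by summing $A(a)$ over the appropriate half of $\gf(p)^*$ and, for the last two counts, additionally invoking Lemma~\ref{lem-gt} to obtain $N(a)$ so that the Gauss-sum terms cancel in $N(a)-A(a)$. The only implicit hypothesis you (and the paper) are using is that $r_f$ is odd, which is forced by the shape of the formulas and is exactly the case in which Lemma~\ref{lem-fsqto} is applied.
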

\begin{proof}
The desired conclusions then follow from Lemma \ref{lem-fato}.
\end{proof}

\subsection{Main results and their proofs}

The following two theorems are the main results of this paper.
\begin{theorem}\label{thm-main1}
Let $f$ be a homogeneous quadratic function with the rank $r_f$ and the sign $\varepsilon_f$, $\alpha \in Im(L_f)$ and $x_\alpha \in \gf(q)$ with satisfying $L_f(x_\alpha)=-\frac{\alpha}{2}$. Let $D$ be defined in (\ref{eqn-defsetD}). Then the set $\C_D$ of (\ref{eqn-maincode}) is a $[n,m]$ linear code over $\gf(p)$ with the weight distribution in Tables \ref{tab-even1}, \ref{tab-even0}, \ref{tab-odd1} and \ref{tab-odd0}, where
\begin{eqnarray}\label{eqn-codelenth1}
n=
\left\{ \begin{array}{ll}
p^{m-1}(1-\varepsilon_f(p^*)^{-\frac{r_f}{2}})-1               & \mbox{ if $r_f$ is even and $f(x_\alpha) \neq 0$,} \\
p^{m-1}(1+\varepsilon_f(p-1)(p^*)^{-\frac{r_f}{2}})-1                          & \mbox{ if $r_f$ is even and $f(x_\alpha)=0$,} \\
p^{m-1}(1+\varepsilon_f \bar{\eta}(-f(x_\alpha))(p^*)^{-\frac{r_f-1}{2}})-1               & \mbox{ if $r_f$ is odd and $f(x_\alpha) \neq 0$,} \\
p^{m-1}-1                 & \mbox{ if $r_f$ is odd and $f(x_\alpha) = 0$.}
\end{array}
\right.
\end{eqnarray}

\end{theorem}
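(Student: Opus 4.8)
The plan is to reduce everything to the point‑count functions of Section~\ref{sec-pr}. For $\beta\in\gf(q)$ write $\bc_\beta=(\tr(\beta d_1),\dots,\tr(\beta d_n))$ for the corresponding codeword of $\C_D$. Since $0\notin D$ while $x=0$ satisfies both $f(x)-\tr(\alpha x)=0$ and $\tr(\beta x)=0$, we have $n=\#D=N_f(\alpha)-1$ and
\[
\wt(\bc_\beta)=n-\#\{i:\tr(\beta d_i)=0\}=n-\bigl(N_{f,\beta}(\alpha)-1\bigr)=N_f(\alpha)-N_{f,\beta}(\alpha).
\]
The length formula \eqref{eqn-codelenth1} is then precisely the four subcases of Lemma~\ref{lem-Nf1}(\uppercase\expandafter{\romannumeral2}) substituted into $n=N_f(\alpha)-1$. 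For the dimension and the weights I would feed Lemma~\ref{lem-Nf1} and Lemma~\ref{lem-Nf3} into the displayed identity: a direct case comparison (using $p^*=\bar\eta(-1)p$ and $(p^*)^{-\frac{r_f-2}{2}}=\bar\eta(-1)p\,(p^*)^{-\frac{r_f}{2}}$ to tidy up the powers) shows that $N_f(\alpha)-N_{f,\beta}(\alpha)>0$ for every $\beta\in\gf(q)^*$; hence the evaluation map $\beta\mapsto\bc_\beta$ is injective, so $\dim_{\gf(p)}\C_D=m$, and the nonzero weights of $\C_D$ are exactly the finitely many values just produced.

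It then remains to count, for each weight $w$, the number $A_w$ of $\beta\in\gf(q)^*$ with $N_f(\alpha)-N_{f,\beta}(\alpha)=w$. The key device is that $L_f$ is $\gf(p)$‑linear with $\dim\textmd{Im}(L_f)=r_f$ and $\dim\textmd{Ker}(L_f)=m-r_f$, so $y\mapsto-2L_f(y)$ maps $\gf(q)$ onto $\textmd{Im}(L_f)$ with every fibre a coset of $\textmd{Ker}(L_f)$; moreover, by \eqref{eqn-flf} together with $f|_{\textmd{Ker}(L_f)}=0$ and $\alpha\in\textmd{Im}(L_f)$, the quantities $f(x_\beta)$, $\tr(\alpha x_\beta)$, and hence $E=-f(x_\alpha)+\tfrac{(\tr(\alpha x_\beta))^2}{4f(x_\beta)}$, are constant on each fibre. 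Consequently the number of $\beta\in\textmd{Im}(L_f)$ for which $(f(x_\beta),\tr(\alpha x_\beta),E)$ meets a prescribed condition equals $p^{-(m-r_f)}$ times the number of $y\in\gf(q)$ for which $(f(y),\tr(\alpha y),\cdot)$ meets the same condition, and these $y$‑counts are precisely what Lemmas~\ref{lem-fato}, \ref{lem-Nf1}, \ref{lem-Nf2}, \ref{lem-NE}, \ref{lem-esq}, \ref{lem-ea} and \ref{lem-fsqto} supply. The remaining $\beta\notin\textmd{Im}(L_f)$ are $p^m-p^{r_f}$ in number and, by Lemma~\ref{lem-Nf3}, they all fall into a single weight class.

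Assembling this along the four cases of the theorem: for even $r_f$ with $f(x_\alpha)\ne 0$ one uses $I_1,\dots,I_4$ of Lemma~\ref{lem-ea} to fill Table~\ref{tab-even1}; for even $r_f$ with $f(x_\alpha)=0$ one uses Lemmas~\ref{lem-Nf1} and \ref{lem-Nf2} to fill Table~\ref{tab-even0}; for odd $r_f$ with $f(x_\alpha)\ne 0$ one uses $J_1,\dots,J_6$ of Lemma~\ref{lem-ea} to fill Table~\ref{tab-odd1}; and for odd $r_f$ with $f(x_\alpha)=0$ one uses Lemmas~\ref{lem-fato}, \ref{lem-fsqto}, \ref{lem-Nf1} and \ref{lem-Nf2} to fill Table~\ref{tab-odd0} --- in each subcase remembering to subtract the single class $\beta=\bzero$ (which always satisfies $f(x_\beta)=\tr(\alpha x_\beta)=0$) when it was included in a $y$‑count. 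I expect the genuine obstacle to be exactly this bookkeeping: the conditions $f(x_\beta)=0$, $\tr(\alpha x_\beta)=0$ and $E=0$ overlap, and the cleanest way through is to re‑express the $E$‑conditions via the reduced quadratic form $g$ of Lemma~\ref{lem-esq} (of rank $r_f-1$ and sign $\varepsilon_f\bar\eta(-f(x_\alpha))$), using the identity between $E$, $f(y)$ and $g(y)$ noted in the proof of Lemma~\ref{lem-ea}, so that every needed count is one of the already‑proved formulas. As a final consistency check I would verify $\sum_w A_w=p^m-1$ and $\sum_w wA_w=(p-1)p^{m-1}n$ in each of the four tables.
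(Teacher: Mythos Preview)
Your proposal is correct and follows essentially the same route as the paper: compute $n=N_f(\alpha)-1$ from Lemma~\ref{lem-Nf1}, obtain $\wt(\bc_\beta)=N_f(\alpha)-N_{f,\beta}(\alpha)$ via Lemma~\ref{lem-Nf3}, deduce dimension $m$ from positivity, and then count $\beta$'s by pulling back through the $p^{m-r_f}$-to-one surjection $y\mapsto-2L_f(y)$ onto $\textmd{Im}(L_f)$, invoking Lemma~\ref{lem-ea} (the $I_i$'s and $J_i$'s) and Lemmas~\ref{lem-Nf2}, \ref{lem-fsqto} in exactly the four subcases the paper names. Your explicit justification that $f(x_\beta)$, $\tr(\alpha x_\beta)$ and $E$ are constant on fibres (via $f|_{\textmd{Ker}(L_f)}=0$ and the self-adjointness in \eqref{eqn-flf}) and the closing power-moment sanity check are welcome additions that the paper leaves implicit.
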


\begin{table}[ht]
\begin{center}
\caption{The weight distribution of $\C_D$ of Theorem \ref{thm-main1} when $r_f$ is even and $f(x_\alpha)\neq 0$}\label{tab-even1}
\begin{tabular}{|c|c|} \hline
Weight $w$ &  Multiplicity $A_w$  \\ \hline
$0$          &  $1$ \\ \hline
$(p-1)p^{m-2}$  & $p^{r_f-2}+\frac{p-1}{2}p^{r_f-1}(1-\varepsilon_f\cdot p (p^*)^{-\frac{r_f}{2}})-1$\\ \hline
$p^{m-2}(p-1-\varepsilon_f\cdot p (p^*)^{-\frac{r_f}{2}})$  & $(p-1)p^{r_f-2}(2+\varepsilon_f\cdot p (p^*)^{-\frac{r_f}{2}})$ \\ \hline
$p^{m-2}(p-1-2\varepsilon_f\cdot p (p^*)^{-\frac{r_f}{2}})$  & $\frac{(p-1)(p-2)}{2}p^{r_f-2}(1+\varepsilon_f\cdot p (p^*)^{-\frac{r_f}{2}})$ \\ \hline
$p^{m-2}(p-1)(1-\varepsilon_f(p^*)^{-\frac{r_f}{2}})$   & $p^{m}-p^{r_f}$\\ \hline
\end{tabular}
\end{center}
\end{table}

\begin{table}[ht]
\begin{center}
\caption{The weight distribution of $\C_D$ of Theorem \ref{thm-main1} when $r_f$ is even and $f(x_\alpha)=0$}\label{tab-even0}
\begin{tabular}{|c|c|} \hline
Weight $w$ &  Multiplicity $A_w$  \\ \hline
$0$          &  $1$ \\ \hline
$(p-1)p^{m-2}$  & $p^{r_f-2}(1+\varepsilon_f\cdot (p-1)p (p^*)^{-\frac{r_f}{2}})-1$\\ \hline
$(p-1)p^{m-2}(1+\varepsilon_f\cdot p (p^*)^{-\frac{r_f}{2}})$  & $(p-1)p^{r_f-2}(2-\varepsilon_f\cdot p (p^*)^{-\frac{r_f}{2}})$ \\ \hline
$p^{m-2}(p-1+\varepsilon_f\cdot (p-2)p (p^*)^{-\frac{r_f}{2}})$  & $(p-1)^2p^{r_f-2}$ \\ \hline
$p^{m-2}(p-1)(1+\varepsilon_f\cdot (p-1) (p^*)^{-\frac{r_f}{2}})$   & $p^{m}-p^{r_f}$\\ \hline
\end{tabular}
\end{center}
\end{table}

\begin{table}[ht]
\begin{center}
\caption{The weight distribution of $\C_D$ of Theorem \ref{thm-main1} when $r_f$ is odd and $f(x_\alpha)\neq 0$}\label{tab-odd1}
\begin{tabular}{|c|c|} \hline
Weight $w$ &  Multiplicity $A_w$  \\ \hline
$0$          &  $1$ \\ \hline
$(p-1)p^{m-2}$  & $p^{r_f-2}(1+\varepsilon_f\bar{\eta}(-f(x_\alpha))(p-1)(p^*)^{-\frac{r_f-1}{2}})-1$\\ \hline
$p^{m-2}(p-1+\varepsilon_f \bar{\eta}(-f(x_\alpha)) p (p^*)^{-\frac{r_f-1}{2}})$  & $(p-1)p^{r_f-2}(1-\varepsilon_f\bar{\eta}(-f(x_\alpha))(p^*)^{-\frac{r_f-1}{2}})$ \\ \hline
$p^{m-2}(p-1+\varepsilon_f \bar{\eta}(-f(x_\alpha))(p^*)^{-\frac{r_f-1}{2}})$  & $(p-1)p^{r_f-2}(1+\varepsilon_f\bar{\eta}(-f(x_\alpha))(p-1)(p^*)^{-\frac{r_f-1}{2}})$ \\ \hline
$p^{m-2}(p-1+\varepsilon_f \bar{\eta}(-f(x_\alpha)) (p+1) (p^*)^{-\frac{r_f-1}{2}})$  & $\frac{(p-1)(p-2)}{2}p^{r_f-2}(1-\varepsilon_f\bar{\eta}(-f(x_\alpha))(p^*)^{-\frac{r_f-1}{2}})$ \\ \hline
$p^{m-2}(p-1)(1+\varepsilon_f \bar{\eta}(-f(x_\alpha)) (p^*)^{-\frac{r_f-1}{2}})$  & $\frac{(p-1)}{2}p^{r_f-1}(1-\varepsilon_f\bar{\eta}(-f(x_\alpha))(p^*)^{-\frac{r_f-1}{2}})+p^m-p^{r_f}$ \\ \hline
\end{tabular}
\end{center}
\end{table}

\begin{table}[ht]
\begin{center}
\caption{The weight distribution of $\C_D$ of Theorem \ref{thm-main1} when $r_f$ is odd and $f(x_\alpha)= 0$}\label{tab-odd0}
\begin{tabular}{|c|c|} \hline
Weight $w$ &  Multiplicity $A_w$  \\ \hline
$0$          &  $1$ \\ \hline
$p^{m-2}(p-1-\varepsilon_f(p-1)(p^*)^{-\frac{r_f-1}{2}})$  & $\frac{p-1}{2}p^{r_f-2}(1+\varepsilon_f\cdot p (p^*)^{-\frac{r_f-1}{2}})$\\ \hline
$p^{m-2}(p-1+\varepsilon_f(p-1)(p^*)^{-\frac{r_f-1}{2}})$  & $\frac{p-1}{2}p^{r_f-2}(1-\varepsilon_f\cdot p (p^*)^{-\frac{r_f-1}{2}})$\\ \hline
$p^{m-2}(p-1+\varepsilon_f(p^*)^{-\frac{r_f-1}{2}})$  & $\frac{(p-1)^2}{2}p^{r_f-2}$\\ \hline
$p^{m-2}(p-1-\varepsilon_f(p^*)^{-\frac{r_f-1}{2}})$  & $\frac{(p-1)^2}{2}p^{r_f-2}$\\ \hline
$p^{m-2}(p-1)$  & $p^{r_f-1}+p^m-p^{r_f}-1$ \\ \hline
\end{tabular}
\end{center}
\end{table}

\begin{proof}
By definition, the code length of $\C_D$ is $n = |D|=N_f(\alpha)-1$, where $N_f(\alpha)$ was defined by Lemma \ref{lem-Nf1}. This means that Equation (\ref{eqn-codelenth1}) follows.

For each $\beta \in \gf(q)^*$, define
\begin{eqnarray}\label{eqn-mcodeword}
\bc_{\beta}=(\tr(\beta d_1), \,\tr(\beta d_2), \,\ldots, \,\tr(\beta d_n)),
\end{eqnarray}
where $d_1, d_2, \ldots, d_n$ are the elements of $D$.
Then the Hamming weight $\wt(\bc_\beta)$ of $\bc_\beta$ is
\begin{eqnarray}\label{eqn-wcb}
\wt(\bc_\beta)=N_f(\alpha)-N_{f,\beta}(\alpha),
\end{eqnarray}
where $N_f(\alpha)$ and $N_{f,\beta}(\alpha)$ were defined before. By lemmas \ref{lem-Nf1} and \ref{lem-Nf3}, we have $\wt(\bc_\beta)=N_f(\alpha)-N_{f,\beta}(\alpha)>0$ for each $\beta \in \gf(q)^*$. This means that the code $\C_D$ has $q$ distinct codewords. Hence, the dimension of the code $\C_D$ is $m$.

Next we shall prove the the multiplicities $A_{w_i}$ of
codewords with weight $w_i$ in $\C_D$. Let us give the proofs of four cases, respectively.

\begin{enumerate}
  \item The case that $r_f$ is even and $f(x_\alpha) \neq 0$.

   We only give the proof for the case $-f(x_\alpha) \in \rm{SQ}$ and omit the proof for the case $-f(x_\alpha) \in \rm{NSQ}$ whose proof is similar. Suppose that $-f(x_\alpha) \in \rm{SQ}$. For each $\beta \in \gf(q)^*$, then from Lemmas \ref{lem-Nf1} and \ref{lem-Nf3} we obtain
   the Hamming weight
\begin{eqnarray*}\label{eqn-wb}
\wt(\bc_\beta)
&=&N_f(\alpha)-N_{f,\beta}(\alpha) \nonumber \\
&=&\left\{ \begin{array}{ll}
B_1                  & \mbox{ if $f(x_\beta)=\tr(\alpha x_\beta)=0$ or $f(x_\beta) \cdot E\in \rm{NSQ}$}, \\
B_1 -Bp              & \mbox{ if $f(x_\beta)=0$ and $\tr(\alpha x_\beta) \neq 0$ or $f(x_\beta)\neq 0$ and $E=0$,} \\
B_1 -2Bp            & \mbox{ if $f(x_\beta) \cdot E\in \rm{SQ}$,}\\
B_1 -B(p-1)               & \mbox{ if $\beta \not\in Im(L_f)$,}
\end{array}
\right.
\end{eqnarray*}
where $B_1={p}^{m-2}(p-1)$ and $B=p^{m-2}\varepsilon_f(p^*)^{-\frac{r_f}{2}}$. Define
$$
w_1=B_1,w_2=B_1 -Bp, w_3=B_1 -2Bp, w_4=B_1 -2B(p-1).
$$
Let
\begin{eqnarray*}
M_1 &=&\#\{\beta \in \gf(q): f(x_\beta)=\tr(\alpha x_\beta)=0\}+\#\{\beta \in \gf(q): f(x_\beta) \cdot E\in \rm{NSQ} \}\\
\end{eqnarray*}
Since the rank of linear mapping $\gf(q) \rightarrow \gf(q)$ ($x_\beta \mapsto -2L_f(x_\beta)$) is $r_f$, the dimension of their kernel is $m-r_f$. Therefore,
\begin{eqnarray*}
M_1 &=& p^{r_f-m} \#\{x\in \gf(q): f(x)=\tr(\alpha x)=0\}\\
& & +p^{r_f-m} \#\{x \in \gf(q): f(x) \cdot E\in \rm{NSQ} \}\\
&=& p^{r_f-2}+\frac{p-1}{2}p^{r_f-1}(1-\varepsilon_f\cdot p (p^*)^{-\frac{r_f}{2}}).~~~~~~(\mbox{By Lemma \ref{lem-ea}})
\end{eqnarray*}
Note that $f(0)=\tr(\alpha \cdot 0)=0$. Then
\begin{eqnarray*}
A_{w_1}&=&\#\{\beta \in \gf(q): \wt(\bc_\beta)=(p-1)p^{m-2}\}\\
&=& M_1-1\\
&=& p^{r_f-2}+\frac{p-1}{2}p^{r_f-1}(1-\varepsilon_f\cdot p (p^*)^{-\frac{r_f}{2}})-1.
\end{eqnarray*}
Similarly, the values of $A_{w_2}$, $A_{w_3}$ and $A_{w_4}$ can be calculated. This completes the proof of the weight distribution of Table \ref{tab-even1}.
  \item The case that $r_f$ is even and $f(x_\alpha) = 0$.

  The proof is similar to case 1) and we omit it here.
  The desired conclusion then follows from Lemmas \ref{lem-Nf1} and  \ref{lem-Nf2}.
  \item The case that $r_f$ is odd and $f(x_\alpha) \neq 0$.

  The proof is similar to case 1) and we omit it here.
  The desired conclusion then follows from Lemmas \ref{lem-Nf1} and  \ref{lem-ea}.
  \item The case that $r_f$ is odd and $f(x_\alpha) = 0$.

  The proof is similar to case 1) and we omit it here.
  The desired conclusion then follows from Lemmas \ref{lem-Nf1} and  \ref{lem-fsqto}.
\end{enumerate}
\end{proof}

As special cases of Theorem \ref{thm-main1}, the following two corollaries are direct consequences of Theorem \ref{thm-main1}.

\begin{corollary}\label{cor-1}
Let $u\in \gf(q)^*$, $f(x)=\tr(ux^2)$ and $\alpha \in \gf(q)^*$. Then
\begin{itemize}
  \item $\alpha \in Im(L_{f})$,
  \item $\varepsilon_f=(-1)^{m-1}\eta(-u)$,
  \item $r_{f}=m$,
  \item $L_{f}(x)=ux$,
  \item $x_\alpha=-\frac{\alpha}{2u}$ and $f(x_\alpha)=\frac{1}{4}\tr(\frac{\alpha^2}{u})$.
\end{itemize}
Thus, by using this function $f$, we can construct linear code $\C_D$ with the parameter and weight distribution given by Theorem \ref{thm-main1}.
\end{corollary}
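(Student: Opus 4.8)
The plan is to check the five listed properties one at a time by direct computation, since each reduces to a short manipulation; once they are established the concluding sentence follows immediately by invoking Theorem~\ref{thm-main1}.

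First I would note that $f(x)=\tr(ux^2)$ is the instance of (\ref{def-f}) with $a_0=u$ and $a_1=\cdots=a_{m-1}=0$. Substituting this into the definition of $L_f$ and reading the subscript $m-i$ modulo $m$, every term with $1\le i\le m-1$ vanishes (both $a_i$ and $a_{m-i}$ are zero there), while the $i=0$ term is $\tfrac12(a_0+a_0^{p^0})x=a_0x=ux$; hence $L_f(x)=ux$. Because $u\in\gf(q)^*$, the additive map $x\mapsto ux$ is a bijection of $\gf(q)$, so $\textup{Ker}(L_f)=\{0\}$ and $\textup{Im}(L_f)=\gf(q)$. Using the relation $p^{m-r_f}=\#\textup{Ker}(L_f)$ from Section~\ref{sec-pr} this forces $r_f=m$, and trivially $\alpha\in\gf(q)=\textup{Im}(L_f)$. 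Solving $L_f(x_\alpha)=-\tfrac{\alpha}{2}$ gives $x_\alpha=-\tfrac{\alpha}{2u}$, and then $f(x_\alpha)=\tr\!\big(u\cdot\tfrac{\alpha^2}{4u^2}\big)=\tfrac14\tr\!\big(\tfrac{\alpha^2}{u}\big)$.

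The one point that needs a little care is the sign $\varepsilon_f$. Since $r_f=m$, the result (\uppercase\expandafter{\romannumeral1}) of Lemma~\ref{lem-esf} gives $\sum_{x\in\gf(q)}\zeta_p^{f(x)}=\varepsilon_f\,p^m(p^*)^{-m/2}$. Independently, $\sum_{x\in\gf(q)}\zeta_p^{\tr(ux^2)}=\sum_{x\in\gf(q)}\chi_1(ux^2)=\eta(u)\,G(\eta,\chi_1)$ by Lemma~\ref{lem-32A2} applied with $a_2=u$, $a_1=a_0=0$. Plugging in the value of $G(\eta,\chi_1)$ from Lemma~\ref{lem-32A1} and using the identity $\sqrt{-1}^{(\frac{p-1}{2})^2}=\sqrt{p^*}/\sqrt{p}$ (read off from the second formula in that same lemma) one simplifies $G(\eta,\chi_1)=(-1)^{m-1}(p^*)^{m/2}$, so comparison yields $\varepsilon_f=(-1)^{m-1}\eta(u)\,(p^*/p)^{m}=(-1)^{m-1}\eta(u)(-1)^{m(p-1)/2}$. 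Finally $(-1)^{m(p-1)/2}=\eta(-1)$: from $p^m-1=(p-1)(1+p+\cdots+p^{m-1})$ and $1+p+\cdots+p^{m-1}\equiv m\pmod 2$ one gets $\tfrac{p^m-1}{2}\equiv\tfrac{m(p-1)}{2}\pmod 2$, while $\eta(-1)=(-1)^{(p^m-1)/2}$. Therefore $\varepsilon_f=(-1)^{m-1}\eta(-1)\eta(u)=(-1)^{m-1}\eta(-u)$.

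With $\alpha\in\textup{Im}(L_f)$ verified and $x_\alpha$ exhibited satisfying $L_f(x_\alpha)=-\tfrac{\alpha}{2}$, all hypotheses of Theorem~\ref{thm-main1} hold, so $\C_D$ has length $n$, dimension $m$, and the weight distribution stated there, after substituting $r_f=m$, $\varepsilon_f=(-1)^{m-1}\eta(-u)$ and $f(x_\alpha)=\tfrac14\tr(\alpha^2/u)$. The only genuine obstacle is the bookkeeping in the sign step — in particular the simplification of the Gauss-sum root-of-unity factor and the parity identity $(-1)^{m(p-1)/2}=\eta(-1)$ — while the remaining four bullets are essentially one-line substitutions.
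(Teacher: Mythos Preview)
Your proof is correct. The paper itself states this corollary as a ``direct consequence'' of Theorem~\ref{thm-main1} and gives no proof, so your direct verification of each bullet by substitution into the definitions of $L_f$, $r_f$, $x_\alpha$ and into Lemmas~\ref{lem-esf}, \ref{lem-32A2}, \ref{lem-32A1} is exactly the intended (and only natural) elaboration; in particular your Gauss-sum computation of $\varepsilon_f$ and the parity identity $(-1)^{m(p-1)/2}=\eta(-1)$ are right.
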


\begin{corollary}\label{cor-3}
Let $v\in \gf(q)^*$, $\tr(v^2)\neq 0$, $f(x)=\tr(x^2)-\frac{1}{\tr(v^2)}(\tr(vx))^2$, $\alpha \in \gf(q)^*$ and $\tr(v \alpha)= 0$. Then
\begin{itemize}
  \item $\alpha \in Im(L_{f})$,
  \item $\varepsilon_{f}=(-1)^{m-1}\eta(-1)\bar{\eta}(-\tr(v^2))$,
  \item $r_{f}=m-1$,
  \item $L_{f}(x)=x-\frac{v}{\tr(v^2)}\tr(vx)$.
\end{itemize}
Thus, we can construct linear code $\C_D$ with the parameter and weight distribution given by Theorem \ref{thm-main1}.
\end{corollary}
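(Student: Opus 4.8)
The plan is to verify the four bulleted identities; once they are in hand, $f$, $\alpha$ and $x_\alpha$ satisfy all the hypotheses of Theorem \ref{thm-main1}, and the parameters and weight distribution of $\C_D$ are simply read off from that theorem (the applicable table being selected by the parity of $r_f=m-1$ together with the value of $f(x_\alpha)$).

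I would begin with the formula for $L_f$, since the other three items follow from it at once. Expanding $f(x)=\tr(x^2)-\tfrac{1}{\tr(v^2)}(\tr(vx))^2$ gives
\[
f(x+y)-f(x)-f(y)=2\tr(xy)-\frac{2}{\tr(v^2)}\tr(vx)\tr(vy).
\]
Since $\tr(vy)/\tr(v^2)\in\gf(p)$, this scalar may be absorbed into the trace, so the right-hand side equals $2\tr\big(x(y-\tfrac{v}{\tr(v^2)}\tr(vy))\big)$. Comparing with (\ref{eqn-flf}) and invoking nondegeneracy of the trace form identifies $L_f(x)=x-\tfrac{v}{\tr(v^2)}\tr(vx)$; in particular $f$ is a homogeneous quadratic function of the shape (\ref{def-f}) (explicitly, $(\tr(vx))^2=\sum_{i=0}^{m-1}\tr(v^{p^i+1}x^{p^i+1})$). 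From the formula, $L_f(x)=0$ precisely when $x\in\gf(p)v$, so $\#\textup{Ker}(L_f)=p$ and hence $r_f=\textup{rank}\,L_f=m-1$; and since $\tr(v\alpha)=0$ we get $L_f(\alpha)=\alpha$, whence $\alpha\in\textup{Im}(L_f)$, with $x_\alpha=-\alpha/2$ a legitimate preimage of $-\alpha/2$ (for which $f(x_\alpha)=\tfrac{1}{4}\tr(\alpha^2)$).

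For the sign I would avoid a fresh computation by recognizing $f$ as the function denoted $g$ in Lemma \ref{lem-esq}, built from the quadratic form $\tr(x^2)$. By Corollary \ref{cor-1} with $u=1$, $\tr(x^2)$ has rank $m$, sign $(-1)^{m-1}\eta(-1)$ and identity linearized polynomial; taking $v$ in the role of $\alpha$ there gives $x_v=-v/2$ and value $\tr((-v/2)^2)=\tfrac{1}{4}\tr(v^2)\neq0$, which is exactly where the hypothesis $\tr(v^2)\neq0$ is used, so Lemma \ref{lem-esq} applies and its function $g$ is $\tr(x^2)-\tfrac{(\tr(vx))^2}{4f(x_v)}=\tr(x^2)-\tfrac{(\tr(vx))^2}{\tr(v^2)}$, i.e.\ exactly $f(x)$. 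The rank and sign bookkeeping carried out in the proof of Lemma \ref{lem-esq} then gives $r_f=m-1$ (reconfirming the above) and
\[
\varepsilon_f=(-1)^{m-1}\eta(-1)\,\bar{\eta}\big(-\tfrac{1}{4}\tr(v^2)\big)=(-1)^{m-1}\eta(-1)\,\bar{\eta}(-\tr(v^2)),
\]
the last equality because $\tfrac{1}{4}$ is a nonzero square in $\gf(p)$. Should one prefer to bypass Lemma \ref{lem-esq}, the same value of $\varepsilon_f$ falls out of a direct evaluation of $\sum_{x\in\gf(q)}\zeta_p^{f(x)}$: linearize $(\tr(vx))^2$ with an auxiliary variable $w\in\gf(p)$, complete the square twice via Lemmas \ref{lem-32A1} and \ref{lem-32A2} to obtain $\sum_{x\in\gf(q)}\zeta_p^{f(x)}=\bar{\eta}(-\tr(v^2))\sqrt{p^*}\,G(\eta,\chi_1)$, and extract the sign from the first part of Lemma \ref{lem-esf}.

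The only delicate point is the sign: one must keep careful track of $\eta$ versus $\bar{\eta}$, of the half-integer powers of $p^*$, and of the harmless factor $\tfrac{1}{4}$. Channeling the argument through Lemma \ref{lem-esq} reduces all of this to the single remark $\bar{\eta}(1/4)=1$; the remaining three identities, and hence the applicability of Theorem \ref{thm-main1}, are immediate from the formula for $L_f$.
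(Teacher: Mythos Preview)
Your proof is correct. The paper itself gives no argument for this corollary beyond declaring it a ``direct consequence'' of Theorem \ref{thm-main1}, so there is nothing substantive to compare against; your write-up supplies precisely the verifications the paper omits. The route you chose for $\varepsilon_f$---recognizing the present $f$ as the auxiliary form $g$ of Lemma \ref{lem-esq} built from $\tr(x^2)$ with $v$ playing the role of $\alpha$---is an efficient use of the paper's own machinery and spares you the direct Gauss-sum evaluation you sketch as a fallback.
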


As special cases of Corollary \ref{cor-1}, we give the following four examples.
\begin{example}
Let $(u,p,m)=(1,3,4)$, $\alpha \in \gf(q)^*$ and $\tr(\alpha)\neq 0$. Then the code $\C_D$ has parameters $[29, 4, 18]$ and weight enumerator
$1+44z^{18}+30z^{21}+6z^{24}$, which is verified by the Magma program. 
\end{example}

\begin{example}
Let $(u,p,m)=(1,3,6)$ and $\alpha \in \gf(p)^*$. Then the code $\C_D$ has parameters $[260, 6, 162]$ and weight enumerator
$1+98z^{162}+324z^{171}+306z^{180}$, which is verified by the Magma program. 
\end{example}

\begin{example}
Let $(u,p,m)=(1,3,5)$ and $\alpha \in \gf(p)^*$. Then the code $\C_D$ has parameters $[71, 5, 42]$ and weight enumerator
$1+30z^{42}+60z^{45}+90z^{48}+42z^{51}+20z^{54}$, which is verified by the Magma program. 
\end{example}

\begin{example}
Let $(u,p,m)=(1,3,3)$ and $\alpha \in \gf(p)^*$. Then the code $\C_D$ has parameters $[8, 3, 4]$ and weight enumerator
$1+6z^{4}+6z^{5}+8z^{6}+6z^{7}$, which is verified by the Magma program. 
\end{example}

As special cases of Corollary \ref{cor-3}, we give the following four examples.

\begin{example}
Let $(v,p,m)=(1,3,5)$, $g$ be a generator of $\gf(q)^*$ with the minimal polynomial $x^5+2x+1$. Let $\alpha=g^2$. Then the code $\C_D$ has parameters $[89, 5,54 ]$ and weight enumerator
$1+44z^{54}+162z^{60}+30z^{63}+6z^{72}$, which is verified by the Magma program. 
\end{example}

\begin{example}
Let $(v,p,m)=(1,3,5)$, $g$ be a generator of $\gf(q)^*$ with the minimal polynomial $x^5+2x+1$. Let $\alpha=g^3$. Then the code $\C_D$ has parameters $[62, 5,62 ]$ and weight enumerator
$1+42z^{36}+162z^{42}+36z^{45}+2z^{54}$, which is verified by the Magma program. 
\end{example}

\begin{example}
Let $(v,p,m)=(1,3,4)$, $g$ be a generator of $\gf(q)^*$ with the minimal polynomial $x^4+2x^3+2$. Let $\alpha=g^5$. Then the code $\C_D$ has parameters $[17, 4, 6 ]$ and weight enumerator
$1+4z^{4}+8z^{9}+66z^{12}+2z^{15}$, which is verified by the Magma program. 
\end{example}

\begin{example}
Let $(v,p,m)=(1,3,4)$, $g$ be a generator of $\gf(q)^*$ with the minimal polynomial $x^4+2x^3+2$. Let$\alpha=g^{13}$. Then the code $\C_D$ has parameters $[26, 4, 12]$ and weight enumerator
$1+6z^{12}+6z^{15}+62z^{18}+6z^{21}$, which is verified by the Magma program. 
\end{example}

\begin{theorem}\label{thm-main2}
Let $f$ be a homogeneous quadratic function with the rank $r_f$ and the sign $\varepsilon_f$. let $\alpha \not\in Im(L_f)$ and $D$ be defined in (\ref{eqn-defsetD}). Then the set $\C_D$ of (\ref{eqn-maincode}) is a $[n,m]$ linear code over $\gf(p)$ with the weight distribution in Tables \ref{tab-even} and \ref{tab-odd}, where $n=p^{m-1}-1$.
\end{theorem}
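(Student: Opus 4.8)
The plan is to mirror the proof of Theorem \ref{thm-main1} verbatim, but now substituting the ``$\alpha\notin\textrm{Im}(L_f)$'' branches of Lemmas \ref{lem-Nf1} and \ref{lem-Nf3}, together with Lemmas \ref{lem-D} and \ref{lem-gt} for the combinatorial bookkeeping. First, the length: by definition $n=|D|=N_f(\alpha)-1$, and part (\uppercase\expandafter{\romannumeral1}) of Lemma \ref{lem-Nf1} gives $N_f(\alpha)=p^{m-1}$, hence $n=p^{m-1}-1$. For $\beta\in\gf(q)^*$, with $\bc_\beta$ as in (\ref{eqn-mcodeword}), one has $\wt(\bc_\beta)=N_f(\alpha)-N_{f,\beta}(\alpha)=p^{m-1}-N_{f,\beta}(\alpha)$; reading $N_{f,\beta}(\alpha)$ off from part (\uppercase\expandafter{\romannumeral2}) of Lemma \ref{lem-Nf3} shows each such weight is strictly positive (the stray terms have absolute value at most $p^{m-2}$ times a power of $p^{-1/2}<1$), so $\beta\mapsto\bc_\beta$ is injective, $\C_D$ has $q$ codewords, and $\dim\C_D=m$.

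\textbf{The list of weights.} Subtracting the values of $N_{f,\beta}(\alpha)$ in Lemma \ref{lem-Nf3}(\uppercase\expandafter{\romannumeral2}) from $p^{m-1}$ yields, when $r_f$ is even, the three nonzero weights $p^{m-2}(p-1)$ (the ``otherwise'' case $\beta\notin\bigcup_{z\in\gf(p)^*}(z\alpha+\textrm{Im}(L_f))$), $p^{m-2}(p-1+\varepsilon_f(p^*)^{-r_f/2})$ (when $\beta$ lies in that union and $f(x')\neq0$), and $p^{m-2}(p-1)(1-\varepsilon_f(p^*)^{-r_f/2})$ (when $\beta$ lies in the union and $f(x')=0$); when $r_f$ is odd, one gets $p^{m-2}(p-1)$ (both the ``otherwise'' case and the subcase $f(x')=0$) and $p^{m-2}(p-1)-\varepsilon_f\bar{\eta}(-f(x'))p^{m-2}(p^*)^{-(r_f-1)/2}$ with $\bar{\eta}(-f(x'))=\pm1$. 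These are exactly the weight columns of Tables \ref{tab-even} and \ref{tab-odd}.

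\textbf{The multiplicities.} Two structural facts drive the count. First, since $\alpha\notin\textrm{Im}(L_f)$, the $p-1$ cosets $z\alpha+\textrm{Im}(L_f)$ ($z\in\gf(p)^*$) are pairwise disjoint and miss $0$, so $\#\bigcup_{z\in\gf(p)^*}(z\alpha+\textrm{Im}(L_f))=(p-1)p^{r_f}$ and the ``otherwise'' weight is attained by $p^m-1-(p-1)p^{r_f}$ values of $\beta\in\gf(q)^*$. Second, writing $\beta=z'\alpha+b$ with $z'\in\gf(p)^*$ and $b\in\textrm{Im}(L_f)$, Lemma \ref{lem-D} gives $z_0=1/z'$ and $\alpha-\beta z_0=-b/z'$, so the assignment $\beta\mapsto b'=\alpha-\beta z_0\in\textrm{Im}(L_f)$ is $(p-1)$-to-$1$; moreover $f$ is constant on each fibre $L_f^{-1}(-b'/2)$, of size $p^{m-r_f}$, because $f$ vanishes on $Z_f=\textrm{Ker}(L_f)$ (since $p$ is odd) and $\tr(L_f(x')k)=\tr(x'L_f(k))=0$ for $k\in\textrm{Ker}(L_f)$ by (\ref{eqn-flf}). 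Composing these, for any $t\in\gf(p)$,
$$\#\bigl\{\beta\in{\textstyle\bigcup_{z\in\gf(p)^*}}(z\alpha+\textrm{Im}(L_f)):f(x')=t\bigr\}=(p-1)\,p^{r_f-m}\,\#\{x\in\gf(q):f(x)=t\},$$
and the right-hand counts come from Lemma \ref{lem-gt} for $t\neq0$, with $\#\{x:f(x)=0\}$ obtained by subtraction using $\sum_{t\in\gf(p)^*}\bar{\eta}(-t)=0$. Distributing the $\beta$'s according to the weight they produce — in the odd case amalgamating the class $f(x')=0$ with the ``otherwise'' class, as both give $p^{m-2}(p-1)$; in the odd case further splitting $f(x')\neq0$ by $\bar{\eta}(-f(x'))=\pm1$ — gives the multiplicity columns of Tables \ref{tab-even} and \ref{tab-odd}, and the entries are checked to sum to $q$.

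\textbf{Main obstacle.} The analytic content is already packaged in Lemmas \ref{lem-Nf1}, \ref{lem-Nf3} and \ref{lem-gt}; the delicate part is purely combinatorial: verifying that $\beta\mapsto b'$ is genuinely $(p-1)$-to-$1$, that $f(x')$ is a well-defined function of $\beta$ (constancy of $f$ on $L_f$-fibres), and correctly merging the several $\beta$-classes that collapse onto the same weight, in particular reconciling the ``otherwise'' value $p^{m-2}(p-1)$ with the subcase $f(x')=0$ in the odd-rank tables.
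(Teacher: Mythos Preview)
Your proposal is correct and takes essentially the same approach as the paper: the paper's own proof is a one-line ``similar to case 1) of Theorem \ref{thm-main1}'' pointing to Lemma \ref{lem-trs4} (which is the engine behind Lemma \ref{lem-Nf3}(\uppercase\expandafter{\romannumeral2})) together with Lemma \ref{lem-gt} for the odd-rank multiplicities, and you have simply written out those details --- the length via Lemma \ref{lem-Nf1}(\uppercase\expandafter{\romannumeral1}), the weights via Lemma \ref{lem-Nf3}(\uppercase\expandafter{\romannumeral2}), and the multiplicities via the $(p-1)$-to-$1$ coset structure of Lemma \ref{lem-D} composed with the $p^{r_f-m}$-to-$1$ fibres of $-2L_f$ and the counts of Lemma \ref{lem-gt}. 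One tiny caveat: your positivity justification (``power of $p^{-1/2}<1$'') is not literally valid at $r_f=1$ odd, where $(p^*)^{-(r_f-1)/2}=1$, but the conclusion still holds there since $p-1\pm1>0$ for $p\geq 3$.
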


\begin{table}[ht]
\begin{center}
\caption{The weight distribution of $\C_D$ of Theorem \ref{thm-main2} when $r_f$ is even}\label{tab-even}
\begin{tabular}{|c|c|} \hline
Weight $w$ &  Multiplicity $A_w$  \\ \hline
$0$          &  $1$ \\ \hline
$p^{m-2}(p-1)(1-\varepsilon_f(p^*)^{-\frac{r_f}{2}})$  & $(p-1)p^{r_f-1}(1+\varepsilon_f(p-1)(p^*)^{-\frac{r_f}{2}})$\\ \hline
$p^{m-2}(p-1)+\varepsilon_fp^{m-2}(p^*)^{-\frac{r_f}{2}})$  & $(p-1)^2p^{r_f-1}(1-\varepsilon_f(p^*)^{-\frac{r_f}{2}})$\\ \hline

$p^{m-2}(p-1)$  & $p^m-(p-1)p^{r_f}-1$ \\ \hline
\end{tabular}
\end{center}
\end{table}

\begin{table}[ht]
\begin{center}
\caption{The weight distribution of $\C_D$ of Theorem \ref{thm-main2} when $r_f$ is odd}\label{tab-odd}
\begin{tabular}{|c|c|} \hline
Weight $w$ &  Multiplicity $A_w$  \\ \hline
$0$          &  $1$ \\ \hline
$p^{m-2}(p-1-\varepsilon_f(p^*)^{-\frac{r_f-1}{2}})$  & $\frac{(p-1)^2}{2}p^{r_f-1}(1+\varepsilon_f(p^*)^{-\frac{r_f-1}{2}})$\\ \hline
$p^{m-2}(p-1+\varepsilon_f(p^*)^{-\frac{r_f-1}{2}})$  & $\frac{(p-1)^2}{2}p^{r_f-1}(1-\varepsilon_f(p^*)^{-\frac{r_f-1}{2}})$\\ \hline
$p^{m-2}(p-1)$  & $p^m-(p-1)^{2}p^{r_f-1}-1$ \\ \hline
\end{tabular}
\end{center}
\end{table}

\begin{proof} The proof is similar to case 1) of Theorem \ref{thm-main1} and we omit it here. We point out that:
\begin{itemize}
  \item when $r_f$ is even, the desired conclusion then follows from Lemma \ref{lem-trs4},
  \item when $r_f$ is odd, the desired conclusion then follows from Lemmas \ref{lem-trs4} and \ref{lem-gt}.
\end{itemize}
This completes the proof.
\end{proof}

As special cases of Theorem \ref{thm-main2}, the following corollary is a direct consequence of Theorem \ref{thm-main2}.
\begin{corollary}\label{cor-2}
Let $v\in \gf(q)^*$, $\tr(v^2)\neq 0$, $f(x)=\tr(x^2)-\frac{1}{\tr(v^2)}(\tr(vx))^2$, $\alpha \in \gf(q)^*$ and $\tr(v \alpha)\neq 0$. Then
\begin{itemize}
  \item $\alpha \not\in Im(L_{f})$,
  \item $\varepsilon_{f}=(-1)^{m-1}\eta(-1)\bar{\eta}(-\tr(v^2))$,
  \item $r_{f}=m-1$,
  \item $L_{f}(x)=x-\frac{v}{\tr(v^2)}\tr(vx)$.
\end{itemize}
Thus, we can construct linear code $\C_D$ with the parameter and weight distribution given by Theorem \ref{thm-main2}.
\end{corollary}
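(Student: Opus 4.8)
The plan is to verify the four displayed properties of $f$ and then quote Theorem \ref{thm-main2}; this runs in exact parallel with what is needed for Corollary \ref{cor-3}, the only change being that here $\tr(v\alpha)\neq 0$ forces $\alpha\notin \mathrm{Im}(L_f)$ instead of $\alpha\in \mathrm{Im}(L_f)$, so that Theorem \ref{thm-main2} (rather than Theorem \ref{thm-main1}) applies.

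First I would identify $L_f$. Since $(\tr(vx))^2=\sum_{i=0}^{m-1}\tr(v^{p^i+1}x^{p^i+1})$, the function $f(x)=\tr(x^2)-\frac{1}{\tr(v^2)}(\tr(vx))^2$ is a homogeneous quadratic function of the shape (\ref{def-f}), so Equation (\ref{eqn-flf}) applies. A direct expansion gives
$$
f(x+y)-f(x)-f(y)=2\tr(xy)-\frac{2}{\tr(v^2)}\tr(vx)\tr(vy),
$$
and, writing $\tr(vx)\tr(vy)=\tr\!\big(\tr(vx)\,v\,y\big)$ (legitimate because $\tr(vx)\in\gf(p)$), the right-hand side equals $2\tr\!\big((x-\frac{\tr(vx)}{\tr(v^2)}v)\,y\big)$. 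By nondegeneracy of the trace form, $L_f(x)=x-\frac{v}{\tr(v^2)}\tr(vx)$, which is the fourth bullet. Plugging $x=cv$ shows $\gf(p)v\subseteq \mathrm{Ker}(L_f)$, while $L_f(x)=0$ forces $x=\frac{\tr(vx)}{\tr(v^2)}v\in\gf(p)v$; hence $\mathrm{Ker}(L_f)=\gf(p)v$ has $p$ elements, and $\#\mathrm{Ker}(L_f)=p^{m-r_f}$ gives $r_f=m-1$. Moreover $\tr(vL_f(x))=\tr(vx)-\tr(vx)=0$, so $\mathrm{Im}(L_f)\subseteq\{y:\tr(vy)=0\}$, and both spaces have dimension $m-1$ (since $v\neq 0$), so they coincide. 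Therefore $\alpha\notin \mathrm{Im}(L_f)$ is equivalent to $\tr(v\alpha)\neq 0$, which is the hypothesis.

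The only substantial point is the sign $\varepsilon_f$, which I would read off from the first identity of Lemma \ref{lem-esf} by evaluating $\sum_{x\in\gf(q)}\zeta_p^{f(x)}$ in closed form. Put $c=\tr(v^2)\in\gf(p)^*$. Using Lemma \ref{lem-32A2} over $\gf(p)$ to write $\zeta_p^{-t^2/c}=\frac{1}{\bar{\eta}(c)\sqrt{p^*}}\sum_{s\in\gf(p)}\zeta_p^{cs^2+2st}$ with $t=\tr(vx)$, interchanging the sums, and completing the square in $\gf(q)$ via $\tr(x^2)+2s\tr(vx)=\tr((x+sv)^2)-cs^2$, the inner $\gf(q)$-sum collapses to $G(\eta,\chi_1)$ (Lemma \ref{lem-32A2}) while the $s$-sum becomes $\sum_s \zeta_p^{0}=p$, yielding $\sum_{x\in\gf(q)}\zeta_p^{f(x)}=\frac{p\,G(\eta,\chi_1)}{\bar{\eta}(c)\sqrt{p^*}}$. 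Substituting the values of $G(\eta,\chi_1)$ and $\sqrt{p^*}=G(\bar{\eta},\bar{\chi}_1)$ from Lemma \ref{lem-32A1}, comparing with $\varepsilon_f p^m(p^*)^{-(m-1)/2}$, and simplifying the powers of $p^*$ (using $\eta(-1)=\bar{\eta}(-1)^m$ and that $m-1$ and $m+1$ have the same parity) gives $\varepsilon_f=(-1)^{m-1}\eta(-1)\bar{\eta}(-\tr(v^2))$. With $r_f=m-1$, $\varepsilon_f$ as just computed, and $\alpha\notin \mathrm{Im}(L_f)$, Theorem \ref{thm-main2} now supplies the length $n=p^{m-1}-1$ and the weight distributions of Tables \ref{tab-even} and \ref{tab-odd}.

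I expect the sign computation to be the main obstacle: it demands careful bookkeeping of the normalizing factors $(p^*)^{\pm r_f/2}$, of the branch convention $\sqrt{p^*}=G(\bar{\eta},\bar{\chi}_1)$, and of the parity identities linking $\eta(-1)$ over $\gf(q)$ to $\bar{\eta}(-1)$ over $\gf(p)$, so that the ``$-$'' inside $\bar{\eta}(-\tr(v^2))$ emerges correctly. Everything else (the form of $L_f$, $\mathrm{Ker}(L_f)$, $r_f$, $\mathrm{Im}(L_f)$, and the membership of $\alpha$) is routine linear algebra over $\gf(q)$.
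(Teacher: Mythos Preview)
Your proposal is correct and in fact supplies all the details the paper omits: the paper states Corollary~\ref{cor-2} as a direct consequence of Theorem~\ref{thm-main2} without verifying the four bullet points, whereas you compute $L_f$, its kernel and image, the rank $r_f=m-1$, and the sign $\varepsilon_f$ explicitly. Your method for $\varepsilon_f$---linearising $\zeta_p^{-(\tr(vx))^2/c}$ via Lemma~\ref{lem-32A2} over $\gf(p)$, completing the square in $\gf(q)$, and then matching against Lemma~\ref{lem-esf}(\uppercase\expandafter{\romannumeral1})---is exactly the natural route and the bookkeeping with $\sqrt{p^*}$, $G(\eta,\chi_1)$, and $\eta(-1)=\bar{\eta}(-1)^m$ checks out.
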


As special cases of Corollary \ref{cor-2}, we give the following two examples.

\begin{example}
Let $(v,p,m)=(1,3,5)$ and $\alpha \in \gf(p)^*$. Then the code $\C_D$ has parameters $[26, 5, 15]$ and weight enumerator
$1+24z^{15}+44z^{18}+12z^{21}$, which is verified by the Magma program. 
\end{example}

\begin{example}
Let $(v,p,m)=(1,3,4)$ and $\alpha \in \gf(p)^*$. Then the code $\C_D$ has parameters $[80, 4, 51]$ and weight enumerator
$1+120z^{51}+80z^{54}+42z^{60}$, which is verified by the Magma program. 
\end{example}

\section{Concluding remarks}\label{sec-concluding}
In this paper, inspired by the works of \cite{DingDing2} and \cite{ZLFH2015}, inhomogeneous quadratic functions were used to construct
linear codes with few nonzero weights over finite fields. It was shown that the presented linear
codes have at most five nonzero weights. The weight distributions of the codes were
also determined and some of constructed linear codes are optimal in the sense that their
parameters meet certain bound on linear codes. The work of this paper extended the main
results in \cite{DingDing2} and \cite{ZLFH2015}.


\end{document}